\newtheorem{theorem}{Theorem}[section]
\newtheorem{lemma}[theorem]{Lemma}
\newtheorem{observation}{Observation}
\newtheorem{claim}[theorem]{Claim}
\newtheorem{definition}{Definition}[section]
\newcommand{\remove}[1]{}
\newcommand{\eps}{\varepsilon}
\DeclareMathOperator{\poly}{poly}
\renewcommand{\paragraph}{%
  \@startsection{paragraph}{4}%
  {\z@}{2.5ex \@plus 1ex \@minus .2ex}{-1em}%
  {\normalfont\normalsize\bfseries}%
}
\title{Improved Distributed Approximations for Minimum-Weight Two-Edge-Connected Spanning Subgraph}
\author{Michal Dory \\
\small Technion, Israel \\
\small smichald@cs.technion.ac.il 
\and
Mohsen Ghaffari \\
\small ETH Zurich, Switzerland \\
\small ghaffari@inf.ethz.ch
}
\begin{document}

\begin{titlepage}

\date{}
\maketitle


\begin{abstract} 
The minimum-weight $2$-edge-connected spanning subgraph (2-ECSS) problem is a natural generalization of the well-studied minimum-weight spanning tree (MST) problem, and it has received considerable attention in the area of network design. The latter problem asks for a minimum-weight subgraph with an edge connectivity of $1$ between each pair of vertices while the former strengthens this edge-connectivity requirement to $2$. Despite this resemblance, the 2-ECSS problem is considerably more complex than MST. While MST admits a linear-time centralized exact algorithm, 2-ECSS is NP-hard and the best known centralized approximation algorithm for it (that runs in polynomial time) gives a $2$-approximation.

In this paper, we give a \emph{deterministic} distributed algorithm with round complexity of $\widetilde{O}(D+\sqrt{n})$ that computes a ($5+\eps$)-approximation of 2-ECSS, for any constant $\eps>0$. Up to logarithmic factors, this complexity matches the $\widetilde{\Omega}(D+\sqrt{n})$ lower bound that can be derived from the technique of Das Sarma et al. [STOC'11], as shown by Censor-Hillel and Dory [OPODIS'17]. Our result is the first distributed \emph{constant} approximation for 2-ECSS in the nearly optimal time and it improves on a recent randomized algorithm of Dory [PODC'18], which achieved an $O(\log n)$-approximation in $\widetilde{O}(D+\sqrt{n})$ rounds. Our algorithm also gives a deterministic $\widetilde{O}(D+\sqrt{n})$-round $(4+\eps)$-approximation for the closely related weighted tree augmentation problem (TAP), a similar result was known before only for the \emph{unweighted} variant of the problem.

We also present an alternative algorithm for $O(\log n)$-approximation, whose round complexity is linear in the \emph{low-congestion shortcut} parameter of the network---following a framework introduced by Ghaffari and Haeupler [SODA'16]. This algorithm has round complexity  $\widetilde{O}(D+\sqrt{n})$ in worst-case networks but it provably runs much faster in many well-behaved graph families of interest. For instance, it runs in $\widetilde{O}(D)$ time in planar networks and those with bounded genus, bounded path-width or bounded tree-width.

\end{abstract}

\thispagestyle{empty}
\end{titlepage}

\section{Introduction and Related Work}
There is a large body of deep and beautiful work on developing centralized (approximation) algorithms for various \emph{network design} problems. These are typically in the format of finding a minimum-weight subgraph $H$ of a graph/network $G$, given certain weights for the network elements (edges or vertices), such that $H$ satisfies some desired properties, e.g., certain connectivity requirements. The motivation is simple and practical: suppose that using each communication link of $G$ has some cost (e.g., in monetary terms, to be paid to the provider). Which links should we use to minimize our costs while still meeting all of our connectivity requirements? Unlike the extensive amount of attention and progress on centralized algorithms for network design problems, developments on the distributed side have been much more scarce. Even for many of the basic problems of the area, we do not have satisfactory distributed algorithms. There are only few exceptions.

One prominent exception is the case of the minimum-weight spanning tree (MST) problem. This is the minimum-weight subgraph that ensures that all vertices are connected to each other. MST has been one of the central problems in the developments of distributed graph algorithms, starting from the pioneering work of Gallager et al.\cite{gallager1983distributed}. By now we have a relatively good understanding of the complexity of this problem: It can be solved in $O(D+\sqrt{n}\log^* n)$  rounds \cite{kutten1998fast}, in synchronous message-passing rounds with $O(\log n)$-bit messages (i.e., the $\mathsf{CONGEST}$ model\cite{peleg:2000}), where $D$ denotes the network diameter and $n$ denotes the number of vertices. Moreover, this bound almost matches an $\Omega(D+\sqrt{\frac{n}{\log n}})$ lower bound, developed in the line of work of \cite{peleg2000near, elkin2006unconditional, das2011distributed}.

As soon as we take just one step beyond the basic MST problem, in the realm of network design questions, we soon reach a problem for which we do not have a satisfactory algorithmic answer, yet. This is the minimum-weight $2$-edge-connected spanning subgraph (2-ECSS) problem, where instead of asking all vertices to be connected, we ask them to be $2$-edge-connected. That is, we want that the subgraph has some minimal resilience to edge failures and even if one of the links breaks, still all vertices are connected (without us having to recompute the subgraph). This is again a natural and practical property to desire in networking, given the prevalence and frequency of link failures. Although there has been recent progress on the distributed complexity of this problem, the answers are still not satisfactory, and that is where the contribution of this paper comes in. But before delving into our contribution, let us take a closer look at what is known about 2-ECSS.

The problem of computing a minimum 2-ECSS is known to be NP-hard, by a reduction from the Hamiltonian cycle problem~\cite{cheriyan1998improved}. This already shows that 2-ECSS is much harder than the closely related problem of MST, which admits a linear-time (randomized) centralized exact algorithm~\cite{karger1995randomized}. The best known centralized (polynomial-time) approximation algorithms for 2-ECSS give a $2$-approximation~\cite{khuller1994biconnectivity, jain2001factor}. These algorithms also give a $2$-approximation for the more general minimum-weight $k$-edge-connected spanning subgraph ($k$-ECSS).
If all edge-weights are equal (i.e., in the unweighted case), then the best known approximation factors become $4/3$ and $1+\frac{1}{2k}+O(\frac{1}{k^2})$, respectively, for 2-ECSS \cite{DBLP:journals/combinatorica/SeboV14} and $k$-ECSS \cite{gabow2012iterated}.

\subsection{Our Contributions, in the Context of the Distributed State-of-the-Art}
\subsubsection{First Contribution --- Better Approximation}

\paragraph{Distributed state-of-the-art:} Censor-Hillel and Dory\cite{DBLP:conf/opodis/Censor-HillelD17} provided an algorithm that runs in $O(h_{MST}+\sqrt{n}\log^* n)$ rounds and gives a $3$-approximation of the min-weight $2$-ECSS. Here, $h_{MST}$ denotes the height of the minimum spanning tree; note that this can be as large as $\Theta(n)$, even in networks with small diameter $D$. Thus, this round complexity can be much higher than the lower bound of $\widetilde{\Omega}(D+\sqrt{n})$\cite{DBLP:conf/opodis/Censor-HillelD17, das2011distributed}. Dory\cite{DBLP:conf/podc/Dory18} provided a randomized algorithm that runs in the near-optimal run time of $\widetilde{O}(D+\sqrt{n})$ and provides an $O(\log n)$-approximation.

Two comments are in order. First, for general $k$, one can obtain an $O(\log k)$-approximation but in round complexity of $O(knD)$~\cite{shadeh2009distributed, goemans1994improved}, and an $O(k\log n)$-approximation in $O(k(D\log^3 n+n))$ rounds~\cite{DBLP:conf/podc/Dory18}. Both of these complexities are at least linear in the network size and well-above our target round complexity. 
Second, let us discuss the special case where all the weights are equal, i.e., when the problem is unweighted. This case is much easier and one can obtain a constant approximation locally in $(k\log^{1+o(1)} n)$ rounds \cite{kECSS}. 
Additionally, if we are interested in better approximations for \emph{unweighted} 2-ECSS, there is an $O(D)$-round $2$-approximation \cite{DBLP:conf/opodis/Censor-HillelD17} and an $O(n)$-round $3/2$-approximation \cite{krumke2007distributed}. 

\paragraph{Our First Result:} Above we mentioned a $3$-approximation with suboptimal time~\cite{DBLP:conf/opodis/Censor-HillelD17} and an optimal time with suboptimal approximation of $O(\log n)$~\cite{DBLP:conf/podc/Dory18}.
Our first main result is to provide an algorithm that gets the best of these two worlds (using a different technique), namely near-optimal time and constant approximation. More concretely, we show the following result:

\begin{restatable}{theorem}{twoECSS}
\label{2ECSS-thm}
There is a deterministic $(5+\epsilon)$-approximation algorithm for weighted 2-ECSS in the $\mathsf{CONGEST}$ model that takes $O((D+\sqrt{n})\frac{\log^2{n}}{\epsilon})$ rounds.
\end{restatable}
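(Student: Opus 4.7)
The plan is the classical reduction of 2-ECSS to the Tree Augmentation Problem (TAP): (i) compute an MST $T$ of the weighted graph $G$; (ii) compute a $(4+\epsilon)$-approximately minimum-weight set $F$ of non-tree ``links'' such that the fundamental cycles of $F$ with respect to $T$ jointly cover every tree edge; (iii) output $H := T \cup F$. A straightforward check shows $H$ has no bridge and is therefore 2-edge-connected: every non-tree link lies in a cycle with its $T$-path, and every tree edge is covered by some link.

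\textbf{Approximation analysis.} Let $H^*$ be any optimal 2-ECSS. First, $\mathrm{weight}(T) \le \mathrm{weight}(H^*)$, since $H^*$ is spanning and connected, hence contains some spanning tree of $G$, and $T$ is the lightest such. Second, $\mathrm{OPT}_{\mathrm{TAP}}(T) \le \mathrm{weight}(H^*)$: the edge set $E(H^*) \setminus E(T)$ is a feasible TAP augmentation of $T$, because $T \cup (E(H^*)\setminus E(T)) \supseteq E(H^*)$ is 2-edge-connected, and its total weight is at most $\mathrm{weight}(H^*)$. Combining, $\mathrm{weight}(H) \le \mathrm{weight}(T) + (4+\epsilon)\,\mathrm{OPT}_{\mathrm{TAP}}(T) \le (5+\epsilon)\,\mathrm{weight}(H^*)$.

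\textbf{Distributed complexity and main obstacle.} Step (i) runs in $O(D+\sqrt{n}\log^* n)$ rounds via the Kutten--Peleg MST algorithm, and step (iii) is a single convergecast/broadcast along $T$. The technical heart, and main obstacle, is step (ii): a deterministic $(4+\epsilon)$-approximation of \emph{weighted} TAP in $\widetilde O(D+\sqrt n)$ rounds --- exactly the companion TAP result highlighted in the abstract. Weighted TAP is harder than the unweighted variant because one cannot simply argue by structural "long link vs.\ short link" case analysis; the natural LP has sets (links) whose ground elements are entire tree paths, which in $T$ can have depth $\Theta(n)$. A natural route I would take is: (a) discretize link weights into $O(\log n/\epsilon)$ geometric buckets, losing only a $(1+\epsilon)$ factor, so that within each bucket TAP reduces to an essentially unweighted tree-cover instance; (b) solve each bucket by an LP- or primal-dual rounding whose dual updates can be pipelined along $T$ using the Garay--Kutten--Peleg MST-pipeline machinery; and (c) flatten the long tree paths by a heavy-path decomposition so that only $O(\log n)$ layers of sequential dependencies remain, each handled by an $\widetilde O(D+\sqrt n)$-round aggregation over the MST. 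Each bucket needs to commit to a feasible cover whose cost telescopes across layers to $O(1)$ times optimum, giving the $(4+\epsilon)$ factor; multiplying the $O(\log n/\epsilon)$ bucket passes by the $O(\log n)$ decomposition layers yields the claimed $O((D+\sqrt n)\log^2 n/\epsilon)$ bound.
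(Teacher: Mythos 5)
Your outer reduction is exactly the paper's: compute an MST $T$, invoke a $(4+\epsilon)$-approximation for weighted TAP on $T$, and use $w(T)\le w(H^*)$ together with $\mathrm{OPT}_{\mathrm{TAP}}(T)\le w(H^*)$ to conclude a $(5+\epsilon)$-approximation (this is the paper's Claim~\ref{claim_TAP_2ECSS}, with the Kutten--Peleg MST step). The problem is that the entire content of the theorem lives in the TAP subroutine, and your sketch of it has a genuine gap. Rounding link weights into $O(\log n/\epsilon)$ geometric buckets does not reduce weighted TAP to ``essentially unweighted'' instances with only a $(1+\epsilon)$ loss: an optimal augmentation mixes links from many buckets, no single bucket need be feasible on its own, and if you force each bucket (or each layer) to commit to its own feasible cover, the costs add up over the $O(\log n/\epsilon)$ passes rather than ``telescoping to $O(1)$ times optimum'' --- you have given no mechanism that prevents an $O(\log n)$-factor blowup, which is precisely the barrier that made the previous best sublinear-time bound an $O(\log n)$-approximation. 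A heavy-path decomposition plus pipelined aggregations addresses the round complexity of moving information along deep tree paths, but not the approximation guarantee, and determinism of the proposed ``LP- or primal-dual rounding'' is also left unargued.

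The paper's route is quite different and is where the constant factor actually comes from. It first replaces $G$ by a virtual graph $G'$ (via LCA labels) in which every non-tree link goes between an ancestor and a descendant, at the cost of a factor $2$ in the TAP approximation (Lemma~\ref{virtual_lemma}); it then decomposes $T$ into $O(\log n)$ layers of disjoint paths by repeatedly contracting leaf-to-junction paths (not a heavy-path decomposition), so that every tree edge has two \emph{petals} --- two links that cover it and all of its neighbours in its own and higher layers (the $\tau=2$ small-neighbourhood-cover structure). On top of this it runs a primal-dual algorithm: a forward phase that raises dual variables layer by layer and adds tight links, and a reverse-delete phase that sweeps the layers backwards, computes an MIS of uncovered tree edges in a virtual conflict graph, and keeps only petals of the MIS ``anchors''; a refinement (keeping only higher petals plus a cleaning phase) guarantees every tree edge with positive dual is covered at most twice, giving $(2+\epsilon)$ on $G'$, hence $(4+\epsilon)$ for TAP on $G$ and $(5+\epsilon)$ for 2-ECSS. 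The distributed implementation rests on a segment decomposition of $T$ supporting aggregate computations between tree edges and the (virtual) links covering them in $O(D+\sqrt n)$ rounds, and a bespoke deterministic MIS routine for the virtual conflict graphs; none of these ingredients, nor any substitute argument for the constant factor, appears in your proposal.
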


\remove{
\begin{theorem}
There is a deterministic $(9+\epsilon)$-approximation algorithm for weighted 2-ECSS in the CONGEST model that takes $O((D+\sqrt{n})\frac{\log^2{n}}{\epsilon})$ rounds.
\end{theorem}  

\begin{theorem} There is a deterministic distributed algorithm that in $O((D+\sqrt{n})\log^2{n})$ rounds, computes a $5+\eps$ approximation of weighted $2$-ECSS, for any constant $\eps>0$.
\end{theorem}
}

We think that achieving a constant approximation in the near-optimal time is the main strength of this result. Another advantage of our approach is that is gives a \emph{deterministic} algorithm, while the $O(\log n)$-approximation algorithm of \cite{DBLP:conf/podc/Dory18} was randomized.
Our algorithm also gives an $O((D+\sqrt{n})\log^2{n})$-round  $(4+\epsilon)$-approximation for the closely related weighted tree augmentation problem (TAP), a similar result was known before only for the \emph{unweighed} variant of the problem \cite{DBLP:conf/opodis/Censor-HillelD17}.


\subsubsection{Second Contribution --- Faster Algorithm in Well-Behaved Networks}
Our second contribution is in a different direction. Instead of setting the $\widetilde{\Omega}(D+\sqrt{n})$ lower-bound as our target round complexity for all network topologies, we would like to have an algorithm that runs no slower than this in the worst-case, but also ideally much faster in less pathological network instances, e.g., in graph families which are more relevant in practical networking settings.

In this regard, we build on a framework set forth by Ghaffari and Haeupler\cite{ghaffari2016shortcuts}, known as \emph{low-congestion shortcuts}. This framework was first used to give $\widetilde{O}(D)$ round distributed algorithms for problems such as minimum spanning tree and minimum cut, in planar networks. Notice that this can be much faster than the $\widetilde{O}(D+\sqrt{n})$ bound. By now, it is known that the same concept can be applied to a wider range of graph families. We mention just a sampling here: in networks with bounded genus, bounded path-width, and bounded tree-width, one can also obtain an $\widetilde{O}(D)$ round MST algorithm\cite{ghaffari2016shortcuts, haeupler2016low}. In the much more general family of excluded minor graphs, one can obtain an $\widetilde{O}(D^2)$ round MST algorithm\cite{haeupler2018minor}. Finally, in Erdos-Renyi random graphs $G_{n, p}$ with $p=\Omega(\log n/n)$, one can obtain an MST algorithm with complexity $2^{O(\sqrt{\log n})}$\cite{ghaffari2017distributed, ghaffari2018new}. 

These different round complexities come from the quality of the \emph{shortcuts} that can be obtained in the given graph family, and the time needed to construct them. Given a graph $G=(V, E)$, we say it admits $\alpha$-congestion $\beta$-dilation shortcuts with construction time $\gamma$, if the following is satisfied: for any partitioning of $V$ into vertex-disjoint parts $V_1, V_2, \dots, V_N$, each of which induced a connected subgraph $G[V_i]$, in $\gamma$ rounds, we can construct subgraphs $H_1$, \dots $H_N$ such that (A) $\forall i\in\{1,2, \dots, N\}$, the subgraph $G[V_i]+H_i$ has diameter at most $\beta$ and moreover, (B) each edge $e\in E$ appears in at most $\alpha$ many of the graphs $G[V_i]+H_i$. It was shown by Ghaffari and Haeupler\cite{ghaffari2016shortcuts} that MST and $(1+\eps)$-approximation of minimum cut can be solved in $\widetilde{O}(\alpha+\beta+\gamma)$ time. It is notable that this \emph{shortcut complexity} $\mathsf{SC}(G)=\alpha+\beta+\gamma$ is in the worst-case $O(D+\sqrt{n})$, as shown by Ghaffari and Haeupler\cite{ghaffari2016shortcuts}, but it can be much better: for instance, as listed above, we have $\mathsf{SC}(G) = \widetilde{O}(D)$ for any network that is planar, bounded genus, bounded path-width, or bounded tree-width\cite{ghaffari2016shortcuts, haeupler2016low}. Moreover, we have $\mathsf{SC}(G) = \widetilde{O}(D^2)$ in any excluded minor graph\cite{haeupler2018minor}, and $\mathsf{SC}(G) = 2^{O(\sqrt{\log n})} \ll n^{o(1)}$ in any Erdos-Renyi graphs above the connectivity threshold\cite{ghaffari2017distributed, ghaffari2018new}. Hence, to summarize, the framework of shortcuts and the parameter $\mathsf{SC}(G)=\alpha+\beta+\gamma$ provide a convenient way to develop algorithms that are in the worst-case as bad as the lower bound $\widetilde{\Omega}(D+\sqrt{n})$, but can also run much faster, in a provable way, in many graph families of interest.

\paragraph{Our Second Result:} Our second contribution is a distributed algorithm with such a graceful complexity for $O(\log n)$-approximation of the minimum weight 2-ECSS problem.
 
\begin{theorem}\label{thm:shortcut} There is a randomized distributed algorithm that computes an $O(\log n)$-approximation of weighted $2$-ECSS in the $\mathsf{CONGEST}$ model and runs in $\widetilde{O}(\mathsf{SC}+D)=\widetilde{O}(\alpha+\beta+\gamma+D)$ rounds, in any network $G$ with diameter $D$ where for any partition we can build an $\alpha$-congestion $\beta$-dilation shortcut, in $\gamma$ rounds. 
\end{theorem}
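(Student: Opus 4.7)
The plan is to follow the classical reduction from weighted 2-ECSS to the tree augmentation problem (TAP) and then to solve the resulting TAP instance by a distributed set-cover procedure whose inner loop is implemented on top of the shortcut framework. First, compute an MST $T$ of $G$ using the Ghaffari--Haeupler shortcut-based MST algorithm in $\widetilde{O}(\alpha+\beta+\gamma+D)$ rounds, and root it by a BFS tree in $O(D)$ further rounds. For any 2-ECSS $H^*$ of $G$, the non-tree edges of $H^*\cup T$ augment $T$ to a 2-edge-connected graph and have total weight at most $w(H^*)$, so $\mathrm{TAP}(T)\le w(H^*)$; since also $w(T)\le w(H^*)$ by MST optimality, an $\alpha$-approximation for weighted TAP on $T$ yields a $(1+\alpha)$-approximation for 2-ECSS via the output $T\cup F$. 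It therefore suffices to compute an $O(\log n)$-approximation for weighted TAP on $T$.

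Phrase TAP as a set-cover instance: every non-tree edge $e=(u,v)$ is a link of cost $w_e$ that covers the tree edges on its unique tree path $P_{uv}\subseteq T$, and we must cover every tree edge. I will run an $O(\log n)$-phase distributed primal--dual / greedy loop: in each phase, every link computes its residual cost-to-newly-covered-weight ratio on its tree path, links whose ratio is within a constant factor of the current minimum are selected (using a deterministic rounding step in the style of Kuhn--Moscibroda--Wattenhofer or a random threshold applied inside $O(\log n)$ geometric cost buckets), and the covered status of every affected tree edge is updated. Standard set-cover analysis then yields an $O(\log n)$-approximation for TAP, hence for 2-ECSS.

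Each phase reduces to two tree primitives on $T$: (a) every candidate link computes a sum of edge-values along its tree path in $T$; and (b) every tree edge informs the links whose tree path contains it of its new coverage status. I will realize both in $\widetilde{O}(\alpha+\beta+\gamma+D)$ rounds via a heavy-path decomposition of $T$. Because each heavy path is a connected subgraph of $G$, the heavy-path partition of $V$ is a legal input to the shortcut construction, producing an $\alpha$-congestion $\beta$-dilation shortcut inside each heavy path in $\gamma$ rounds and enabling fast aggregation within the path. Any tree path splits into $O(\log n)$ heavy-path segments (broken at $O(\log n)$ lowest common ancestors), so each primitive costs $\widetilde{O}(\alpha+\beta+\gamma+D)$; inter-path glue and LCA lookups are maintained over the global BFS tree in $\widetilde{O}(D)$ rounds. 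Multiplying by $O(\log n)$ phases and a $\polylog n$ scheduling factor gives the claimed bound.

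The main obstacle is the CONGEST bottleneck: many links may share a single tree edge and attempt simultaneous aggregations along overlapping heavy paths in the same round, which naively overflows the $O(\log n)$-bit bandwidth. I plan to control this by processing links in $O(\log n)$ geometric weight classes, so that only one class is active per phase, and by scheduling per-link tree-path queries through the shortcut edges with Leighton--Maggs--Rao random-delay scheduling, which absorbs the residual congestion in a $\polylog n$ factor. Showing that the per-round edge congestion stays $\widetilde{O}(\alpha)$ while the primal--dual accounting still certifies an $O(\log n)$ approximation is the crux of the argument; once it is in place, both the round-complexity bound and the approximation guarantee follow from standard analyses.
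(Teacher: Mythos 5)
Your outer loop (MST $+$ TAP as set cover $+$ parallelized greedy over cost-effectiveness thresholds) matches the paper's outline, and the reduction giving $(1+\alpha)$-approximation for 2-ECSS is fine. The genuine gap is in the distributed implementation of the two per-iteration primitives, which is exactly where the paper's contribution lies and which you explicitly defer as ``the crux.'' Your plan is to have each link query the $O(\log n)$ heavy-path segments of its tree path, and to have each tree edge inform all links whose path contains it, with congestion absorbed by random-delay scheduling. This fails: a single tree edge can lie on the tree paths of $\Theta(n)$ (or more) non-tree edges, so the total communication demand crossing an edge or a heavy-path segment is governed by the number of link--path incidences, which can be polynomial in $n$; no scheduling trick can compress that demand into $\widetilde{O}(\alpha)$ congestion, and shortcuts built on the heavy-path partition only support computing a single aggregate per part (or broadcasting one value per part), not answering many distinct sub-interval queries issued by far-away links. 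Restricting to one geometric weight class per phase does not help, since all links in one class can still share one tree edge.

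The paper avoids per-pair communication entirely by rearranging which side computes what. For ``is this tree edge covered by the current set $S$,'' it assigns each edge of $S$ a random $\Theta(\log n)$-bit identifier, lets each vertex XOR the identifiers of its incident $S$-edges, and runs a single descendants'-sum (subtree XOR): edges of $S$ with both endpoints in the subtree cancel, so the tree edge is covered iff the subtree XOR is nonzero --- one aggregate computation in $\widetilde{O}(\mathsf{SC}(G))$ rounds regardless of how many links cover the edge. For ``how many uncovered tree edges does a link $(u,v)$ cover,'' it computes for every vertex $x$ the count $M_x$ of marked edges on its root path via a new Ancestors'-Sum primitive (itself nontrivial: it needs a recursion over the $O(\log n)$-level hierarchical fragment decomposition, since shortcuts do not directly give prefix sums), plus a heavy-light decomposition so that each vertex knows the $O(\log n)$ light edges on its root path tagged with their $M$-values; then $u$ and $v$, being adjacent in $G$, exchange these $O(\log n)$-size lists across their own edge, determine their LCA $w$ and $M_w$ locally, and output $M_u+M_v-2M_w$. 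These two devices --- root-path prefix aggregates computed once per vertex, and LCA/path-sum recovery by a purely local exchange between the two endpoints of the non-tree edge --- are the missing ideas that make each iteration run in $\widetilde{O}(\mathsf{SC}(G)+D)$ rounds; without them your implementation plan does not yield the claimed round complexity. (Your description of the greedy parallelization is also lighter than the paper's, which buckets by the number $d$ of candidate links covering a tree edge and samples at rate $1/(2d)$ over $O(\log^3 n)$ iterations to guarantee ``good'' sets, but that part could be repaired by citing the standard Berger--Rompel--Shor analysis; the implementation gap is the substantive one.)
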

Theorem~\ref{thm:shortcut} gives an algorithm that always runs in at most $\widetilde{O}(D+\sqrt{n})$ rounds, but it runs provably much faster in many graph families of interest. In particular, it implies that we can obtain an $O(\log n)$-approximation in $\widetilde{O}(D)$ rounds in any network that is planar, bounded genus, bounded path-width, or bounded tree-width, in $\widetilde{O}(D^2)$ rounds in any network that is excluded minor, and in $2^{O(\sqrt{\log n})}$ rounds in Erdos-Renyi random graphs $G_{n, p}$ with $p=\Omega(\log n/n).$





\subsection{Technical overview} \label{sec:tech_overview}

\paragraph{2-ECSS Approximation Boils Down to Tree Augmentation:} 
Let $T$ be the MST of $G=(V,E)$. To obtain the minimum weight $2$-ECSS, we would like to compute a minimum weight set of non-tree edges $A \subseteq E$ that \emph{covers} all the tree edges of $T$, according to the following definition: We say that a non-tree edge $e \in E$ covers a tree edge $t \in T$ if $(T \cup \{e\}) \setminus \{t\}$ is connected. We say that a set of non-tree edges $A$ covers the tree edge $t \in T$ if $A$ contains an edge that covers $t$. Note that $A$ covers all the tree edges iff $T \cup A$ is 2-edge-connected. The problem of computing the minimum cost set of edges $A$ such that $T \cup A$ is 2-edge-connected is known as the \emph{tree augmentation problem} (TAP) and it can be seen that any $\alpha$-approximation for it, along with the MST, $T$, gives an $\alpha+1$ approximation of 2-ECSS (See Section \ref{sec:pre} for a formal statement). This follows from the following. First, the MST is no more costly than the optimal 2-ECSS, as the latter contains a spanning tree in it. Second, the optimal 2-ECSS also covers all the edges of the MST and thus its size is an upper bound on the size of the optimal tree augmentation. Therefore, our objective boils down to providing an approximation for TAP.

\paragraph{Tree Augmentation Can be Seen as a Set Cover Problem:} TAP is a special case of the well-studied \emph{minimum-cost set cover} problem, where we want to choose a minimum-cost collection of \emph{sets} which cover all the \emph{elements}\cite{vazirani2013approximation}; in our case, this is choosing a minimum-cost collection of non-tree edges $A \subseteq E$ which cover all the tree edges. Of course, in the distributed setting, this special case comes with its own difficulties, because the sets (in our case, non-tree edges) are not directly connected to their elements (in our case, tree edges), which means we cannot resort to standard distributed set cover algorithms in a naive way. However, distributed and parallel algorithms for set cover can still form a basis for a distributed algorithm for TAP.
\paragraph{Previous Approaches and the Challenge:} 
The connection between TAP and set cover is used in \cite{DBLP:conf/podc/Dory18}, where it is shown that using a certain decomposition of the graph it is possible to solve TAP by simulating a distributed greedy algorithm for set cover. This results in a randomized $O((D+\sqrt{n})\log^2{n})$-round $O(\log{n})$-approximation for weighted TAP and weighted 2-ECSS. However, since the approximation ratio of the greedy algorithm for set cover is $O(\log{n})$, we cannot obtain a better approximation by simulating a general algorithm for set cover. Another option is to design a specific algorithm that exploits the properties of TAP. This is done in \cite{DBLP:conf/opodis/Censor-HillelD17}, where it is shown how to obtain a 2-approximation for weighted TAP and a 3-approximation for weighted 2-ECSS in $O(h)$ rounds, where $h$ is the height of the tree we augment. However, in the worst case, $h$ can be linear, and the algorithm from \cite{DBLP:conf/opodis/Censor-HillelD17} relies heavily on the fact that we need to scan the whole tree.

\paragraph{Our Approach:}
To overcome the above obstacles, we suggest a new algorithm for weighted TAP. As in \cite{DBLP:conf/podc/Dory18}, our algorithm simulates an algorithm for set cover, however, we simulate an algorithm that exploits the specific structure of the set cover problem we solve, which allows to get a \emph{constant} approximation. A crucial ingredient in the algorithm is a certain layering of the graph, that allows to implement the algorithm efficiently, and results in a \emph{deterministic} algorithm that is near-optimal both in terms of the round complexity and in the approximation ratio obtained. 
In more detail, our algorithm is inspired by a parallel algorithm for set cover instances with \emph{small neighbourhood covers} \cite{agarwal2018set}, and we next give the high-level intuition of this property in our case. As a first step, we start by replacing the input graph $G$ by a related virtual graph $G'$, having the additional property that all the non-tree edges in $G'$ are between ancestors to descendants. This is useful, because now the set cover problem we solve has an additional structure. Consider a path $P$ with a root $r$, from all the non-tree edges that cover a tree edge $t \in P$ we can choose only \emph{two} non-tree edges that cover exactly the same tree edges in $P$. This is done by taking the edges that cover $t$ and go to the highest ancestor, or to the lowest descendant. 
If our tree is not a path, we can decompose it in a certain way to $O(\log{n})$ layers, each one is composed of disjoint paths, and a similar property now holds for all the tree edges with respect to these paths. 
A generalization of this property called the \emph{small neighbourhood cover} property is studied in \cite{agarwal2018set}, where the authors show an efficient parallel algorithm that obtains a constant approximation for set cover instances having this property. Our general approach is to simulate the algorithm from \cite{agarwal2018set} on the virtual graph $G'$. A major difference in our setting is that the set cover graph is not given as an input, and in particular tree edges cannot communicate directly with non-tree edges that cover them. We next explain how we overcome this.

\paragraph{Distributed Implementation:} 
The algorithm from \cite{agarwal2018set} is a primal-dual algorithm that is composed of two main phases, the forward phase and the reverse-delete phase. For implementing the forward phase, we show that the only communication pattern between tree edges and non-tree edges in the algorithm is computation of \emph{aggregate} functions. To implement such computations, we bring to our construction a decomposition of the tree into segments used in \cite{DBLP:conf/podc/Dory18, ghaffari2016near}. While similar computations are done in \cite{DBLP:conf/podc/Dory18}, the main obstacle in our case is that our algorithm works with a virtual graph $G'$ and we explain how to extend the algorithm to this case. The reverse-delete phase creates new obstacles, as it requires computation of a maximal independent set (MIS) in a \emph{completely virtual} graph $G_i$.
In $G_i$, the vertices are some of the tree edges, and two tree edges are connected by an edge if there is a non-tree edge that covers them. We present a specific algorithm to solve this task, which relies heavily on the structure of the decomposition and the layering. The high-level idea is first to compute an MIS of an induced subgraph of $G_i$ that has only a carefully chosen set of $O(\sqrt{n})$ tree edges. We show that it is enough to let vertices learn about these edges and a constant number of non-tree edges that cover each of them, to simulate an MIS computation on the induced subgraph. Then, we let all the vertices simulate a local algorithm inside each segment, to add additional uncovered tree edges to the MIS. Finally, we prove that although we work on different segments at the same time, all the edges added to the MIS by different segments really form an MIS in the virtual graph $G_i$. Based on these ideas, we show that we can implement the whole algorithm in $\widetilde{O}(D+\sqrt{n})$ rounds.
This algorithm obtains a $(4+\epsilon)$-approximation for weighted TAP in the virtual graph $G'$. This translates to a $(8+\epsilon)$-approximation for weighted TAP in the original graph $G$, which results in a $(9+\epsilon)$-approximation for weighted 2-ECSS.

\paragraph{Improved Approximation:} To obtain an improved approximation of $(4+\epsilon)$ for weighted TAP and $(5+\epsilon)$ for weighted 2-ECSS, we change some elements in the algorithm. At a high-level, the analysis that gives a $(4+\epsilon)$-approximation for weighted TAP in the virtual graph $G'$ is based on showing that we cover certain edges in the tree at most $4$ times, by changing several elements in the algorithm and using a careful case analysis we can get an algorithm that covers these edges only 2 times, which improves the approximation of the whole algorithm.

\paragraph{A bird's eye view of our second algorithm:}
Our second algorithm which proves Theorem \ref{thm:shortcut} and obtains an $O(\log n)$-approximation in time proportional to the shortcut complexity of the network, appears in Section \ref{app:secondAlgo}. Here, we provide a very brief summary and comment on the novel components.

On the outer layer, this second algorithm also works by first computing an MST, $T$, and then finding an approximately minimum-weight augmentation for it. Also, we again view this tree augmentation as a set cover problem, where we would like to find a minimum-cost collection of non-tree edges, which cover all the tree edges of $T$. We follow a standard parallelization of the sequential greedy algorithm for set cover \cite{berger1989efficient}, which gives an $O(\log n)$-approximation. We gradually add more and more of the most cost-effective non-tree edges to the solution, while ensuring that they cover mostly different tree edges. Cost-effectiveness refers to how many tree edges we can cover, per unit of weight. We refer the reader to Section \ref{app:secondAlgo} for more detailed outline, and also to \cite{DBLP:conf/podc/Dory18}. %
The main novelty in our algorithm is in how to perform each iteration of the parallel set cover, in time proportional to the shortcut complexity of the graph. Concretely, we will need two subroutines: (A) Given a collection of non-tree edges, any tree edge should know whether it is covered by this collection or not. (B) Each non-tree edge should know the number of tree edges that it can cover. To solve these two subroutines, we build two algorithmic tools, in the framework of shortcuts\cite{ghaffari2016shortcuts}. A tool that allows each node to know the summation of the values of all of its ancestors, in an arbitrary given tree, and a tool that allows us to compute the \emph{heavy-light} decomposition of any tree. These would be trivial to do in time proportional to the height of the tree, but we want the complexity to be proportional to the shortcut complexity of the network. We present these tools in Section \ref{subsec:log-tools}. Section \ref{subsec:log-subroutines} explains how we use these tools to build the subroutines (A) and (B) mentioned above. The problems solved by our tools are very basic and frequently used computations about a tree. Thus, we hope that these tools will find applications beyond our work.

\remove{ 
We show that we can implement the algorithm in $\widetilde{O}(D+\sqrt{n})$ rounds and obtain a $(4+\epsilon)$-approximation for weighted TAP in the virtual graph $G'$. This gives a $(8+\epsilon)$-approximation for weighted TAP in the original graph $G$. By Claim \ref{claim_TAP_2ECSS}, this results in a $(9+\epsilon)$-approximation for weighted 2-ECSS. We next provide an overview of the algorithm. Full proofs and implementation details appear in Section \ref{sec:imp}. 
}

\subsection{Additional Related Work} 
\label{app:OtherRelatedWork}

\paragraph{Small Neighborhood Covers:} Our algorithm is inspired by a parallel algorithm for set cover problems with the small neighborhood cover property (SNC) \cite{agarwal2018set}. This class includes central problems such as vertex cover, interval cover and bag cover. 
The approximation obtained depends on a parameter $\tau$ that is related to the size of the neighborhood cover, and the time complexity depends on a certain layering of the problem. 

\textbf{The SNC property.} We next give an intuition for the SNC property, for a formal definition see \cite{agarwal2018set}. In a set cover problem, the goal is to cover a universe of elements by a minimum cost collection of sets. Two elements in the universe are \emph{neighbors} if there is a set that covers both of them, and the \emph{neigborhood} of an element includes all its neighbors. The $\tau$-SNC property says that each collection of sets that cover some element $u$ can be replaced by only $\tau$ sets that cover $u$ and all its neighbors from certain layers.
In our case, where we want to solve weighted TAP on the virtual graph $G'$, the parameter $\tau=2$, the elements are tree edges, and the sets correpond to non-tree edges. Two tree edges are neighbors if there is a non-tree edge that covers both of them. The SNC property is related to the following: if you look at a tree edge $t$ and a set of non-tree edges that cover it, you can replace them by only $\tau=2$ edges that cover $t$ and all its neighbours that are at the layer of $t$ or above it. 

\textbf{The approximation obtained.} In \cite{agarwal2018set}, they show general sequential and parallel algorithms that obtain $\tau$ and $(2+\epsilon)\tau^2$ approximations respectively for set cover problems with the $\tau$-SNC property, and leave as an open question whether an efficient parallel algorithm can obtain a $\tau$-approximation as the sequential one. Following the algorithm from \cite{agarwal2018set}, allows obtaining a $(2+\epsilon)\tau^2=(2+\epsilon)4$-approximation for weighted TAP in the virtual graph $G'$. Changing slightly the forward phase leads to a $(1+\epsilon)4$-approximation. Our improved approximation algorithm gives an approximation of $(\tau+\epsilon)=(2+\epsilon)$ for the same problem, almost matching the approximation obtained by the sequential algorithm in \cite{agarwal2018set}. Also, as we explain in Section \ref{sec:unweighted_approx}, for unweighted problems it seems that a simple variant of the algorithm from \cite{agarwal2018set} can actually give a $\tau$-approximation.\\

There are a number of other problems which have been studied in distributed graph algorithms, and have some resemblance to 2-ECSS. To the best of our understanding, none of these are too directly related to the problem that we tackle in this paper or to our method. 

\paragraph{Minimum Cut:} One problem is that, given a network $G$, we wish to compute or approximate the edge-connectivity of $G$ or, in the more general case of edge-weighted graphs, the minimum cut size of the graph. 
Ghaffari and Kuhn\cite{ghaffari2013cut} gave a $2+\eps$ approximation algorithm that runs in $\widetilde{O}(D+\sqrt{n})$ rounds, for any constant $\eps>0$. Nanongkai and Su \cite{nanongkai2014almost} gave a $1+\eps$ approximation algorithm that runs in $\widetilde{O}(D+\sqrt{n})$ rounds, and an exact algorithm for edge connectivity with round complexity of $\widetilde{O}((D+\sqrt{n})k^4)$, in graphs with edge-connectivity $k$. As mentioned above, a work of Ghafffari and Haeupler shows how to obtain a $(1+\eps)$ approximation in $\widetilde{O}(\mathsf{SC}(G))$ time, where $\mathsf{SC}(G)$ denotes the shortcut complexity of the graph. Finally, very recently, Daga et al.\cite{Daga2019cuts} have shown the first sublinear time algorithm for edge-connectivity for general $k$.

\paragraph{Fault-Tolerant MST:} Another related problem is that of computing a fault-tolerant minimum spanning tree, defined as follows: for each edge $e$ in the minimum spanning tree $T$ of $G$, we should know a minimum weight edge $e'$ such that $(T\cup \{e'\})\setminus \{e\}$ is a minimum spanning tree of $G\setminus \{e\}$. We note that despite the partial resemblance in the motivations, we do not see any real connection between this problem and 2-ECSS and an fault-tolerant MST may be far more heavy in weight than the minimum-weight $2$-edge connected subgraph; in the former the goal is to have an MST in the graph remaining after the failure, in the latter the goal is to have a minimum-weight graph that can survive a link failure. Ghaffari and Parter\cite{ghaffari2016near} showed an algorithm that computes fault-tolerant minimum spanning tree sturcutres, in the above sense, in $\widetilde{O}(D+\sqrt{n})$ rounds.

\vspace{-10pt}
\section{Preliminaries} \label{sec:pre}
\vspace{-5pt}
\paragraph{Model:} Throughout, we work with the standard $\mathsf{CONGEST}$ model of distributed computing\cite{peleg:2000}: The network is abstracted as an undirected graph $G=(V, E)$ and communication happens in synchronous rounds. Per round, each vertex can send $O(\log n)$ bits to each of its neighbors. In the distributed setting the input and output are local. At the beginning, each vertex knows the ids of its neighbours and the weights of the edges adjacent to it, and at the end, each vertex should know only a local part of the output. For example, when we compute a 2-edge-connected subgraph, each vertex should know which edges adjacent to it are taken to the solution.

\paragraph{Definitions:} We say that an undirected graph $G$ is \emph{2-edge-connected} if it remains connected after the removal of any single edge.
In the minimum weight 2-edge-connected spanning subgraph problem (2-ECSS), the goal is to find the minimum weight spanning subgraph that is 2-edge-connected. 
Given a tree $T$, we denote by $P_{u,v}$ the unique tree path between $u$ and $v$, and by $p(v)$ the parent of $v$ in the tree.
It is easy to see that $e=\{u,v\}$ covers a tree edge $t$ iff $t \in P_{u,v}.$ 

As explained in Section \ref{sec:tech_overview}, a natural approach for solving 2-ECSS is to start by computing a minimum spanning tree (MST), and then add to it a minimum weight set of edges that augments its connectivity to 2. This motivates the \emph{tree augmentation problem (TAP)}. In TAP, the input is a 2-edge-connected graph $G$ and a spanning tree $T$, and the goal is to find a minimum weight set of edges $A$ such that $T \cup A$ is 2-edge-connected. 
The following claim shows the connection between TAP and 2-ECSS.

\begin{claim} \label{claim_TAP_2ECSS}
An $\alpha$-approximation algorithm $Alg$ for TAP gives an $(\alpha+1)$-approximation algorithm for 2-ECSS, with complexity $O(T_n+D+\sqrt{n}\log^{*}{n})$ rounds, where $T_n$ is the complexity of $Alg$.
\end{claim}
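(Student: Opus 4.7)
The plan is to prove the claim via the natural reduction: compute an MST $T$ of $G$, invoke $Alg$ on the instance $(G,T)$ of TAP to obtain an augmenting edge set $A$, and output $T \cup A$. I will separately argue correctness (the output is a 2-ECSS), the approximation ratio, and the round complexity.

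For correctness, since $T \cup A$ is 2-edge-connected by the validity of the TAP output, and it spans $V$ because $T$ does, it is a feasible solution to 2-ECSS. For the approximation ratio, let $H^*$ denote an optimum 2-ECSS of $G$ and write $\mathrm{OPT} = w(H^*)$. I will establish two simple bounds. First, $w(T) \le \mathrm{OPT}$, because $H^*$ is connected and spanning, hence contains a spanning tree whose weight is at least $w(T)$. Second, $\mathrm{OPT}_{\mathrm{TAP}} \le \mathrm{OPT}$: the set $A^* := E(H^*) \setminus T$ is a valid augmentation of $T$ since $T \cup A^* \supseteq H^*$ is 2-edge-connected, and $w(A^*) \le w(H^*) = \mathrm{OPT}$. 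Combining these, the algorithm returns a solution of weight $w(T) + w(A) \le w(T) + \alpha \cdot \mathrm{OPT}_{\mathrm{TAP}} \le \mathrm{OPT} + \alpha \cdot \mathrm{OPT} = (\alpha+1)\,\mathrm{OPT}$, as desired.

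For the round complexity, I would compute the MST using the algorithm of Kutten and Peleg \cite{kutten1998fast} in $O(D + \sqrt{n}\log^* n)$ rounds; after this, every vertex knows which of its incident edges lie in $T$, which is precisely the local input needed to run $Alg$ on the TAP instance $(G,T)$. Running $Alg$ costs $T_n$ rounds and produces, at each vertex, knowledge of which non-tree edges incident to it belong to $A$. The vertex then reports the union of its tree edges and its edges in $A$ as its local portion of the 2-ECSS output. Summing the two phases gives $O(T_n + D + \sqrt{n}\log^* n)$ rounds.

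The only subtle point, and the step I would spell out most carefully, is the bound $\mathrm{OPT}_{\mathrm{TAP}} \le \mathrm{OPT}$: in particular one must verify that $A^* = E(H^*)\setminus T$ truly augments $T$ to a 2-edge-connected graph, which follows because removing tree edges from a 2-edge-connected supergraph of $H^*$ cannot affect edge-connectivity below that of $H^*$ itself, since $T \cup A^* \supseteq H^*$. Everything else is either a direct invocation of known distributed MST or a black-box use of $Alg$, so I do not expect any additional obstacles.
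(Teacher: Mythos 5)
Your proposal is correct and follows essentially the same argument as the paper: compute an MST via Kutten--Peleg, run the TAP algorithm on it, and bound the output by $w(T)+\alpha\cdot w(A^*) \leq (\alpha+1)\,\mathrm{OPT}$ using the two observations that the MST weighs no more than the optimal 2-ECSS and that the optimal 2-ECSS (restricted to non-tree edges) is itself a feasible augmentation of $T$. Your explicit verification that $E(H^*)\setminus T$ is a valid augmentation is just a slightly more careful spelling-out of the paper's remark that the optimum is ``a valid augmentation.''
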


\begin{proof}
Computing an MST, $T$, and then augmenting its connectivity using the approximation algorithm for TAP results in a solution of weight $w(T)+\alpha\cdot w(A^*)$, where $w(T)$ is the weight of the MST, and $w(A^*)$ is the weight of an optimal augmentation for $T$. An optimal solution $OPT$ for 2-ECSS is clearly of weight at least $w(T)$ and at least $w(A^*)$, since $OPT$ is a valid augmentation. Hence, the approximation obtained for 2-ECSS is at most $(\alpha+1).$ Computing an MST takes $O(D+\sqrt{n}\log^{*}{n})$ rounds using the algorithm of Kutten and Peleg \cite{kutten1998fast}, and computing the augmentation takes $O(T_n)$ rounds, which completes the proof.
\end{proof}

\paragraph{Roadmap:} Section \ref{sec:FirstAlgorithm-Overview} contains high-level overview of our first algorithm, full details and proofs appear in Section \ref{sec:imp}. Our Second algorithm appears in Section \ref{app:secondAlgo}.

\vspace{-5pt}
\section{Overview of our first algorithm}
\label{sec:FirstAlgorithm-Overview}

\vspace{-5pt}
As explained in Section \ref{sec:tech_overview}, our approach is to simulate a parallel algorithm for set cover to approximate weighted TAP on a related virtual graph $G'$, which translates to an approximate solution for weighted 2-ECSS in the input graph $G$.
We next provide a high-level overview of the algorithm. Full details and proofs appear in Section \ref{sec:imp}.

\remove{
As explained in Section \ref{sec:pre}, to approximate 2-ECSS we start by building an MST and then augment it to be 2-edge-connected using an algorithm for weighted TAP. Hence, our goal is to design an efficient algorithm for weighted TAP.
As TAP is a special case of set cover, a natural approach is to design an algorithm that simulates a distributed algorithm for set cover. This is done in \cite{DBLP:conf/podc/Dory18}, where it is shown that using a certain decomposition of the graph it is possible to solve TAP by simulating a distributed greedy algorithm for set cover. This results in a randomized $O((D+\sqrt{n})\log^2{n})$-round $O(\log{n})$-approximation for weighted TAP and weighted 2-ECSS. However, since the approximation ratio of the greedy algorithm for set cover is $O(\log{n})$, we cannot obtain a better approximation by simulating a general algorithm for set cover. Another option is to design a specific algorithm that exploits the properties of TAP. This is done in \cite{DBLP:conf/opodis/Censor-HillelD17}, where it is shown how to obtain a 2-approximation for weighted TAP and 3-approximation for weighted 2-ECSS in $O(h)$ rounds, where $h$ is the height of the tree we augment. However, in the worst case, $h$ can be linear, and the algorithm from \cite{DBLP:conf/opodis/Censor-HillelD17} relies heavily on the fact that we need to scan the whole tree.

To overcome the above obstacles, we suggest a new algorithm for weighted TAP. As in \cite{DBLP:conf/podc/Dory18}, our algorithm simulates an algorithm for set cover, however, we simulate an algorithm that exploits the specific structure of the set cover problem we solve, which allows to get a \emph{constant} approximation. In more detail, our algorithm is inspired by a parallel algorithm for set cover instances with small neighbourhood covers \cite{agarwal2018set}. A crucial ingredient in the algorithm is a certain layering of the graph, that allows to implement the algorithm efficiently, and results in a \emph{deterministic} algorithm that is near-optimal both in terms of the round complexity and the approximation ratio obtained. 

\begin{theorem}
There is a deterministic $(9+\epsilon)$-approximation algorithm for weighted 2-ECSS in the CONGEST model that takes $O((D+\sqrt{n})\frac{\log^2{n}}{\epsilon})$ rounds.
\end{theorem}   

To solve weighted TAP, we start by replacing the input graph $G$ by a related virtual graph $G'$, having the additional property that all the non-tree edges in $G'$ are between ancestors to descendants. Next, we exploit this property, and simulate a parallel algorithm for set cover instances with small neighbourhood covers on the graph $G'$. This algorithm is a primal-dual algorithm that processes the graph according to a certain layering we describe later. It has two phases, the forward phase and the reverse-delete phase. The implementation of the forward phase requires that non-tree edges would be able to compute aggregate functions of the tree edges they cover, and that tree edges would be able to compute aggregate functions of the edges that cover them. Similar computations are done in \cite{DBLP:conf/podc/Dory18} using a certain decomposition from \cite{DBLP:conf/podc/Dory18, ghaffari2016near}. The difference in our case is that we work with virtual edges, and we show how to extend the algorithm to this case. The reverse-delete phase requires a computation of an MIS in a virtual related graph that its vertices are some of the tree edges, and two tree edges are connected by an edge if there is a non-tree edge that covers them. We present a specific algorithm to solve this task, which exploits the structure of the decomposition and layering.

We show that we can implement the algorithm in $\widetilde{O}(D+\sqrt{n})$ rounds and obtain a $(4+\epsilon)$-approximation for weighted TAP in the virtual graph $G'$. This gives a $(8+\epsilon)$-approximation for weighted TAP in the original graph $G$. By Claim \ref{claim_TAP_2ECSS}, this results in a $(9+\epsilon)$-approximation for weighted 2-ECSS. We next provide an overview of the algorithm. Full proofs and implementation details appear in Section \ref{sec:imp}. 
}

\subsection{Working with virtual edges}


As a first step, we replace our input graph $G$ by a virtual graph $G'$ described in \cite{DBLP:conf/opodis/Censor-HillelD17, khuller1993approximation}, such that all the non-tree edges in the graph are between ancestors and descendants. 
This is done by replacing any non-tree edge $e=\{u,v\}$ that is not between an ancestor to its descendant, by the two edges $\{u,w\},\{v,w\}$ where $w$ is the lowest common ancestor (LCA) of $u$ and $v$ in the tree. As shown in \cite{DBLP:conf/opodis/Censor-HillelD17}, finding an $\alpha$-approximation for weighted TAP in the virtual graph $G'$, gives a $2\alpha$-approximation for weighted TAP in $G$, where the extra 2 factor comes from the duplication of edges.
The main ingredient that allows us to build the virtual graph and work with the virtual edges is LCA labels. 
As shown in \cite{DBLP:conf/opodis/Censor-HillelD17}, it is possible to compute LCA labels and build the virtual graph $G'$ in $O(D+\sqrt{n}\log^*{n})$ rounds. 
To simulate a distributed algorithm in $G'$, for each virtual edge we have a vertex, which is the descendant of the virtual edge, that simulates it during the algorithm. For simplicity of presentation, in the description of the algorithm we say that edges do some computations. When we say this, we mean that the vertices that simulate the edges do the computations. For more details about the virtual edges see Section \ref{sec:virtual}.

\remove{
As a first step, we replace our input graph $G$ by a virtual graph $G'$ described in \cite{DBLP:conf/opodis/Censor-HillelD17, khuller1993approximation}, such that all the non-tree edges in the graph are between ancestors and descendants. 
This is done by replacing any non-tree edge $e=\{u,v\}$ that is not between an ancestor to its descendant, by the two edges $\{u,w\},\{v,w\}$ where $w$ is the lowest common ancestor (LCA) of $u$ and $v$ in the tree. As the tree path $P_{u,v}$ is composed of the two paths $P_{u,w},P_{w,v}$, the new virtual edges cover together exactly the same edges covered by $e$. Working with the virtual graph instead of the original graph turns out to be very useful for the algorithm, as we explain later. As shown in \cite{DBLP:conf/opodis/Censor-HillelD17}, finding an $\alpha$-approximation for weighted TAP in the virtual graph $G'$, gives a $2\alpha$-approximation for weighted TAP in $G$, where the extra 2 factor comes from the duplication of edges.

To build the virtual graph, and work with the virtual edges during the algorithm we use LCA labels. These are labels of $O(\log{n})$ bits per vertex, such that given the labels of two vertices $u,v$ we can infer $LCA(u,v)$. Our whole algorithm works with the LCA labels of vertices instead of their ids, which turns out to be useful for many tasks. For example, the labels allow to check whether a vertex $u$ is an ancestor of $v$, whether a non-tree edge $e$ covers a tree edge $t$, and more. 
As shown in \cite{DBLP:conf/opodis/Censor-HillelD17}, it is possible to compute the LCA labels and build the virtual graph $G'$ in $O(D+\sqrt{n}\log^*{n})$ rounds. 
To simulate a distributed algorithm in $G'$, for each virtual edge we have a vertex, which is the descendant of the virtual edge, that simulates it during the algorithm. For simplicity of presentation, in the description of the algorithm we say that edges do some computations. When we say this, we mean that the vertices that simulate the edges do the computations. For more details about the virtual edges see Section \ref{sec:virtual}.
}
 
\vspace{-5pt} 
\subsection{Decomposing the tree into layers} \label{sec:layers:overview}

Our algorithm works in a graph where all the non-tree edges are between ancestors to descendants. We next explain how we exploit this structure. Intuitively, this allows us to replace all the non-tree edges that cover a tree edge by only two edges that cover roughly the same tree edges. This property is called in \cite{agarwal2018set} the \emph{small neighbourhood cover} property. A key component in the algorithm is a certain layering of the tree, we describe next. We say that a vertex is a \emph{junction} if it has more than one child in the tree. Each layer is composed of disjoint paths in the tree, as follows. The first layer consists of all the tree paths between a leaf to its first ancestor that is a junction. To define the second layer, we first contract all the paths of the first layer, and get a new tree $T_2$. Note that several paths with the same ancestor can be contracted to the same vertex. The second layer is composed of all the tree paths between leaves in $T_2$ to their first ancestors that are junctions in $T_2$. We continue in the same manner, to define all the layers. See Figure \ref{layering_overview} for an illustration. Since a leaf in layer $i$ is a junction in layer $i-1$, which has at least two leaves in the subtree rooted at it in $T_{i-1}$, it follows that the number of layers is $O(\log{n})$.\\[-7pt]   

\setlength{\intextsep}{0pt}
\begin{figure}[h]
\centering
\setlength{\abovecaptionskip}{-2pt}
\setlength{\belowcaptionskip}{6pt}
\includegraphics[scale=0.45]{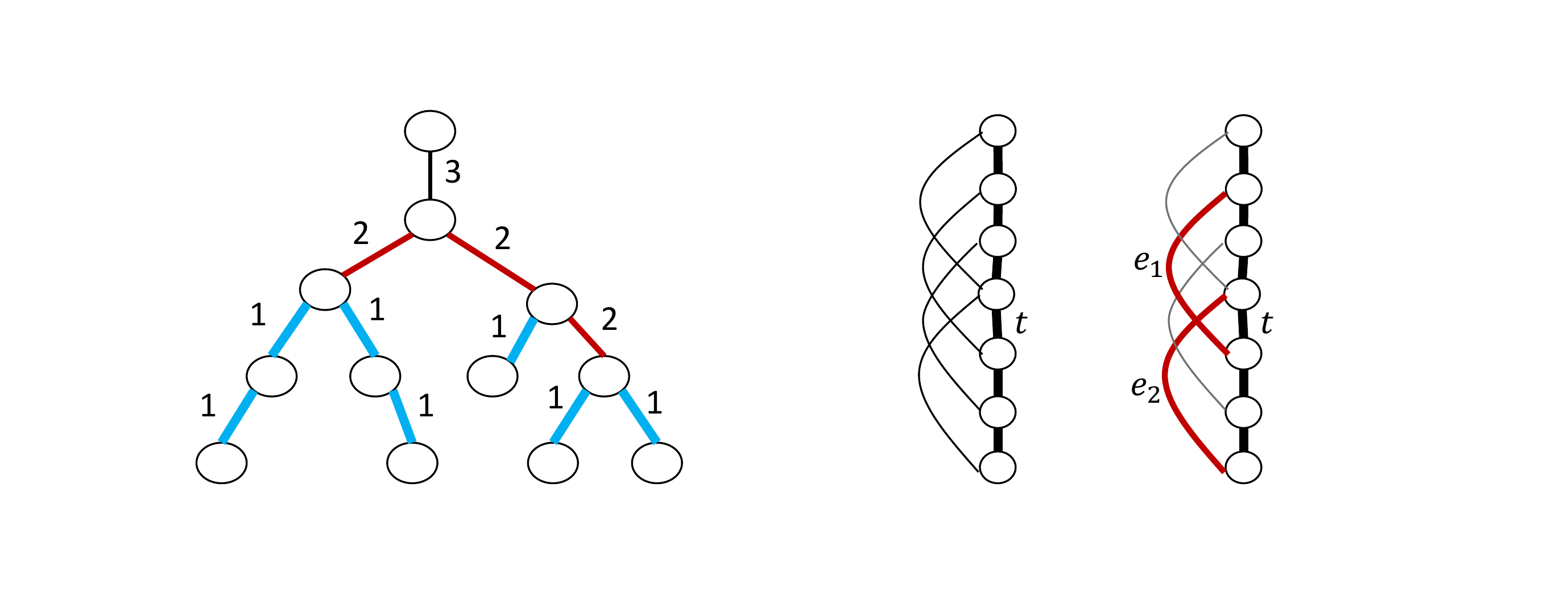}
 \caption{\small An illustration of the layering. On the left, there is a tree decomposed into layers. On the right, there is a tree (the path with bold edges) and non-tree edges that cover it, with an example of a tree edge $t$ and its two petals $e_1,e_2$.\vspace{-5pt}}
\label{layering_overview}
\end{figure}

\paragraph{The petals of a tree edge.}
We say that two tree edges $t_1,t_2$ are \emph{neighbours} with respect to a subset of non-tree edges $X$, if there is an edge $e \in X$ that covers both $t_1$ and $t_2$. The layering is useful in our algorithm for the following reason. If we look at a tree edge $t$ in layer $i$ and a subset of non-tree edges $X$, we can choose only two non-tree edges from $X$ that cover $t$ and all its neighbours in layers $j$ for $j \geq i$. These edges are called the \emph{petals} of $t$ with respect to $X$. If $t$ is an edge in the first layer, its petals are the two non-tree edges from $X$ that cover $e$ and get to the highest ancestor or the lowest descendant possible. See Figure \ref{layering_overview} for an illustration. For a tree edge $t$ in layer $i>1$, the definition is slightly more involved. The first petal is again the non-tree edge that covers $t$ and gets to the highest ancestor, but the second petal cannot be defined just as the non-tree edge that covers $t$ and gets to the lowest descendant, because $t$ may be in different leaf to root paths, and we cannot compare edges that cover different paths directly. To define the second petal, we denote by $P$ the tree path in layer $i$ that $t$ belongs to, and compare edges with respect to $P$. The second petal is defined to be a non-tree edge that covers $t$ and maximum number of edges in $P$ below $t$. For a formal definition see Section \ref{sec:layers}.

\paragraph{Computing the layers.} In Section \ref{sec:layers}, we show how to construct the layering in $O((D+\sqrt{n})\log{n})$ rounds, such that at the end all the tree edges know their layer number and some additional information about the layering. The high-level idea is to use a decomposition from \cite{DBLP:conf/podc/Dory18, ghaffari2016near} we describe in Section \ref{sec:dec}, and break the computation into several aggregate functions on trees we can compute in $O(D+\sqrt{n})$ rounds using the decomposition.   
Later, we show that in $O(D+\sqrt{n})$ rounds all the tree edges in layer $i$ can learn their petals with respect to a subset of non-tree edges $X$. To compute the higher petal we need to compute the edge that gets to the highest ancestor from the edges that cover a tree edge. This is an aggregate function of the non-tree edges that cover a tree edge, and we show how to compute such functions efficiently in Section \ref{sec:agg}. Computing the lower petal is more involved, and we show how certain information about the layers and the LCA labels allow to compute it as well. 
For full details and proofs about the layering see Section \ref{sec:layers}. 

\remove{
In Section \ref{sec:layers}, we show how to construct the layering in $O((D+\sqrt{n})\log{n})$ rounds, such that at the end all the tree edges know their layer number, vertices know if they are the highest or lowest vertex in some path in layer $i$, and a tree edge $t$ knows the lowest vertex in the path $P$ in layer $i$, where $t \in P$. This vertex is denoted by $leaf(t)$. The high-level idea is to use a decomposition from \cite{DBLP:conf/podc/Dory18, ghaffari2016near} we describe in Section \ref{sec:dec}, and break the computation into several aggregate functions on trees we can compute in $O(D+\sqrt{n})$ rounds using the decomposition.   

Later, we show that in $O(D+\sqrt{n})$ rounds all the tree edges in layer $i$ can learn their petals with respect to a subset of non-tree edges $X$. To compute the higher petal we need to compute the edge that gets to the highest ancestor from the edges that cover a tree edge. This is an aggregate function of the non-tree edges that cover a tree edge, and we show how to compute such functions efficiently in Section \ref{sec:agg}. Computing the lower petal is more involved. First, we let a non-tree edge $e$ that covers a tree edge $t$ to learn the value $leaf(t)$. We show that $e$ needs to learn only one such value, which allows to compute it efficiently. Then, using the LCA labels of $e$ and $leaf(t)$, it is possible to compute the lowest edge in $P$ covered by $e$, where $P$ is the path in layer $i$ where $t \in P$. Then, we can compare edges and compute the lower petal similarly to computing the higher petal.
For full details and proofs about the layering see Section \ref{sec:layers}. 
}

\subsection{The parallel set cover algorithm}
\vspace{-5pt}

We follow a parallel algorithm for set cover instances with the \emph{small neighbourhood cover} property \cite{agarwal2018set}. It is easy to implement the algorithm in a setting where the complete set cover graph is known: where sets can communicate directly with the elements they cover, and vice verse. In our case, the main obstacle is to show how to simulate an algorithm in our setting, where tree edges cannot communicate directly with non-tree edges that cover them and vice verse (these edges may be far from them in the graph, and they do not even know the identity of these edges). To overcome this, the main ingredient is a certain decomposition of the tree into segments from  \cite{DBLP:conf/podc/Dory18, ghaffari2016near}. We show in Section \ref{sec:agg}, how all the non-tree edges can compute aggregate functions of the tree edges they cover, and how tree edges can compute aggregate functions of the non-tree edges that cover them efficiently using the decomposition. In addition, we show that many of the computations in the algorithm are based on such aggregate functions. Similar computations are done in \cite{DBLP:conf/podc/Dory18}, and the main challenge in our case is to show that the algorithm extends also to the case of working with virtual edges.
We mention that in \cite{agarwal2018set}, the authors describe also a distributed algorithm, but it is only for the setting that the set cover graph is known and also requires very large messages, and hence is not suitable for the $\mathsf{CONGEST}$ model.
  
\paragraph{The algorithm.} The algorithm is a primal-dual algorithm that processes the graph according to the layers. The primal LP has a variable $x(e)$ for each non-tree edge, which indicates whether $e$ is added to the augmentation $A$, and its goal is to add a minimum weight set of edges to $A$ while covering all tree edges. For a non-tree edge $e$, we denote by $S_e$ all the tree edges covered by $e$. The primal and dual LPs are formulated as follows.

\begin{center}
\setlength\fboxrule{1pt}
\begin{tabular}{cc}
  \textbf{Primal LP} & \textbf{Dual LP} \\
  \fcolorbox{red!50!black}{white}{$
  \begin{aligned}
    
\min\ \sum_{e \in E \setminus T} x(e) \cdot w(e) \\
\forall t \in T, \ \sum_{e: t \in S_e} x(e) \geq 1 \\
x(e) \geq 0
  \end{aligned}
  $} &
  \fcolorbox{red!50!black}{white}{$
  \begin{aligned}
   \max\ \sum_{t \in T} y(t) \\
\forall e \in E \setminus T, \ \sum_{t \in S_e} y(t) \leq w(e)\\
y(t) \geq 0
  \end{aligned}
  $} \\
\end{tabular}
\end{center}

The algorithm is composed of two parts, a forward phase and a reverse-delete phase. The goal of the forward phase is to cover all the tree edges while making sure that all the dual constraints hold up to $(1+\epsilon')$ factor, and for all the edges added to the augmentation $A$ it holds that $\sum_{t \in S_e} y(t) \geq w(e)$. 
Intuitively, $y(t)$ can be seen as a price an edge $t$ can pay for the non-tree edges that cover it, and we add a non-tree edge $e$ to the augmentation only if the tree edges in $S_e$ pay at least $w(e)$ in total. In some sense, this guarantees that we do not add edges that are too expensive to $A$, if there is an alternative option to cover the same tree edges. However, there is no bound on the number of edges we add to $A$ in this phase.
In the reverse-delete phase the goal is to remove some of the edges from $A$, such that all the tree edges are covered and any tree edge $t$ where $y(t) > 0$ is covered at most $c$ times for a constant $c$. This guarantees a $(c+\epsilon)$-approximation, as follows. 

\begin{lemma} \label{approx_lemma}
If each tree edge with $y(t) > 0$ is covered at most $c$ times, the algorithm guarantees a $(c+\epsilon)$-approximation.
\end{lemma}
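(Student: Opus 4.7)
The plan is a standard primal–dual charging argument. I would start by reinterpreting the forward phase invariants as saying that, after rescaling, the vector $y$ yields a feasible solution to the dual LP. Concretely, since the forward phase guarantees that for every non-tree edge $e$ we have $\sum_{t \in S_e} y(t) \leq (1+\epsilon')\, w(e)$, the vector $y' := y/(1+\epsilon')$ satisfies all dual constraints. By LP duality and the fact that the primal LP is a relaxation of TAP, we obtain
\[
\sum_{t \in T} y(t) \;\leq\; (1+\epsilon')\cdot \mathrm{OPT}_{LP} \;\leq\; (1+\epsilon')\cdot \mathrm{OPT},
\]
where $\mathrm{OPT}$ is the optimum of weighted TAP on $G'$.

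Next I would bound $w(A)$ by charging each non-tree edge picked by the algorithm to the $y$-values of the tree edges it covers. The forward phase ensures that for every $e \in A$ we have $w(e) \leq \sum_{t \in S_e} y(t)$, so
\[
w(A) \;=\; \sum_{e \in A} w(e) \;\leq\; \sum_{e \in A} \sum_{t \in S_e} y(t) \;=\; \sum_{t \in T} y(t)\cdot \bigl|\{e \in A : t \in S_e\}\bigr|.
\]
Now I use the hypothesis: any $t$ appearing in the right-hand sum must satisfy $y(t)>0$ (otherwise it contributes nothing), and by assumption every such $t$ is covered by $A$ at most $c$ times. Hence the inner count is at most $c$ whenever the summand is nonzero, giving $w(A) \leq c \sum_{t} y(t)$.

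Combining the two inequalities yields
\[
w(A) \;\leq\; c(1+\epsilon')\cdot \mathrm{OPT}.
\]
To finish, I would choose the slack parameter $\epsilon'$ of the forward phase so that $c(1+\epsilon') \leq c+\epsilon$, i.e. $\epsilon' \leq \epsilon/c$; since $c$ is a constant this is a legitimate choice, and it produces the desired $(c+\epsilon)$-approximation. No step here looks genuinely difficult—the argument is a textbook primal–dual analysis—so the only minor care needed is to make sure the forward-phase invariants really do give the two bullet inequalities used above (dual feasibility up to $(1+\epsilon')$ and $w(e)\leq \sum_{t\in S_e} y(t)$ for $e\in A$), which is built into the specification of the forward phase stated just before the lemma.
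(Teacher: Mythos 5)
Your proposal is correct and is essentially the paper's own argument: charge $w(e)\leq\sum_{t\in S_e}y(t)$ for the selected edges, use the coverage bound $c$ on tree edges with $y(t)>0$, and apply weak duality after rescaling $y$ by $(1+\epsilon')$ with $\epsilon'=\epsilon/c$. The only cosmetic difference is that the set whose weight is bounded should be the output $B\subseteq A$ of the reverse-delete phase (the coverage-at-most-$c$ hypothesis holds for $B$, not for $A$), but since every $e\in B$ inherits the tightness inequality from $A$, the argument is unchanged.
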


\begin{proof}
Let $B \subseteq A$ be the final cover obtained by the end of the reverse-delete phase. 
Since all the edges added to $A$ satisfy $w(e) \leq \sum_{t \in S_e} y(t)$ and all the edges with $y(t)>0$ are covered at most $c$ times by $B$, we get
$$w(B) = \sum_{e \in B} w(e) \leq \sum_{e \in B} \sum_{t \in S_e} y(t) \leq c \sum_{t \in T} y(t) = c(1+\epsilon') \sum_{t \in T} \frac{y(t)}{(1+\epsilon')} \leq c(1+\epsilon')OPT,$$
The last inequality follows from the fact that all the dual constraints hold up to a $(1+\epsilon')$ factor, which shows that dividing the values of $y$ by $(1+\epsilon')$ give a feasible dual solution, and from the weak duality theorem, as follows. By the weak duality theorem any feasible solution to the dual problem has value smaller or equal to the value of a feasible primal solution. Since the optimal augmentation is a set of edges that cover all the tree edges it is a feasible primal solution, and the inequality follows. Choosing $\epsilon' = \frac{\epsilon}{c}$ gives a $(c+\epsilon)$-approximation. 
\end{proof}

We start by showing an algorithm where $c=4$, and then show an improved algorithm with $c=2$.
We next explain the two phases of the algorithm.

\remove{
In the reverse-delete phase the goal is to remove some of the edges from $A$, such that all the tree edges are covered and any tree edge $t$ where $y(t) > 0$ is covered at most 4 times. This guarantees a $(4+\epsilon)$-approximation, as follows. 
Let $B \subseteq A$ be the final cover obtained by the end of the reverse-delete phase. 
Since all the edges added to $A$ satisfy $w(e) \leq \sum_{t \in S_e} y(t)$ and all the edges with $y(t)>0$ are covered at most 4 times by $B$, we get
$$w(B) = \sum_{e \in B} w(e) \leq \sum_{e \in B} \sum_{t \in S_e} y(t) \leq 4 \sum_{t \in T} y(t) = 4(1+\epsilon') \sum_{t \in T} \frac{y(t)}{(1+\epsilon')} \leq 4(1+\epsilon')OPT,$$
The last inequality follows from the fact that all the dual constraints hold up to a $(1+\epsilon')$ factor, which shows that dividing the values of $y$ by $(1+\epsilon')$ give a feasible dual solution, and from the weak duality theorem. For more details see Section \ref{sec:parallel_sc}. Choosing $\epsilon' = \frac{\epsilon}{4}$ gives a $(4+\epsilon)$-approximation. We next explain the two phases of the algorithm.
}
 
\subsection{The forward phase}

The goal of the forward phase is to cover all the tree edges while making sure that all the dual constraints hold up to $(1+\epsilon)$ factor, and for all the edges added to the augmentation $A$ it holds that $\sum_{t \in S_e} y(t) \geq w(e)$. We process the layers one by one according to their order, where in epoch $k$ we make sure that all the tree edges of layer $k$ are covered. Epoch $k$ works as follows. Let $R_k$ be all the tree edges of layer $k$ that are still not covered, they are the only edges which increase their dual variables in epoch $k$. For a non-tree edge $e$, let $s(e) = \sum_{t \in S_e} y(t)$ be the current value of the dual constraint, and let $S^k_e$ be the tree edges in $R_k \cap S_e$, if each one of these tree edges $t$ sets $y(t)=\frac{w(e)-s(e)}{|S^k_e|}$, then the dual constraint becomes tight. Since we want to maintain feasibility of the dual, each tree edge $t \in R_k$ sets its dual variable to be $\min_{t \in S_e} \frac{w(e)-s(e)}{|S^k_e|}$. After this, we add to $A$ edges $e$ if their dual constraint becomes tight. In the next iteration, each tree edge in $R_k$ that is still not covered by $A$, increases its dual variable by a multiplicative factor of $(1+\epsilon)$, and again we add edges to $A$ if their dual constraint becomes tight, we continue in the same manner until all the edges in $R_k$ are covered. 
This concludes the description of the forward phase.

Our algorithm differs slightly from the algorithm in \cite{agarwal2018set}, where they maintained the feasibility of the dual solution. Our approach allows us to get an improved approximation of $(1+\epsilon)c$, compared to the $(2+\epsilon)c$-approximation obtained by the approach in \cite{agarwal2018set}.

\paragraph{Correctness.}
From the above description, it follows that at the end all the tree edges are covered by $A$, and for each $e \in A$, its dual constraint becomes tight. For all the dual constraints it holds that $\sum_{t \in S_e} y(t) \leq (1+\epsilon) w(e)$, since we increase the value of $\sum_{t \in S_e} y(t)$ at most by $(1+\epsilon)$ factor at each iteration, and $e$ is added to $A$ once $\sum_{t \in S_e} y(t) \geq w(e)$. At this point, all the edges in $S_e$ are covered and do not increase their dual variables anymore.

\paragraph{Implementation details.}
In Section \ref{sec:forward-imp}, we explain how to implement each iteration of the forward phase in $O(D+\sqrt{n})$ rounds. As already mentioned, the main building blocks we use are algorithms for computing aggregate functions of tree edges or non-tree edges described in Section \ref{sec:agg}. We show that all the computations in the forward phase are based on such functions. For example, non-tree edges should compute the value $\sum_{t \in S_e} y(t)$ which is an aggregate function of the tree edges they cover. Tree edges should learn if they are covered by $A$, which happens iff there is at least one edge that covers them that is added to $A$, this is an aggregate function of the tree edges that cover them, etc. Since there are $O(\log{n})$ layers, there are $O(\log{n})$ epochs. We show that each of them consists of $O(\frac{\log{n}}{\epsilon})$ iterations, based on the fact that uncovered tree edges increase their dual variable by a $(1+\epsilon)$ factor in each iteration. This results in a time complexity of $O((D+\sqrt{n})\frac{\log^2{n}}{\epsilon})$ rounds for the whole phase.
Full details and proofs appear in Section \ref{sec:forward-imp}. 
  
\subsection{The reverse-delete phase} \label{sec:rev-del}
In the reverse-delete phase we choose a subset $B \subseteq A$ with the following properties. 

\begin{enumerate}
\item $B$ covers all the tree edges.
\item Any $t \in T$ with $y(t) > 0$ is covered at most 4 times by $B$.
\end{enumerate}
 
The dual variables are not changed during the process. 
Recall that $R_k$ are all the tree edges in layer $k$ that are not covered before epoch $k$. From the description of the forward phase, the only tree edges with $y(t)>0$ are in $R_k$ for some $k$. We need the following additional definitions.
We denote by $A_k$ the non-tree edges added to $A$ in epoch $k$, and by $F_k$ the tree edges that are first covered in epoch $k$. 
Note that $F_k$ contains $R_k$ and perhaps additional edges from higher layers.
By the definition of $R_k$, edges in $R_k$ are not covered by $A_i$ for any $i < k$. This motivates going over the layers in the reverse direction. In the reverse-delete phase we go over the layers in the reverse direction, building $B$. Initially $B = \emptyset$ and we add to it edges from $A$ during the algorithm. Our algorithm proceeds in epochs $k=L,...,1$, where $L$ is the index of the last layer. We next describe epoch $k$.

\paragraph{Epoch $k$.}
In epoch $k$ we make sure that the following holds. 

\begin{enumerate}
\item All the tree edges in $F_i$ for $i \geq k$ are covered by $B$. 
\item All the edges in $R_i$ for $i \geq k$ are covered at most 4 times. 
\end{enumerate}

Hence, at the end of epoch $1$, $B$ satisfies all the requirements. 
We next explain how we guarantee the above properties. At the beginning of epoch $k$, $B$ already covers all the edges in $F_i$ for $i>k$, and our goal is to add to it edges from $A_k$ to cover $F_k$. We need to make sure that all the edges in $R_i$ for $i \geq k$ are covered at most 4 times. Note that $B$ already satisfies this for $i>k$, and by definition edges in $A_k$ do not cover edges in $R_i$ for $i>k$, so we need to take care only of edges in $R_k$. However, edges in $R_k$ may already be covered by $B$, so we may need to remove edges from $B$, while making sure that all the edges in $F_i$ for $i \geq k$ are covered. For doing so, we go over the layers $i=k,...,L$ where $L$ is the last layer. We need the following notation.

Let $X = B \cup A_k$, let $F = \cup_{i=k}^{L} F_i$, and let $H_i$ be all the tree edges in $F$ in layer $i$ (note that $F_i$ may contain also edges from layers higher than $i$). Notice that $B$ covers $F_i$ for $i>k$, and $A_k$ covers $F_k$, hence $X$ covers $F$. Let $t$ be a tree edge in $H_i$, and let $p(t)$ be the set of non-tree edges in $X$ that cover $t$. As explained in Section \ref{sec:layers:overview}, we can replace $p(t)$ by two non-tree edges in $p(t)$ that cover all the tree edges covered by $p(t)$ in layers $i,...,L$, which are called the \emph{petals} of $t$ in $X$.
We next build a set $Y \subseteq X$ that covers $F$. We go over the layers in iterations $i=k,...,L$, as follows.

\paragraph{Iteration $i$.}
In iteration $i$ we make sure that the edges in $H_i$ are covered. Let $\widetilde{H}_i$ be all the tree edges in $H_i$ that are not covered by $Y$ at the beginning of iteration $i$. We build a virtual graph $G_i$ that its vertices are the tree edges in $\widetilde{H}_i$, and two vertices $t_1,t_2$ in $G_i$ are connected if there is a non-tree edge in $X$ that covers $t_1,t_2$. We find a maximal independent set (MIS), $M_i$, in the graph $G_i$. We call the elements in $M_i$ anchors and add all their petals (with respect to $X$) to $Y$. Intuitively, the computation of the MIS, $M_i$, allows us to cover simultaneously all the edges in $\widetilde{H}_i$, while making sure that we do not cover edges too many times.\\[-7pt] 

After going over all the layers $i=k,...,L$, we set $B = Y$. This completes the description of epoch $k$. 

\paragraph{Correctness proof.}
The correctness proof follows \cite{agarwal2018set}, we include it for completeness.

\begin{lemma} \label{correct_lemma}
At the end of epoch $k$:
\begin{enumerate}
\item All the tree edges in $F = \cup_{i=k}^{L} F_i$ are covered by $B$.\label{cover_F} 
\item All the edges in $R_i$ for $i \geq k$ are covered at most 4 times.\label{cover_R} 
\end{enumerate}
\end{lemma}

\begin{proof}
The proof is by induction on $k$. In the base case, $k=L+1$ and $B=\emptyset$, hence the claim clearly holds.
We start by proving \ref{cover_F}.
From the description of iteration $i$, at the end of iteration $i$, $Y$ covers $H_i$: All the edges in $H_i \setminus \widetilde{H}_i$ are already covered at the beginning of the iteration. All the edges in $\widetilde{H}_i$ are either anchors or have a neighbouring anchor $t \in M_i$, and then they are covered by the petals of $t$. Hence, after all the iterations, $B=Y$ covers $\cup_{i=k}^{L} H_i = F$.

We next prove \ref{cover_R}. We start with a simple observation: all the anchors in all the layers, $\cup_{i=k}^{L} M_i$, are independent in the following sense. If we take any two anchors $t_1,t_2$, there is no edge in $X$ that covers both of them. In the same layer, it follows from computing an MIS. In different layers $i<j$, it follows since the graph $G_j$ contains only tree edges that are not yet covered, and we add the petals of all the anchors in $M_i$ at the end of iteration $i$.

We next show that any tree edge $t \in R_i$ for $i \geq k$ is covered at most 4 times by $Y$. As explained earlier, for $i>k$ this already follows from previous epochs. This holds since $Y \subseteq X$, and $X=B \cup A_k$ (with the set $B$ at the end of the previous epoch). Now, at the end of epoch $k+1$, all the edges in $R_i$ for $i>k$ are covered at most 4 times by $B$ from the induction hypothesis. Also, from the definition of the sets, $A_k$ does not cover any edge in $R_i$ for $i>k$. Hence, any set $Y \subseteq X$ we choose covers all the edges in $R_i$ for $i>k$ at most 4 times. 

We now show that any tree edge $t \in R_k$ is covered at most 4 times by $Y$. If $t$ is an anchor, it is covered only by its petals, since the anchors are independent, and hence it is covered at most twice. Otherwise, we show that $t$ has at most two neighbouring anchors. Since $t$ is in layer $k$, and all the anchors are at layers at least $k$, the 2 petals of $t$ cover all its anchors. Hence, if there are more than two anchors, at least two of them are covered by the same petal, which contradicts the fact that the anchors are independent. Hence, $t$ is covered at most twice by the petals of each of its neighbouring anchors, and at most 4 times in total. This completes the proof.
\end{proof}

From Lemma \ref{correct_lemma} with respect to $k=1$, we get that at the end of the reverse-delete phase $B$ covers all tree edges, and for each $1 \leq i \leq L$, all the tree edges in $R_i$ are covered at most 4 times, as needed. The algorithm has $L$ epochs, each of them consists of at most $L$ iterations, which sums up to $O(L^2)=O(\log^2{n})$ iterations. In Section \ref{sec:rev-del-imp}, we show how to implement each iteration in $O(D+\sqrt{n})$ rounds, which results in a complexity of $O((D+\sqrt{n})\log^2{n})$ rounds for the whole phase.\\[-7pt]


\remove{
\paragraph{Implementation details.} In Section \ref{sec:rev-del-imp}, we explain how we implement the reverse-delete phase. The main task we need to solve is to build an MIS in the virtual graph $G_i$ in each iteration. This is a completely virtual graph, and we design a specific algorithm to solve this task which relies heavily on the structure of the decomposition of the tree into segments from Section \ref{sec:dec} and the layering. The MIS, $M_i$, is a set of tree edges in $\widetilde{H}_i$ such that for any two tree edges $t,t' \in M_i$, there is no edge in $X$ that covers both of them. Our algorithm is composed of a global part where all the vertices locally compute a global MIS that includes some of the tree edges. Then, it includes a local part, where we work on each segment separately, and add additional edges to the MIS by scanning the segment. 
To implement the global part, first, all the vertices in the graph learn $O(\log{n})$ information about each segment: two of its tree edges and their petals. Let $T'$ be all the tree edges learned. From the information, all the vertices can compute locally an MIS, $M'$, of the tree edges $T'$. This is based on showing that the petals and LCA labels allow all vertices to compute the relevant virtual graph induced on all these edges (this is a subgraph of the virtual graph $G_i$). We add all the petals of edges in $M'$ to the cover $Y$.
Next, we work on each segment separately with the goal of covering all the edges in $\widetilde{H}_i$ that are not already covered by edges added to $Y$. This is done by scanning all the subpaths in layer $i$ that are contained in a segment. We scan each subpath from its descendant to its ancestor, where each time we reach a tree edge $t$ that is not already covered by edges added to $Y$ by vertices below it on the segment, or before the local part, we add $t$ to the MIS. 

Since our algorithm works in different segments in parallel, we need to show that all the edges added to the MIS really form an MIS. In the global part it is clear, since we compute an MIS. Also, since at the end of the global part we add all the petals of edges in $M'$ to $Y$, all the edges that have a neighbour in $M'$ are already covered. The main challenge is showing that two edges $t_1,t_2$ that are added to the MIS in the local part of different segments are independent. For this, we exploit the structure of the decomposition, and show that there must be an edge $t \in T'$ in the tree path between $t_1$ and $t_2$. If $t \in M'$, we show that $t_1,t_2$ are already covered by the petals of $t$ and therefore cannot be added to the MIS. Otherwise, $t$ has a neighbour in $M'$, and we show that its petals must cover at least one of $t_1,t_2$, which gives a contradiction. Full details and proofs appear in Section \ref{sec:rev-del-imp}.\\[-7pt]
}

\paragraph{Implementation details.} To implement the algorithm, the main task we need to solve is to build an MIS in the virtual graph $G_i$ in each iteration. This is a \emph{completely virtual} graph, and we design a specific algorithm to solve this task. The MIS, $M_i$, is a set of tree edges in $\widetilde{H}_i$ such that for any two tree edges $t,t' \in M_i$, there is no edge in $X$ that covers both of them. Our algorithm is composed of a global part where all the vertices locally compute a global MIS that includes some of the tree edges. Then, it has a local part, where we work on each segment separately, and add additional edges to the MIS by scanning the segment. 
 To implement the global part, first, all the vertices in the graph learn $O(\log{n})$ information about each segment: two of its tree edges and their petals. Let $T'$ be all the tree edges learned. We show that based on this information, all the vertices can compute locally an MIS, $M'$, of the tree edges $T'$. In the local part, we work in different segments in parallel and the main challenge is to show that all the edges added to the MIS are indeed independent. Full details and proofs appear in Section \ref{sec:rev-del-imp}.
  
\paragraph{Conclusion.} Based on these ingredients we get a $(4+\epsilon)$-approximation for weighted TAP in the virtual graph $G'$, which translates to a $(8+\epsilon)$-approximation for weighted TAP in the input graph $G$, and a $(9+\epsilon)$-approximation for weighted 2-ECSS in $G$. The complexity of the whole algorithm is $O((D+\sqrt{n})\frac{\log^2{n}}{\epsilon})$ rounds. We next give the high-level idea for improving the approximation. 

\subsection{Improved approximation}

In the reverse-delete phase we make sure that all the edges in $R_k$ are covered at most $4$ times. We design a variant of the algorithm where all these tree edges are covered at most 2 times, which results in an improved approximation for the whole algorithm. The first change we do in the algorithm is that we replace the MIS, $M_i$, by a set of tree edges $M'_i$ that are no longer guaranteed to be independent, however for each one of the tree edges in $M'_i$ we add only its higher petal to $Y$ and not both of them. Then, we are able to analyze the structure of dependencies in $M'_i$ and show that all the tree edges in $R_k$ are covered at most 3 times by the edges of $Y$. To improve the approximation further we show that the only cases where tree edges are covered 3 times have a certain structure, and in this case we can remove one of the edges that cover them from $Y$ without affecting the covering of all the other tree edges. We also show that we can detect these cases efficiently. This results in an algorithm where all the tree edges in $R_k$ are covered at most 2 times. Based on this we get a $(2+\epsilon)$-approximation for weighted TAP in the virtual graph $G'$, which translates to a $(4+\epsilon)$-approximation for weighted TAP in the input graph $G$, and a $(5+\epsilon)$-approximation for weighted 2-ECSS in $G$. The complexity remains $O((D+\sqrt{n})\frac{\log^2{n}}{\epsilon})$ rounds. Full details and proofs appear in Section \ref{sec:improved}.

\subsubsection{A note on the unweighted case} \label{sec:unweighted_approx}

While obtaining a $(2+\epsilon)$-approximation for weighted TAP on the virtual graph $G'$ requires an involved analysis, obtaining a $2$-approximation for the unweighted variant of the same problem is easy: we start by computing an MIS of the tree edges (with respect to all the non-tree edges). Then, for each one of the tree edges we add its 2 petals to the augmentation. Computing an MIS of the tree edges can be done as explained in Sections \ref{sec:rev-del} and \ref{sec:rev-del-imp} where we proceed according to the layers. At the end, all the tree edges are covered: consider a tree edge $t$ in layer $i$, if it is already covered at the beginning of iteration $i$ it is clear. Otherwise, it has a neighbor $t'$ in the MIS in layer $i$, and the petals of $t'$ cover $t$. The algorithm guarantees a $2$-approximation for unweighted TAP since we must add at least one edge to the augmentation to cover each one of the edges in the MIS (since they are independent, there is no edge that covers two such edges), and the algorithm adds exactly 2 edges to the augmentation for each edge of the MIS, which guarantees a $2$-approximation. 

This allows to give a $4$-approximation for unweighted TAP in the input graph $G$ in $\widetilde{O}(D+\sqrt{n})$ rounds. While the same result already appears in \cite{DBLP:conf/opodis/Censor-HillelD17}, the approximation analysis in \cite{DBLP:conf/opodis/Censor-HillelD17} is quite involved, where the analysis here is very simple. Also, it seems that the same approach can be used to show a $\tau$-approximation for \emph{unweighted} problems with the $\tau$-SNC property, which is better than the approximation shown in \cite{agarwal2018set} (this algorithm already computes an MIS of the elements we need to cover, but the approximation ratio analysis is for the weighted case which gives a worse approximation). This can be seen as an extension of the simple approximation algorithm for minimum vertex cover that starts by computing a maximal matching and then adds all the endpoints of the edges of the matching to the cover. In this setting, the elements we need to cover are the edges of the graph, an MIS of them is a matching, and the petals of an edge are its endpoints. 

\remove{ 
\vspace{-5pt}
\section{A bird's eye view of our second algorithm}
\vspace{-5pt}
Due to the space limitations, our second algorithm which proves Theorem \ref{thm:shortcut} and obtains an $O(\log n)$ approximation in time proportional to the shortcut complexity of the network, is deferred to Appendix \ref{app:secondAlgo}. Here, we provide a very brief summary and comment on the novel components.

On the outer layer, this second algorithm also works by first computing an MST, $T$, and then finding an approximately minimum-weight augmentation for it, as overviewed in Section \ref{sec:tech_overview}. Also, we again view this tree augmentation as a set cover problem, where we would like to find a minimum-cost collection of nontree edges, which cover all the tree edges of $T$. We follow a standard parallelization of the sequential greedy\cite{berger1989efficient}, which gives an $O(\log n)$-approximation. We gradually add more and more of the most cost-effective non-tree edges to the solution, while ensuring that they cover mostly different tree edges. Cost-effectiveness refers to how many tree edges we can cover, per unit of weight. We refer the reader to Appendix \ref{app:secondAlgo} for more detailed outline, and also to \cite{DBLP:conf/podc/Dory18}. %
The main novelty in our algorithm is in how to perform each iteration of the parallel set cover, in time proportional to the shortcut complexity of the graph. Concretely, we will need two subroutines: (A) Given a collection of non-tree edges, any tree edge should know whether it is covered by this collection or not. (B) Each non-tree edge should know the number of tree edges that it can cover. To solve these two subroutines, we build two algorithmic tools, in the framework of shortcuts\cite{ghaffari2016shortcuts}. A tool that allows each node to know the summation of the values of all of its ancestors, in an arbitrary given tree, and a tool that allows us to compute the \emph{heavy-light} decomposition of any tree. These would be trivial to do in time proportional to the height of the tree, but we want the complexity to be proportional to the shortcut complexity of the network. We present these tools in Appendix \ref{subsec:log-tools}. Appendix \ref{subsec:log-subroutines} explains how we use these tools to build the subroutines (A) and (B) mentioned above. The problems solved by our tools are very basic and frequently used computations about a tree. Thus, we hope that these tools will find applications beyond our work.
}

\section{The first algorithm: full details and proofs} \label{sec:imp}

\subsection{Working with virtual edges} \label{sec:virtual}

In our algorithm, we replace the input graph $G$ with a virtual graph $G'$. The graph $G'$ includes all the tree edges in $G$ and each non-tree edge in $G$ is replaced by one or two \emph{virtual} edges in $G'$ such that all the non-tree edges in $G'$ are between ancestors to descendants. We next explain how we construct the graph $G'$ and work with it in the algorithm. We follow the distributed construction in \cite{DBLP:conf/opodis/Censor-HillelD17}. The graph $G'$ was first defined in the sequential algorithm of Khuller and Thurimella \cite{khuller1993approximation}.

The graph $G'$ is defined as follows. Let $e=\{u,v\}$ be a non-tree edge in $G$. If $e$ is an edge between an ancestor and a descendant, we keep it on $G'$. Otherwise, let $w=LCA(u,v)$ be the lowest common ancestor of $u$ and $v$ in the tree. The edge $e$ is replaced by the two virtual edges $\{u,w\},\{v,w\}$. These edges are clearly between an ancestor, $w$, to its descendants, $u$ and $v$. In our algorithm, the descendants $u$ and $v$ would know about the virtual edges $\{u,w\},\{v,w\}$ and would simulate them in the algorithm. However, the ancestor $w$ does not necessarily know about these edges. 

The edge $\{u,v\} \in G$ covers the tree path $P_{u,v}$ between $u$ and $v$, that is composed of the two paths $P_{u,w},P_{v,w}$ where $w=LCA(u,v)$. Hence, it follows that an edge $e \in G$ is replaced by one or two edges in $G'$ that cover exactly the same tree edges. Based on this, it is proved in \cite{DBLP:conf/opodis/Censor-HillelD17} that an $\alpha$-approximation augmentation $A'$ in $G'$ gives a $2\alpha$-approximation augmentation $A$ in $G$. The augmentation $A$ is constructed from $A'$ by replacing any edge in $A'$ by a corresponding edge in $A$ as follows. If $e' \in A'$, the edge $e'$ was added to $G'$ to replace an original edge $e \in G$, the edge $e$ is the corresponding edge. If there are several such edges $e$, we choose one of them.

\begin{lemma} \label{virtual_lemma}
Let $A'$ be an $\alpha$-approximation augmentation in $G'$, and let $A$ be the corresponding augmentation in $G$ where any edge in $A'$ is replaced by a corresponding edge in $A$. Then, $A$ is a $2\alpha$-approximation augmentation in $G$.
\end{lemma}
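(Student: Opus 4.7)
The plan is to verify two things: (i) $A$ is a feasible augmentation in $G$, and (ii) its weight is within a factor of $2\alpha$ of the optimum augmentation weight in $G$. The key bridge is the weight relationship between the two problems. Recall that each non-tree edge $e=\{u,v\}$ of $G$ is replaced in $G'$ either by itself (when $e$ is an ancestor-descendant edge) or by the two virtual edges $\{u,w\},\{v,w\}$ with $w=LCA(u,v)$, and by the standard Khuller--Thurimella convention each virtual edge inherits weight $w(e)$. This weight convention, combined with the fact noted just before the lemma that the virtual edges replacing $e$ cover \emph{exactly} the same tree edges as $e$, is what drives the whole argument.

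First, I would verify feasibility of $A$. Take any tree edge $t \in T$; since $A'$ is a valid augmentation in $G'$, some virtual edge $e'\in A'$ covers $t$. Let $e\in A$ be the corresponding original edge. Because the virtual edges replacing $e$ together cover exactly the tree edges covered by $e$, any individual virtual edge $e'$ covers a subset of the tree edges that $e$ covers; hence $e$ covers $t$ as well. Therefore $A$ covers every tree edge, so $T\cup A$ is $2$-edge-connected in $G$.

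Next, I would bound $w(A)$. For each $e\in A$ there is at least one $e'\in A'$ with $e'$ mapping to $e$, and $w(e')=w(e)$; when two virtual edges of $A'$ both map to the same $e$, they contribute $2w(e)$ to $w(A')$ but $e$ contributes only $w(e)$ to $w(A)$ (since $A$ is a set). Consequently
\[
w(A) \;\le\; \sum_{e'\in A'} w(e') \;=\; w(A').
\]
Now I would relate $w(A')$ to $OPT_G$, the weight of an optimal augmentation in $G$. Starting from an optimal augmentation $A^*\subseteq E$ in $G$, construct $\widetilde{A}\subseteq E'$ by replacing each $e\in A^*$ by the one or two virtual edges that replaced it in $G'$. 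Since these virtual edges collectively cover exactly the tree edges covered by $e$, $\widetilde{A}$ covers every tree edge and is therefore a feasible augmentation in $G'$; moreover, each edge of $A^*$ contributes at most $2w(e)$ to $w(\widetilde{A})$, so $w(\widetilde{A})\le 2\,OPT_G$. Because $A'$ is an $\alpha$-approximation in $G'$,
\[
w(A')\;\le\;\alpha\cdot OPT_{G'}\;\le\;\alpha\cdot w(\widetilde{A})\;\le\;2\alpha\cdot OPT_G.
\]
Combining the two inequalities gives $w(A)\le 2\alpha\cdot OPT_G$, completing the proof.

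There is no real obstacle here beyond being careful about two bookkeeping points: that one must take $A$ as a \emph{set} to avoid double counting when two virtual edges map to the same $e$, and that the weight convention $w(e')=w(e)$ for virtual edges is what makes the factor exactly $2$ rather than something larger. Both are routine once the mapping between original and virtual edges is made explicit.
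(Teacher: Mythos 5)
Your proof is correct and is essentially the argument the paper relies on (the paper states this lemma without proof, deferring to Censor-Hillel and Dory, and its surrounding remarks give exactly the intuition you formalize). Feasibility follows because each virtual edge covers a subset of the tree edges covered by its original edge, and the bound follows from the chain $w(A)\le w(A')\le \alpha\cdot OPT_{G'}\le 2\alpha\cdot OPT_G$, where the last step uses that an optimal augmentation in $G$ maps to a feasible augmentation in $G'$ with at most two equal-weight virtual copies per edge.
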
  

To construct $G'$, we need to compute all the virtual edges in $G'$. For this, we follow the distributed construction in \cite{DBLP:conf/opodis/Censor-HillelD17} (see Section 5.2 in the full version of the paper), 
that uses the LCA labelling scheme of Alstrup et al. \cite{alstrup2004nearest}. In \cite{DBLP:conf/opodis/Censor-HillelD17}, it is shown how to assign the vertices of the graph short labels of $O(\log{n})$ bits, such that given the labels and some additional information about the structure of the tree, any two vertices $u,v$ can compute the label of their LCA from their labels. The time complexity for assigning the labels and learning the structure of the tree is $O(D+\sqrt{n}\log^{*}n)$ rounds. Given the labels, the computation of the label of the LCA is immediate. Hence, if $\{u,v\}$ is an edge in $G$, both the vertices $u$ and $v$ know their own labels and would learn the label of $w=LCA(u,v)$, and use it to simulate the virtual edges $\{u,w\},\{v,w\}.$ This is summarized in the next Lemma.

\begin{lemma}
Building the virtual graph $G'$ takes $O(D+\sqrt{n}\log^{*}{n})$ rounds. At the end, for each virtual edge $e=\{u,w\}$ where $w$ is an ancestor of $u$, the vertex $u$ knows that $e$ is in $G'$ and knows the LCA labels of $u,w$.
\end{lemma}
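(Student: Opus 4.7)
The plan is to directly invoke the distributed LCA labeling construction of Censor-Hillel and Dory~\cite{DBLP:conf/opodis/Censor-HillelD17}, which itself implements the labeling scheme of Alstrup et al.~\cite{alstrup2004nearest}, and then show that a single round of local communication along each non-tree edge suffices to determine the virtual edges incident to each descendant endpoint.

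First, I would run the $O(D+\sqrt{n}\log^{*}n)$-round procedure from~\cite{DBLP:conf/opodis/Censor-HillelD17} that assigns each vertex $v$ a label $\ell(v)$ of $O(\log n)$ bits, together with the additional global structural information about the tree that the scheme requires so that any vertex holding two labels $\ell(x), \ell(y)$ can locally decode the label $\ell(\mathrm{LCA}(x,y))$ of their least common ancestor. By the cited result, at the end of this phase every vertex knows its own label together with this decoding information.

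Next, for every non-tree edge $e=\{u,v\}\in E\setminus T$, have $u$ and $v$ exchange their LCA labels $\ell(u)$ and $\ell(v)$ across $e$; this takes a single round in \CONGEST since each label fits in $O(\log n)$ bits. Using the decoding procedure, both $u$ and $v$ then compute $\ell(w)$ where $w=\mathrm{LCA}(u,v)$. If $w=u$ or $w=v$ (detectable by comparing labels), the edge is already between an ancestor and a descendant and is retained as a single edge of $G'$; the descendant among $\{u,v\}$ designates itself as the simulator and already knows the LCA labels of both endpoints. Otherwise, $e$ is replaced in $G'$ by the two virtual edges $\{u,w\}$ and $\{v,w\}$, which are simulated by $u$ and $v$ respectively; each of $u,v$ stores the LCA label of its own endpoint together with the just-computed label $\ell(w)$ of the other endpoint of its virtual edge.

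The correctness of this construction, and the fact that the resulting $G'$ agrees with the virtual graph defined in~\cite{khuller1993approximation,DBLP:conf/opodis/Censor-HillelD17}, is immediate from the definition of $G'$ and from the correctness of the LCA labeling scheme. The round complexity is $O(D+\sqrt{n}\log^{*}n)$ for the label construction plus $O(1)$ additional rounds for the pairwise exchanges along non-tree edges, which matches the claimed bound. I do not foresee a genuine obstacle here: the only nontrivial ingredient is the distributed LCA labeling, which is already provided as a black box by the prior work cited in the excerpt, and the rest of the construction is purely local given the labels.
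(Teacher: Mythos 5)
Your proposal is correct and follows essentially the same route as the paper: invoke the distributed LCA labeling construction of Censor-Hillel and Dory (based on Alstrup et al.) in $O(D+\sqrt{n}\log^{*}n)$ rounds, then let the endpoints of each non-tree edge exchange their $O(\log n)$-bit labels in one round and locally compute the LCA label to determine and simulate the virtual edges. The only difference is that you spell out the one-round label exchange and the ancestor-descendant test explicitly, which the paper leaves implicit.
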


During our algorithm we use the LCA labels to replace the original ids of vertices. This is useful for computing some tasks. For example, given the labels of a tree edge $t=\{v,p(v)\}$ where $p(v)$ is the parent of $v$, and given the labels of a virtual edge $e=\{anc,dec\}$ where $anc$ is an ancestor of $dec$, it is easy to check if $e$ covers $t$. We simply check if $anc$ is an ancestor of $p(v)$ and $v$ is an ancestor of $dec$, which can be deduced from the LCA labels: a vertex $v$ is an ancestor of $u$ iff $LCA(u,v)=v$. This is summarized in the following observation.

\begin{observation} \label{obs_labels}
Given the labels of a tree edge $t$ and a non-tree edge $e$ between an ancestor to its descendant, we can determine whether $e$ covers $t$. 
\end{observation}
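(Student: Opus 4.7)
The plan is to reduce the covering check to two ancestor--descendant tests, each of which is answerable directly from LCA labels.

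First, I would recall that a non-tree edge $e=\{u',v'\}$ covers a tree edge $t=\{v,p(v)\}$ if and only if $t$ lies on the unique tree path $P_{u',v'}$. Under the hypothesis, $e=\{anc, dec\}$ with $anc$ an ancestor of $dec$, so $P_{anc,dec}$ is simply the chain of ancestors from $dec$ up to $anc$. Writing $t=\{v,p(v)\}$, the edge $t$ lies on this chain if and only if both endpoints $v$ and $p(v)$ lie on it, i.e., $anc$ is a (non-strict) ancestor of $p(v)$ and $v$ is a (non-strict) ancestor of $dec$. So the task reduces to evaluating the two ancestor predicates $\mathrm{anc} \preceq p(v)$ and $v \preceq \mathrm{dec}$.

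Next, I would invoke the fact recalled just before the observation, namely that from the LCA labels of two vertices $x,y$ one can compute the label of $LCA(x,y)$, and that $x$ is an ancestor of $y$ iff $LCA(x,y)=x$. Thus each ancestor test is decidable purely from the labels of the two vertices involved. The holder of $t$ knows the labels of $v$ and $p(v)$, and the holder of $e$ knows the labels of $anc$ and $dec$; once these four labels are available, both tests can be performed locally without any further communication, and $e$ covers $t$ iff both tests succeed.

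The only mildly subtle point, which I would be careful to state explicitly, is the edge case where $v = dec$ or $p(v) = anc$ (the covering edge shares an endpoint with the tree edge). Under the convention that $x$ is an ancestor of itself, $LCA(x,x)=x$ handles this uniformly, so no separate case analysis is needed. There is no real obstacle here — the statement is essentially a bookkeeping consequence of the LCA-labelling scheme combined with the ancestor--descendant orientation of $e$; the only thing to get right is the equivalence between ``$t\in P_{anc,dec}$'' and the conjunction of the two ancestor tests, which follows from the totally ordered structure of a root-to-descendant path.
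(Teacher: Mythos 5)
Your proof is correct and follows exactly the paper's argument: the paper also reduces the covering test to checking that $anc$ is an ancestor of $p(v)$ and that $v$ is an ancestor of $dec$, using the fact that a vertex $x$ is an ancestor of $y$ iff $LCA(x,y)=x$, which is computable from the LCA labels. Your extra remark on the shared-endpoint edge case is a harmless refinement of the same approach.
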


\subsection{Computing aggregate functions} \label{sec:agg}

To implement our algorithm, we need the following building blocks. 
First, all the non-tree edges simultaneously should learn an aggregate function of the tree edges they cover.
Second, all the tree edges simultaneously should learn an aggregate function of the non-tree edges that cover them. 
We next show that these two building blocks can be implemented in $O(D+\sqrt{n})$ rounds.
The aggregate functions we consider are commutative functions with inputs and outputs of $O(\log{n})$ bits, for example: minimum, sum, etc. The important property we exploit is that we can apply these functions on different parts of the inputs separately and then combine the results. For example, when computing a sum of inputs, we can divide the inputs arbitrarily to several sets, compute the sum in each set and then sum the results.

We next explain how we compute the aggregate functions.
The computation is similar to \cite{DBLP:conf/podc/Dory18}. The main difference is that we work with virtual edges and not with the original edges of the graph, and we show that it is possible to extend the algorithm from \cite{DBLP:conf/podc/Dory18} to deal with virtual edges. The computation uses a decomposition of the tree into segments described in \cite{DBLP:conf/podc/Dory18}, which is a variant of a decomposition presented for solving the FT-MST problem \cite{ghaffari2016near}.
We start by describing the decomposition, and then explain how we use it to compute aggregate functions.

\subsubsection{Overview of the decomposition} \label{sec:dec}

Here we give a high-level overview of the decomposition, for full details see \cite{DBLP:conf/podc/Dory18, ghaffari2016near}.
The decomposition breaks the tree into $O(\sqrt{n})$ \emph{edge-disjoint} segments, each with diameter $O(\sqrt{n})$. 
Each segment $S$, has a root $r_S$ which is an ancestor of all the vertices in the segment. In addition, there is a special vertex $d_S$ which is called the \emph{unique descendant} of the segment. The structure of the segment is as follows. It has a path between $r_S$ and $d_S$, which is called the \emph{highway} of the segment, and it has additional subtrees attached to the highway. A crucial ingredient of the decomposition is that $r_S$ and $d_S$ are the \emph{only} vertices in the segment $S$ that may be contained in other segments, and other vertices in $S$ are not connected by a tree edge to any other segment. The id of a segment is the pair $(r_S,d_S).$ 
The \emph{skeleton tree} $T_S$ is a virtual tree with $O(\sqrt{n})$ vertices that captures the structure of the decomposition, as follows. For each vertex that is either $r_S$ or $d_S$ in one of the segments, there is a vertex in $T_S$, and the edges in $T_S$ correspond to the highways of the segments. I.e., there is an edge $\{u,v\} \in T_S$ where $u$ is a parent of $v$ iff $u=r_S$ and $v=d_S$ for some segment $S$.

In \cite{DBLP:conf/podc/Dory18}, it is proved that the segments can be constructed \emph{deterministically} in $O(D+\sqrt{n}\log^*{n})$ rounds.
In addition, the two following claims are proved in \cite{DBLP:conf/podc/Dory18}, where $P_{u,v}$ is the unique tree path between $u$ and $v$.

\begin{claim} \label{info1}
In $O(D+\sqrt{n})$ rounds, the vertices learn the following information. All the vertices learn the id of their segment, and the complete structure of the skeleton tree. In addition, each vertex $v$ in the segment $S$ learns all the edges of the paths $P_{v,r_S}$ and $P_{v,d_S}$.
\end{claim}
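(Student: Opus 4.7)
The plan is to obtain the three pieces of information in three stages that can be pipelined on top of the decomposition construction. The input is the segment decomposition itself, which by \cite{DBLP:conf/podc/Dory18} can be built deterministically in $O(D+\sqrt{n}\log^* n)$ rounds; the construction procedure already marks, for every vertex, which segment it belongs to and identifies the root $r_S$ and unique descendant $d_S$ of that segment. Thus, part~(1) of the claim (each vertex knows the id $(r_S,d_S)$ of its own segment) is essentially free from the decomposition itself.

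For part~(2), I would globally broadcast a description of the skeleton tree $T_S$ to every vertex. The skeleton tree has only $O(\sqrt{n})$ vertices and $O(\sqrt{n})$ edges, and each edge is encoded by the pair $(r_S,d_S)$ using $O(\log n)$ bits. I would first build a global BFS tree of $G$ in $O(D)$ rounds, elect one representative per segment (say $d_S$), and then pipeline all $O(\sqrt{n})$ segment descriptors to the BFS root and back down to every vertex. Standard pipelined upcast/downcast along a BFS tree delivers $k$ messages of $O(\log n)$ bits in $O(D+k)$ rounds, giving $O(D+\sqrt{n})$ here, after which every vertex holds the complete structure of $T_S$.

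For part~(3), I would have each vertex $v$ in a segment $S$ learn the edges of $P_{v,r_S}$ and $P_{v,d_S}$ by a local pipelined computation confined to $S$. Crucially, the decomposition guarantees that, except for the shared boundary vertices $r_S$ and $d_S$, no tree edge of $S$ lies in any other segment. Each segment has diameter $O(\sqrt{n})$, and each of the two paths in question has length $O(\sqrt{n})$. I would therefore let each tree edge inside $S$ announce itself and pipeline these announcements both upward toward $r_S$ along the subtree structure of $S$, and downward along the highway toward $d_S$. Since there are only $O(\sqrt{n})$ edges to propagate across a tree of diameter $O(\sqrt{n})$, pipelining on tree edges finishes in $O(\sqrt{n})$ rounds, after which each $v$ knows $P_{v,r_S}$ and $P_{v,d_S}$.

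The main obstacle, and what makes this non-trivial beyond the naive statement, is congestion at the boundary vertices $r_S$ and $d_S$, which are simultaneously part of several segments and may need to relay information for all of them. Because segments are edge-disjoint, however, each tree edge is used by the internal pipeline of a single segment, so the parallel pipelines of different segments do not interfere on their internal edges. At a shared endpoint like $r_S$ or $d_S$, only $O(1)$ segments meet (the highway above plus the highways just below), so the overlap introduces only an $O(1)$ slowdown; the subtree attached to the highway feeds into a unique highway and therefore causes no additional congestion. Combining the three stages, the total round complexity is $O(D+\sqrt{n})$ as claimed.
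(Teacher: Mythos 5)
The claim you are proving is not actually proved in this paper: it is imported verbatim from Dory [PODC'18] (the paper only states that Claims \ref{info1} and \ref{info2} are proved there), so your attempt is compared against the standard construction. Your parts (1) and (2) are fine and match that construction: the segment id comes with the decomposition, and the skeleton tree has only $O(\sqrt{n})$ edges of $O(\log n)$ bits each, so a pipelined upcast/downcast over a BFS tree gives it to everyone in $O(D+\sqrt{n})$ rounds.

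The genuine problem is in part (3): the routing you describe does not deliver the required information. For $v$ to learn the edges of $P_{v,r_S}$, the announcement of an edge $t$ must travel \emph{downward} from $t$ to all of $t$'s descendants inside the segment; pipelining announcements ``upward toward $r_S$ along the subtree structure'' only informs ancestors about edges below them, which nobody needs for $P_{v,r_S}$. Symmetrically, for $P_{v,d_S}$ the highway edges below $v$ (or below the highway vertex $x$ at which $v$'s subtree attaches) must travel \emph{up} the highway toward $r_S$, and then, for non-highway vertices, be re-broadcast \emph{down} into the attached subtrees --- a two-stage relay your scheme omits entirely; as written, a vertex in a subtree hanging off the highway learns neither the edges above it nor the highway portion of its path to $d_S$. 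The fix is routine and preserves your complexity claim: each tree edge of a segment relays only the $O(\sqrt{n})$ announcements of edges on a path of length at most the segment diameter, so congestion and dilation per segment are both $O(\sqrt{n})$ and all segments run in parallel on disjoint edge sets. Two smaller inaccuracies: a vertex $r_S$ can be the root of \emph{many} segments, not $O(1)$ of them --- this is harmless only because congestion in $\mathsf{CONGEST}$ is per edge and the segments are edge-disjoint, which is the argument you should make instead; and the decomposition bounds the \emph{diameter} of a segment by $O(\sqrt{n})$, not its number of edges, so ``only $O(\sqrt{n})$ edges to propagate'' should be replaced by the per-edge congestion bound above.
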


\begin{claim} \label{info2}
Assume that each tree edge $t$ and each segment $S$, have some information of $O(\log{n})$ bits, denote them by $m_t$ and $m_S$, respectively. In $O(D+\sqrt{n})$ rounds, the vertices learn the following information. 
Each vertex $v$ in the segment $S$ learns the values $(t,m_t)$ for all the tree edges in the highway of $S$, and in the paths $P_{v,r_S},P_{v,d_S}$. In addition, all the vertices learn all the values $(S,m_S)$. 
\end{claim}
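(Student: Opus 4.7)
The plan is to split the dissemination into three independent sub-tasks and bound each by $O(D+\sqrt{n})$ rounds using standard pipelined broadcast and convergecast primitives, exploiting the fact that every segment has diameter $O(\sqrt{n})$, every highway has length $O(\sqrt{n})$, and there are only $O(\sqrt{n})$ segments in total.

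First, for the global piece (all vertices learn every $(S, m_S)$), I would designate the root $r_S$ of each segment as the owner of $m_S$. Build a BFS tree of $G$ in $O(D)$ rounds and pipeline all $O(\sqrt{n})$ owner--message pairs up to the BFS root, then pipeline the full list down to every vertex. Since messages are $O(\log n)$ bits and there are $O(\sqrt{n})$ of them, standard pipelined upcast/downcast on a BFS tree gives a complexity of $O(D + \sqrt{n})$ rounds.

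Second, for the highway edges of each segment $S$: the highway is a path of $O(\sqrt{n})$ edges from $r_S$ to $d_S$, with each edge's value $m_t$ known to an endpoint. I would pipeline-convergecast all $O(\sqrt{n})$ values $(t, m_t)$ along the highway to $r_S$ in $O(\sqrt{n})$ rounds, and then pipeline-broadcast them through the segment's subtree (which has diameter $O(\sqrt{n})$ and is rooted at $r_S$) in another $O(\sqrt{n})$ rounds. Since segments are edge-disjoint in $T$ and each vertex serves as $r_S$ or $d_S$ in only $O(1)$ segments, the per-segment work parallelizes with at most constant-factor congestion.

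Third, for the paths $P_{v,r_S}$ and $P_{v,d_S}$: by Claim \ref{info1}, vertex $v$ already knows the identities of the edges on these paths, so only the $m_t$ values need to be delivered. For $P_{v,r_S}$, run a pipelined downcast inside the segment's subtree rooted at $r_S$: at each round, each vertex passes to each of its children a single $m_t$ value corresponding to the next ancestor edge, so that after $O(\sqrt{n})$ rounds every vertex at depth up to $O(\sqrt{n})$ in the segment has received the full sequence of $m_t$ values on its ancestor-path to $r_S$. For $P_{v,d_S}$, decompose the path into (i) the upward portion from $v$ to the highway, which is a prefix of $P_{v,r_S}$ and hence already delivered by the previous step, and (ii) the downward portion along the highway to $d_S$, which is already delivered by the highway dissemination of step two. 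So no extra work is needed.

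The main subtlety is controlling congestion where segments share endpoints and ensuring that the pipelined tree broadcasts of $O(\sqrt{n})$ tokens along a tree of depth $O(\sqrt{n})$ really run in $O(\sqrt{n})$ rounds. This is handled by the standard observation that such a pipelined dissemination takes $O(\text{depth} + \text{number of tokens})$ rounds, combined with the edge-disjointness of segments and the fact that any vertex participates as a boundary vertex ($r_S$ or $d_S$) of only $O(1)$ segments, so running the in-segment primitives in parallel over all segments incurs only a constant-factor overhead. Summing the three sub-tasks yields the claimed $O(D+\sqrt{n})$ round complexity.
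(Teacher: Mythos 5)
The paper does not actually prove this claim itself: it imports Claims~\ref{info1} and \ref{info2} from the decomposition of Dory [PODC'18] (building on Ghaffari--Parter), so there is no in-paper proof to compare against. Your reconstruction is the natural one and is essentially correct: a pipelined upcast/downcast of the $O(\sqrt{n})$ pairs $(S,m_S)$ over a BFS tree gives $O(D+\sqrt{n})$; within each segment, a pipelined convergecast of the $O(\sqrt{n})$ highway values to $r_S$ followed by a pipelined broadcast down the segment (depth $O(\sqrt{n})$, $O(\sqrt{n})$ tokens) gives $O(\sqrt{n})$; and a pipelined ancestor-path downcast delivers the $m_t$ values on $P_{v,r_S}$, after which $P_{v,d_S}$ is indeed covered for free, since for $v$ hanging off the highway at vertex $x$ the path $P_{v,d_S}$ splits into $P_{v,x}$ (a prefix of $P_{v,r_S}$) and $P_{x,d_S}$ (a highway subpath), and $v$ can identify the relevant edges via Claim~\ref{info1}. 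One small inaccuracy: your statement that every vertex is a boundary vertex ($r_S$ or $d_S$) of only $O(1)$ segments is false in general---a vertex can be the root $r_S$ of many segments (the paper explicitly allows this in Section~\ref{sec:layers}). This does not create a gap, because the parallel in-segment routines only use tree edges of their own segment, the segments are edge-disjoint, and \CONGEST bandwidth is per edge, so edge-disjointness alone already justifies running all segments in parallel; you should drop the per-vertex $O(1)$ claim and rely solely on that.
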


\subsubsection{Computing aggregate functions of tree edges}

In this section, we explain how all the non-tree edges simultaneously learn an aggregate function of tree edges they cover: an edge $e$ learns an aggregate function of the edges $S_e$. We prove the following.

\begin{claim} \label{agg_tree}
Assume that each tree edge $t$ has some information $m_t$ of $O(\log{n})$ bits, and let $f$ be a commutative function with output of $O(\log{n})$ bits. In $O(D+\sqrt{n})$ rounds, each non-tree edge $e$, learns the output of $f$ on the inputs $\{m_t\}_{t \in S_e}.$ 
\end{claim}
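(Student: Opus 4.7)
The plan is to reduce the per-edge aggregate to a constant number of sub-aggregates that align with the segment decomposition, and then compose them. Fix a non-tree edge $e=\{u,w\}$ of $G'$ with $u$ the descendant-simulator and $w$ its ancestor; then $S_e$ is exactly the tree edges on $P_{u,w}$. Along the segments, $P_{u,w}$ naturally splits into three pieces: (a) a local piece inside $u$'s own segment $S_u$ from $u$ up to either $w$ (if $w\in S_u$) or to $r_{S_u}$; (b) a sequence of complete highways of intermediate segments on the skeleton-tree path from $r_{S_u}$ upward toward $w$; and (c) a suffix of the highway of $S_w$ from $d_{S_w}$ up to $w$. By commutativity of $f$, the overall aggregate is the composition of the three piece-aggregates.

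First I invoke Claims \ref{info1} and \ref{info2} so that every vertex $v$ knows the skeleton tree, its segment id, and the $m_t$-values on $P_{v, r_{S_v}}$, $P_{v, d_{S_v}}$, and the highway of $S_v$; this takes $O(D+\sqrt{n})$ rounds. Within each segment I then run $O(\sqrt{n})$-round chain aggregations along the highway to compute the highway aggregate $M_S := f(\{m_t : t \in \mathrm{highway}(S)\})$ together with the suffix aggregates $a^{\mathrm{down}}(v)$ at every highway vertex $v$ of $S$ (both in parallel across segments, since segments are edge-disjoint and each has diameter $O(\sqrt{n})$). A second application of Claim \ref{info2} with $m_S := M_S$ disseminates all pairs $(S, M_S)$ to every vertex in $O(D+\sqrt{n})$ rounds. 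With this data in hand, the simulator $u$ computes piece (a) directly from the $m_t$ values on $P_{u, r_{S_u}}$ it already holds, and piece (b) by identifying the intermediate segments along the skeleton-tree path---using the skeleton tree together with the LCA label of $w$, via Observation \ref{obs_labels}---and composing their $M_S$ values.

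The main obstacle is delivering $a^{\mathrm{down}}(w)$ to the simulator $u$ for piece (c): this value is computable inside $S_w$, but $u$ sits below $S_w$, and, crucially, since $e$ is a virtual edge, the ancestor $w$ itself is unaware of $e$---only $u$ holds $w$'s LCA label. I would resolve this by routing $O(\log n)$-bit queries through the decomposition: $u$ uses the LCA label of $w$ to identify $S_w$ and the position of $w$ on that highway, forwards a query up through $S_u$ and along the skeleton tree into $S_w$, where the precomputed $a^{\mathrm{down}}(w)$ is read off and returned along the same route. Queries sharing a destination can be aggregated so that each segment processes at most $O(\sqrt{n})$ distinct requests, and since segments and the skeleton tree both have diameter $O(\sqrt{n})$, a pipelined schedule in the spirit of the propagation underlying Claim \ref{info2} completes all queries in $O(D+\sqrt{n})$ rounds. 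Adapting the aggregation machinery of \cite{DBLP:conf/podc/Dory18}---which previously relied on both endpoints of every non-tree edge being aware of the edge---to this virtual-edge setting, where only one endpoint knows about the edge, is the genuinely new ingredient.
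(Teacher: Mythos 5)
Your setup (segment decomposition, highway aggregates $m_S$, Claims \ref{info1} and \ref{info2}, splitting $P_{u,w}$ into a local piece, full highways, and a top partial piece) matches the paper's proof, but your treatment of the top piece (c) has a genuine gap. You propose to fetch $a^{\mathrm{down}}(w)$ by routing $O(\log n)$-bit queries from $u$ up through the segments and skeleton tree into $S_w$ and back. The congestion of this routing is not controlled by your argument: queries to the same destination $w$ can be merged, but the \emph{answers} are distinct per (requester, destination) pair and cannot be aggregated, and a single vertex $u$ may need answers for up to $\Theta(\min(\deg_G(u),\mathrm{depth}(u)))$ distinct ancestors $w$ (e.g., when the tree is a long path and $u$ has many neighbors above it). All of these answers must be delivered to $u$ along tree/highway routes, so the edges entering $u$'s segment become a bottleneck and the schedule can require $\Omega(n)$ rounds, not $O(D+\sqrt{n})$; your claim that ``each segment processes at most $O(\sqrt{n})$ distinct requests'' bounds destinations inside a segment but not the transit traffic through it.

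The paper avoids this problem entirely by a structural observation you did not use: if $w=LCA(u,v)$ lies in a segment containing neither $u$ nor $v$, then, since $r_S$ and $d_S$ are the only vertices of a segment incident to tree edges of other segments and the two paths $P_{r_u,w},P_{r_v,w}$ enter $w$'s segment through different edges, $w$ must be a skeleton-tree vertex --- so in that case there is no partial top piece at all, only full highways. A proper partial piece can therefore occur only when $w$ lies in the segment of the \emph{other physical endpoint} of the original edge (or of $u$ itself, which is locally trivial; the ancestor--descendant edges of $G$ kept in $G'$ are the degenerate case $w=v$). Since by Claim \ref{info2} every vertex knows the $(t,m_t)$ values on its own segment's highway, that endpoint can compute the aggregate on $P_{d_S,w}$ locally and hand it to the simulator in a single round over the physical edge $\{u,v\}$ of $G$. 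This one-message-per-edge exchange replaces your global query routing and is what makes the $O(D+\sqrt{n})$ bound go through; to repair your proof you should replace the routing step with this case analysis (or otherwise supply a congestion/pipelining argument for the routing, which in the stated generality does not hold).
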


In our algorithm, some of the non-tree edges are virtual. As explained in  Section \ref{sec:virtual}, the descendant of the corresponding virtual edge simulates the edge. When we say that a non-tree edge learns some information, the vertex that simulates the virtual edge learns the information. We next prove Claim \ref{agg_tree}.

\begin{proof}
Let $h_S$ be all the edges in the highway of the segment $S$, and let $m_S$ be the $O(\log{n})$-bit output of $f$ on the inputs $\{m_t\}_{t \in h_S}$. The value $m_S$ can be computed locally in the segment $S$ in $O(\sqrt{n})$ rounds by scanning $h_S$. Using Claims \ref{info1} and \ref{info2}, in $O(D+\sqrt{n})$ rounds, all the vertices learn the id of their segment, the complete structure of the skeleton tree and all the values $(S,m_S)$. In addition, each vertex $v$ in the segment $S$ learns the values $(t,m_t)$ for all the tree edges in the highway of $S$, and in the paths $P_{v,r_S},P_{v,d_S}$. 

Let $e=\{u,v\}$ be a non-tree edge in the original graph $G$. In the virtual graph $G'$, either $e$ is in the graph, which means that it is an edge between a descendant to an ancestor, or the edge $e$ is replaced with the two edges $\{u,w\},\{v,w\}$ where $w=LCA(u,v).$ We next show that $u$ learns all the information needed to evaluate $f$ on the edges in the tree path $P_{u,w}$, which is exactly the path of tree edges covered by the virtual edge $\{u,w\}$, and $v$ has all the information to evaluate $f$ on the edges in the tree path $P_{v,w}$. Exactly in the same way, if $e \in G'$, and assume w.l.o.g that $u$ is a descendant of $v$, then $u$ learns all the information needed to evaluate $f$ on the edges in the tree path $P_{u,v}$ (now $v = LCA(u,v)$). In all the cases, the vertex that simulates the non-tree edge learns the output of $f$, as needed.
The proof is divided to 3 cases. The vertices $u$ and $v$ can distinguish between the cases from knowing the id of their segment, their ancestors in the segment and their LCA.\\[-7pt]

\textbf{Case 1:} $u$ and $v$ are in the same segment $S$. In this case, $w=LCA(u,v)$ is also in the segment $S$, and both $u$ and $v$ know the label of $w$ based on their LCA labels. Additionally, the tree path $P_{u,w}$ is contained in the path $P_{u,r_S}$ where $r_S$ is the ancestor of the segment $S$. Vertex $u$ knows all the values $(t,m_t)$ for the edges $t \in P_{u,r_S}$, which allows $u$ computing the output of $f$ on the inputs $\{m_t\}_{t \in P_{u,w}}$. Note that $P_{u,w}$ is exactly the path of tree edges covered by the edge $\{u,w\}$. The same holds for $v$ with respect to $P_{v,w}$, which completes the proof.\\[-7pt]

\textbf{Case 2:} $u$ and $v$ are in different segments, and $LCA(u,v)$ is in another segment. Let $r_u$ and $r_v$ be the ancestors in the segments of $u$ and $v$, respectively. Then, $w=LCA(u,v)=LCA(r_u,r_v).$ We next show that $w$ is necessarily a vertex in the skeleton tree $T_F$. Let $S$ be the segment of $w$. Since $w = LCA(r_u,r_v)$, the first edges entering the segment $S$ in the paths $P_{r_u,w},P_{r_v,w}$ must be different. However, $r_S$ and $d_S$ are the only vertices in $S$ connected to other segments, which implies that $w$ must be $r_S$ or $d_S$. Now, $u$ and $v$ know the complete structure of the skeleton tree, and in particular can deduce the paths between $r_u$ to $w$ and between $r_v$ to $w$ in $T_F$. 
The paths $P_{r_u,w}$ and $P_{r_v,w}$ in the original tree $T$ correspond to the paths in $T_F$, where each edge in $T_F$ is replaced by the corresponding highway in $T$.
Also, for each highway $h_S$, all the vertices know $m_S$, the output of $f$ on tree edges of $h_S$. Hence, $u$ computes the output of $f$ on tree edges in $P_{u,w}$, as follows. The path $P_{u,w}$ consists of $P_{u,r_u}$ and $P_{r_u,w}$. The vertex $u$ knows all the values $(t,m_t)$ for tree edges in $P_{u,r_u}$, and all the values $m_S$ for highways in the path $P_{r_u,w}$, which allows $u$ to compute the output of $f$ on the tree edges in $P_{u,w}$. Similarly, $v$ computes the output of $f$ on the tree edges in $P_{v,w}$.\\[-7pt]

\textbf{Case 3:} $u$ and $v$ are in different segments, and $w=LCA(u,v)$ is in the same segment of one of them. Assume w.l.o.g that $w$ is in the segment of $u$. Now, since $P_{u,w} \subseteq P_{u,r_u}$ and $u$ knows all the values $(t,m_t)$ for tree edges in $P_{u,r_u}$, and it knows the label of $w$ from the LCA labels of $u$ and $v$, then $u$ can compute the output of $f$ on the tree path $P_{u,w}$. In addition, let $d_u$ be the unique descendant in the segment $S$ of $u$. Since $r_u$ and $d_u$ are the only vertices in $S$ connected to other segments, then either $w=LCA(u,v)=r_u$, or $w$ is an ancestor of $d_u$, and hence must be on the highway $h_S$. In the first case, the path $P_{v,w=r_u}$ consists of $P_{v,r_v}$ and $P_{r_v,r_u}$. Vertex $v$ knows full information about the path $P_{v,r_v}$ and the path $P_{r_v,r_u}$ is composed of entire highways, which allows $v$ to compute the value of $f$ on the tree path $P_{v,w}$ as in Case 2. In the second case, the path $P_{v,w}$ consists of the 3 paths $P_{v,r_v},P_{r_v,d_u},P_{d_u,w}$. Again, $v$ has full information about the path $P_{v,r_v}$, and the path $P_{r_v,d_u}$ consists of highways that are known to $v$ from the structure of the skeleton tree. The path $P_{d_u,w}$ is part of the highway $h_S$ in the segment of $u$, and $u$ has all the values $(t,m_t)$ for edges in $h_S$. Since $u$ also knows $w$, it can compute the value of $f$ on the subpath $P_{d_u,w}$ and send it to $v$. Now $v$ has all the information to compute $f$ on the whole path $P_{v,w}$.\\[-7pt]

The time complexity is $O(D+\sqrt{n})$ rounds for using Claims \ref{info1} and \ref{info2}. From now on all the computations are completely local or require sending one message from $u$ to $v$.
\end{proof}

\subsubsection{Computing aggregate functions of non-tree edges}

We next explain how all the tree edges simultaneously compute an aggregate function of all the non-tree edges that cover them, proving the following.

\begin{claim} \label{agg_non_tree}
Assume that each non-tree edge $e$ has some information $m_e$ of $O(\log{n})$ bits, and let $f$ be a commutative function with output of $O(\log{n})$ bits. In $O(D+\sqrt{n})$ rounds, each tree edge $t$, learns the output of $f$ on the inputs $\{m_e\}_{t \in S_e}.$ 
\end{claim}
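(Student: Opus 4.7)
The plan is to mirror the strategy of Claim~\ref{agg_tree}, exploiting the segment decomposition of Section~\ref{sec:dec}, but in the dual direction: rather than letting each non-tree edge collect information along its tree path, I will let each tree edge collect information from the non-tree edges whose paths cross it. The key observation is that a tree edge $t=\{v,p(v)\}$ is covered by a non-tree edge $e=\{u_e,w_e\}$ (with $u_e$ descendant and $w_e$ ancestor) iff $u_e$ lies in the subtree rooted at $v$ and $w_e$ is a strict ancestor of $v$. Since $f$ is commutative and composable from partial aggregates, I can compute $A_t := f(\{m_e : t \in S_e\})$ by partitioning the contributing edges into an intra-segment part and a cross-segment part and then combining.

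First I would invoke Claims~\ref{info1} and~\ref{info2} in $O(D+\sqrt{n})$ rounds, so every vertex knows its segment id, the skeleton tree $T_S$, and the full paths $P_{v,r_S}$ and $P_{v,d_S}$ within its segment $S$. Next, for the local phase: every non-tree edge $e=\{u,w\}$ simulated at $u\in S$ contributes to the tree edges on $P_{u,a_e}$, where $a_e=w$ if $w\in S$ and $a_e=r_S$ otherwise; using the LCA labels of $u$ and $w$ and Observation~\ref{obs_labels}, the vertex $u$ knows exactly which edges of $P_{u,r_S}$ are covered. A bottom-up convergecast inside $S$, in $O(\sqrt{n})$ rounds, produces for every tree edge $t\in S$ the aggregate $A_t^{\text{loc}}$ of $\{m_e : u_e\in S,\ t\in P_{u_e,a_e}\}$.

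For the global phase, the remaining contributions come from non-tree edges $e$ whose simulator $u_e$ lies in a segment $S'$ strictly below $t$'s segment and whose ancestor endpoint $w_e$ is a strict ancestor of $t$. For each such $e$ with $w_e\notin S'$, the ``exported'' portion of its path is $P_{r_{S'},w_e}$, which is determined by the skeleton tree (known to all vertices). Within each segment $S'$, I would compute, for every vertex $y$ of $T_S$ that is an ancestor of $r_{S'}$ in $T_S$, a partial aggregate $G_{S',y}$ over those $m_e$ whose $w_e$ equals $y$ or lies strictly inside the segment whose root in $T_S$ is $y$. Since $T_S$ has $O(\sqrt{n})$ vertices, the list of all $(S',y,G_{S',y})$ triples has size $O(\sqrt{n})\cdot O(\sqrt{n})=O(n)$; however, after combining by skeleton-path suffixes, only $O(\sqrt{n})$ values per segment are needed, and these can be gathered at a leader, combined, and broadcast in $O(D+\sqrt{n})$ rounds using the skeleton tree. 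Finally, within each segment $S_t$, a single linear scan of the highway (plus attached subtrees) distributes the appropriate cross-segment aggregates to each tree edge, which then combines them with $A_t^{\text{loc}}$ via $f$ to obtain $A_t$.

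The main obstacle I anticipate is Step~3: correctly partitioning cross-segment contributions so that each tree edge receives exactly those whose $w_e$ lies strictly above it, and bounding the communicated information by $\widetilde{O}(\sqrt{n})$ per segment despite the fact that $w_e$ may be an interior highway vertex of an ancestor segment rather than a skeleton vertex. The resolution is to aggregate at the granularity of skeleton-tree vertices only, and then within $S_t$ use an additional intra-segment scan along the highway (using the exact LCA label of $w_e$ propagated in the summaries) to discriminate between tree edges on the same highway. With the aggregate function being commutative and partial-aggregatable, combining local and global pieces yields the correct $A_t$, and the total round complexity is $O(D+\sqrt{n})$ as required.
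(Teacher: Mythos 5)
Your local phase (intra-segment convergecast for contributions whose simulator lies in the same segment) and your whole-highway global aggregation match the paper's handling of what it calls short-range and long-range edges. The genuine gap is exactly at the point you flag as your ``main obstacle'': virtual edges $e=\{u_e,w_e\}$ whose simulator $u_e$ sits in a descendant segment while $w_e$ is an \emph{interior} highway vertex of the target segment (the paper's mid-range case with the endpoints in different segments). Your proposed resolution is internally inconsistent. If you aggregate at the granularity of skeleton-tree vertices, the value $G_{S',y}$ mixes edges with different positions of $w_e$ on the highway of $y$'s segment; since $f$ is a lossy aggregate (min, sum, \dots), the target segment cannot later ``discriminate between tree edges on the same highway,'' because the summary cannot carry the individual LCA labels of its constituents without blowing up the message size. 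If instead you keep one aggregate per distinct $w_e$, then in the worst case (e.g., a long path decomposed into $\Theta(\sqrt{n})$ highway segments) there are $\Theta(\sqrt{n})$ target vertices per segment and $\Theta(n)$ keys overall, so ``gathered at a leader, combined, and broadcast'' costs $\Omega(n)$ rounds, not $O(D+\sqrt{n})$; nor do you give any routing mechanism that delivers to each segment only its own $O(\sqrt{n})$ relevant values within the budget.

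The paper closes this case with an idea absent from your proposal: such a virtual edge arises from an \emph{original physical} edge $\{u,w\}$ with $LCA(u,w)=w_e$, and one shows (using that $w_e$ is neither $r_S$ nor $d_S$ of the target segment $S$) that $w$ must lie physically inside $S$. So $u$ sends $m_e$ in a single round across the real edge $\{u,w\}$, injecting the information into the target segment; after that, everything is local to $S$: each hanging subtree $T_x$ off a highway vertex $x$ convergecasts its mid-range contributions to $x$, and a single top-down scan of the highway from $r_S$ to $d_S$ accumulates these so that every highway tree edge receives exactly the aggregate of edges whose $w_e$ lies above it, in $O(\sqrt{n})$ rounds. (Edges with $w_e=r_S$ are folded into the per-highway long-range aggregates, as you also do.) Without this physical-edge injection step, or an equivalent mechanism for transporting per-$w_e$ information into the target segment, your global phase does not deliver correct values to highway tree edges within $O(D+\sqrt{n})$ rounds.
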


Again, the vertex that simulates a non-tree edge $e$, has the information $m_e$.

\begin{proof}
For the proof, we follow the terminology in \cite{ghaffari2016near, DBLP:conf/podc/Dory18}, and classify the non-tree edges that cover a tree edge $t$ to 3 types: \emph{short-range}, \emph{mid-range} and \emph{long-range}, indicating if the non-tree edge $e$ has 2,1 or 0 endpoints in the segment of $t$, respectively. To learn the output of $f$, we compute $f$ separately on the short-range, mid-range and long-range edges that cover $t$, and then apply $f$ again on the results to obtain the final output. We next explain how we compute $f$ on each one of the parts. 
Note that all the non-tree edges in the graph $G'$ are between an ancestor and a descendant, and the descendant is the vertex that simulates the non-tree edge (even for a virtual edge).\\[-7pt]  

\textbf{Computing $f$ on short-range edges.} First, we let each vertex $u$ in a segment $S$ to learn all the tree edges in the tree path $P_{u,r_S}$ in $O(D+\sqrt{n})$ rounds, using Claim \ref{info1}. Note that our algorithm works with LCA labels instead of ids, so in particular $u$ knows all the LCA labels in the path $P_{u,r_S}$.
If $e$ is a short-range edge for the tree edge $t$, then the two endpoints of $e$ are in the segment of $t$. In particular, one of these endpoints, say $u$, is a descendant of $t$ and it is the vertex that simulates the edge $e$. Hence, $u$ has the information $m_e$ and it can send it to $t$. If $u$ simulates several different short-range edges that cover $t$, first it applies $f$ on all the relevant values $m_e$, and then sends only one $O(\log{n})$-bit message to $t$. 
Note that $u$ knows the LCA labels of all the non-tree edges it simulates and it knows all the LCA labels of edges in the tree path $P_{u,r_S}$, which allow to check which of the edges it simulates are short-range edges and which of them cover $t$.
To compute the output of $f$ on all the short-range edges in the segment we use convergecast. Each time a vertex receives from its children an $O(\log{n})$-bit message for $t$, first it applies $f$ on the messages received and its message for $t$ if exists, and then sends the result to its parent. 
In general, each vertex $u$ may send a message to each one of the tree edges in the path $P_{u,r_u}$ where $r_u$ is the ancestor in the segment of $u$. In order that all the vertices in the segment would send such messages to all the tree edges above them in the segment we pipeline the computations. The complexity is $O(\sqrt{n})$ rounds, since this is the diameter of the segments.\\[-7pt]

\textbf{Computing $f$ on long-range edges.} 
First, all the vertices learn the complete structure of the skeleton tree in $O(D+\sqrt{n})$ rounds, using Claim \ref{info1}. In particular, for each segment $S$ all the vertices learn the LCA labels of $r_S$ and $d_S$.
If a tree edge $t$ has a long-range edge that covers it, then $t$ must be on a highway of a segment, since otherwise all its descendants are in the segment. Also, if $e$ is a long-range edge for the tree edge $t$ in the highway $h_S$, then $e$ must be a long-range edge for all the tree edges in $h_S$, since one of the endpoints of $e$ is an ancestor of $r_S$ and the second a descendant of $d_S$ (all the non-tree edges in our graph are between an ancestor and a descendant). Since each vertex $u$ that simulates a non-tree edge $e$ has the LCA labels of $e$ and knows the complete structure of the skeleton tree, it knows exactly which highways are covered by $e$, as follows.
If $u$ is a descendant of the virtual edge $e=\{u,w\}$ and it wants to learn if this edge covers the highway $h_S$, it should check if $w$ is an ancestor of $r_S$ and $d_S$ is an ancestor of $u$, which is immediate using the LCA labels, as explained in Section \ref{sec:virtual}. 
Hence, each vertex $u$ can compute for each one of the $O(\sqrt{n})$ highways $h_S$ the value of $f$ computed on all the inputs $m_e$ where $e$ is an edge that $u$ simulates and is a long-range edge that covers the highway $h_S$. To learn the value of $f$ computed on all the inputs $m_e$ where $e$ is a long-range edge that covers the highway $h_S$ we use convergecast over the BFS tree. To learn these values for all the $O(\sqrt{n})$ highways we use pipelining. The overall complexity is $O(D+\sqrt{n})$ rounds, and at the end all the vertices know the output of $f$ computed on long-range edges that cover $h_S$ for each one of the highways.\\[-7pt]  

\textbf{Computing $f$ on mid-range edges.} Let $t$ be a tree edge in a segment $S$ covered by a mid-range edge $e$. One of the endpoints of $e=\{u,v\}$, say $u$, is a descendant of $t$. One case it that $u$ is in the segment of $t$, and can send the value $m_e$ to $t$. In general, all the tree edges can learn the output of $f$ on mid-range edges that cover them with a descendant in their segment exactly as they learn the output of $f$ on short-range edges. 

We next focus on the case that $u$ is in a different segment, in this case $u$ must be a descendant of $d_S$. Since $e$ is a mid-range edge, it follows that $v$ is in the segment of $t$. However, $e$ could be a virtual edge, simulated by the descendant $u$, and $v$ does not necessarily know about the edge $e$. If $v=r_S$, then the edge $e$ covers the whole highway of $S$, and we can treat it as a long-range edge ($u$ would include the input $m_e$ when it computes $f$ on long-range edges that cover $h_S$). Otherwise, let $e'=\{u,w\}$ be the original edge that was replaced by the virtual edge $e$. From the structure of virtual edges and the decomposition, it holds that $LCA(u,w)=v$, and $w$ is in the segment $S$. This follows since $v$ is a vertex in $S$ which is not $r_S$ or $d_S$, and hence does not have any descendant outside $S$ that is not a descendant of $d_S$. Also, $v$ is on the highway $h_S$ since it is an ancestor of $d_S$. Now, $u$ can send to $w \in S$ the value $m_e$ (this requires exchanging one message over the original edge $\{u,w\}$), and we would like $w$ to let $t$ learn the value $m_e$. 

This is done as follows. For a vertex $x \neq \{d_S,r_S\}$ on the highway $h_S$, $x$ has exactly one child $y$ on the highway, we denote by $T_x$ the subtree of $x$ in $S$ excluding the edge $\{x,y\}$ and the subtree rooted at $y$. All the vertices $w' \in T_x$, have $LCA(w',d_S)=x$, and hence any original edge of the form $\{w',u'\}$ where $w' \in T_x$ and $u'$ is a descendant of $d_S$ in another segment, translates to the virtual edge $\{x,u'\}$, and in particular is a mid-range edge for all the edges in the path $P_{x,d_S}$. We would like first to let $x$ learn the output of $f$ computed on the inputs $m_e$ where $e=\{u',w'\}$ is of the form described above. This is done by scanning the tree $T_x$, where each vertex $w' \in T_x$ computes $f$ on the mid-range edges it knows and on the outputs it receives from its children in the tree. Note that the trees $T_x$ are disjoint for different vertices $x$, hence the computations are done in parallel for different trees. Now, we scan the highway $h_S$, from $r_S$ to $d_S$. Edges that cover the whole highway were already treated as global edges, so we start with the child of $r_S$, call it $x$. Vertex $x$ knows the value of $f$ on all mid-range edges that cover $P_{x,d_S}$ and passes this information to its child $x'$. The value that $x$ computed is exactly the value of $f$ relevant for the tree edge $t=\{x',x\}$. For the tree path $P_{x',d_S}$, any mid-range edge that covers it either covers also the path $P_{x,d_S}$ (all the edges that cover $P_{x,d_S}$ cover also the subpath $P_{x',d_S}$) or covers only the subpath $P_{x',d_S}$, in this case it has an original endpoint $w' \in T_{x'}$.
Since $x'$ knows the value of $f$ computed on all mid-range edges that cover the tree path $P_{x',d_S}$ that have one original endpoint in $T_{x'}$, and $x'$ received from $x$ the value of $f$ computed on all mid-range edges that cover the tree path $P_{x,d_S}$, computing $f$ on these two outputs results in the output of $f$ on all mid-range edges that cover the tree path $P_{x',d_S}$. We continue in the same way, and eventually all the tree edges in $h_S$ learn the relevant output of $f$. This is done in all the segments in parallel which takes $O(\sqrt{n})$ rounds, since the diameter of each segment is $O(\sqrt{n})$.

At the end, each tree edge knows the value of $f$ computed on short-range, mid-range and long-range edges that cover it. Computing $f$ on the outputs, results in the final output of $f$. The whole computation takes $O(D+\sqrt{n})$ rounds.    
\end{proof}

\subsection{Decomposing the tree into layers} \label{sec:layers}

Our algorithm uses a decomposition of the tree into layers, that is described also in \cite{agarwal2018set}. We show how to compute the decomposition and work with it in the $\mathsf{CONGEST}$ model. Following the terminology in \cite{agarwal2018set}, we say that a vertex is a \emph{junction} if it has more than one child in the tree.
The first layer is composed of all the tree edges in the tree paths between a leaf to its lowest ancestor that is a junction, or to the root if there is no junction in the tree. After computing the first layer, we can contract all the paths in the first layer, and repeat the process on the contracted graph, now with respect to the leaves and junctions in the contracted graph. This gives us the second layer. We repeat in the same manner until the whole graph is contracted to one vertex. See Figure \ref{layering} for an illustration. We next discuss simple observations about the layering.

\setlength{\intextsep}{0pt}
\begin{figure}[h]
\centering
\setlength{\abovecaptionskip}{-2pt}
\setlength{\belowcaptionskip}{6pt}
\includegraphics[scale=0.5]{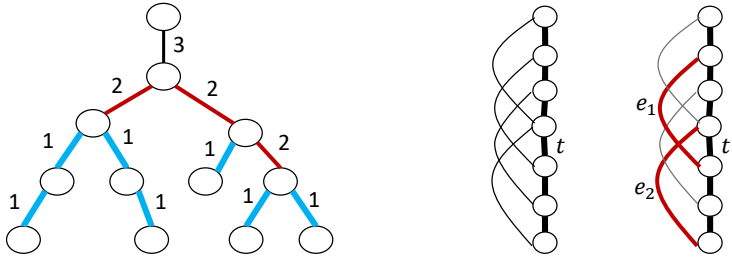}
 \caption{An illustration of the layering. On the left, there is a tree decomposed into layers. On the right, there is a tree (the path with bold edges) and non-tree edges that cover it, with an example of a tree edge $t$ and its two petals $e_1,e_2$.}
\label{layering}
\end{figure}

\begin{claim}
There are $O(\log{n})$ layers in the decomposition.
\end{claim}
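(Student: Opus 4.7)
The plan is to track the number of leaves of the successively contracted trees $T_i$ and show this quantity at least halves with each layer. Let $L_i$ denote the number of leaves of $T_i$, where $T_1 = T$, so $L_1 \leq n$. I claim that $L_{i+1} \leq L_i / 2$, which immediately yields $L_{1 + \lceil \log_2 n\rceil} \leq 1$, meaning $T_{i+1}$ has collapsed to a single vertex (or a trivial path) after $O(\log n)$ layers, and the decomposition terminates.

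For the inductive step, I would use the structural observation quoted informally in the text preceding the claim: every leaf $v$ of $T_{i+1}$ is the image, under the contraction of layer-$i$ paths, of a junction $u$ of $T_i$. Indeed, if $u$ were not a junction in $T_i$ (i.e., $u$ had only one child in $T_i$), then either $u$'s subtree in $T_i$ is a pure chain to a leaf (and $u$ itself gets absorbed, so it cannot be a leaf of $T_{i+1}$) or $u$ has a unique junction descendant that survives the contraction (leaving $u$ with a child in $T_{i+1}$, contradicting that it is a leaf). Therefore $u$ is a junction of $T_i$, and moreover \emph{all} of $u$'s children in $T_i$ must have subtrees that get entirely contracted up into $u$ during the formation of $T_{i+1}$ — otherwise $u$ would retain a child in $T_{i+1}$. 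Each such child-subtree of $u$ being entirely contracted into $u$ means it is a pure chain that ends at a leaf of $T_i$ whose first junction ancestor is $u$.

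Since $u$ is a junction, it has at least two children, and so at least two distinct leaves of $T_i$ are mapped to $u$ by the "first-junction-ancestor" map. Crucially, this mapping is a function on leaves of $T_i$ (each leaf has a unique first junction ancestor, or the root if none exists), so the preimages of distinct leaves of $T_{i+1}$ are disjoint subsets of the leaves of $T_i$, each of size at least $2$. Summing over all leaves of $T_{i+1}$ gives $2 L_{i+1} \leq L_i$, as required.

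Iterating this bound, $L_i \leq n / 2^{i-1}$, so after at most $\lceil \log_2 n \rceil + 1$ layers the tree $T_i$ has at most one leaf and the decomposition halts, yielding $O(\log n)$ layers. The main point to be careful about is the correspondence between leaves of $T_{i+1}$ and junctions of $T_i$ together with the disjointness of preimages; once that is pinned down, the halving bound and the $O(\log n)$ conclusion are immediate.
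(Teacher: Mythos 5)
Your proposal is correct and uses essentially the same argument as the paper: every leaf of the contracted tree $T_{i+1}$ corresponds to a junction of $T_i$ and hence to at least two leaves of $T_i$ with disjoint preimages, so the leaf count halves each layer and $O(\log n)$ layers suffice. You simply spell out the junction/preimage correspondence in more detail than the paper's brief proof does.
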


\begin{proof}
Each time we contract the graph, the number of leaves decreases by a factor of 2. This happens since any leaf in the contracted graph is an ancestor of at least two leaves in the previous graph. Hence, after at most $O(\log{n})$ iterations we are left with a single path, in one additional iteration the process terminates.
\end{proof}

From the above description, each layer is composed of disjoint paths in the tree, where the first layer includes all the paths between leaves to their first ancestor that is a junction, and so on. We next show that any non-tree edge between an ancestor to its descendant can cover edges in at most one of these paths in each layer.

\begin{claim} \label{claim_intersect}
Let $e$ be an edge between an ancestor to its descendant in the tree. Then, the edges that $e$ covers intersect at most one path in each layer.
\end{claim}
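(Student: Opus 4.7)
Proof plan: I plan to pass to the contracted tree $T_i$ obtained by collapsing the paths of layers $1,\ldots,i-1$, and argue directly there. Since $e$ covers the vertical tree path $P_{u,v}$ and contracting chains inside a vertical path again yields a vertical path, the image $P'$ of $P_{u,v}$ in $T_i$ is again vertical, and its edges are exactly the edges of $P_{u,v}$ of layer $\geq i$. The layer-$i$ paths in $T$ are precisely the pendant paths of $T_i$, i.e.\ maximal vertical paths from a leaf of $T_i$ up to its first junction-ancestor in $T_i$ (or to the root); any two such paths are edge-disjoint and can meet only at a shared top junction.

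The heart of the argument is a monotonicity lemma I would establish: once $P'$ reaches an ancestor junction $y$ of $T_i$, every edge of $P'$ strictly above $y$ lies in some layer $>i$. An edge $(a,p(a))$ of $T_i$ belongs to a layer-$i$ pendant path precisely when $a$ is an intermediate or a leaf of such a path; in particular $a$ must not be a junction of $T_i$. The vertex $y$ itself fails this test, so $(y,p(y))$ is not in layer $i$. To iterate, note that $p(y)$ has $y$ among its children in $T_i$: either $p(y)$ has several children in $T_i$ (so is itself a junction), or $y$ is its unique child in $T_i$ --- but then any pendant path having $p(y)$ as intermediate would be forced to pass through the junction $y$, which is impossible. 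So $(p(y),p(p(y)))$ is in a layer $>i$, and the same reasoning extends inductively to every edge above $y$ on $P'$.

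Combining, the layer-$i$ edges of $P'$ form a contiguous initial segment at its bottom vertex $v'$: if $v'$ is itself a junction of $T_i$ then there are no layer-$i$ edges on $P'$; otherwise $v'$ lies on a unique pendant path $Q$ of $T_i$ and the layer-$i$ portion of $P'$ is exactly the sub-path of $Q$ from $v'$ up to whichever of $u'$ or the top of $Q$ is lower. In either case, all layer-$i$ edges of $P_{u,v}$ lie within a single layer-$i$ path of $T$, which is the desired conclusion.

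I expect the main obstacle to be the monotonicity step. The subtle point is that $p(y)$ could have just $y$ as its child in $T_i$ and so, naively, look like an eligible intermediate of a layer-$i$ pendant path; the extra structural input needed to rule this out is precisely that the junction $y$ cannot itself lie on any pendant path as intermediate or leaf, so no pendant path passing through $p(y)$ as intermediate can in fact exist.
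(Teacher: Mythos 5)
Your argument is sound in its core and rests on exactly the same fact that the paper's (much terser) proof invokes: along any leaf-to-root path the layer numbers are non-decreasing, so the layer-$i$ edges of the vertical path $P_{u,v}$ form a contiguous block lying inside a single layer-$i$ path. The paper simply asserts this monotonicity in one sentence ("if $t_1,t_2$ are on the same leaf-to-root path with $t_2$ closer to the root, then $layer(t_2)\geq layer(t_1)$, and consecutive same-layer edges belong to the same path of that layer") and concludes; you prove it from the definitions, by passing to the contracted tree $T_i$, identifying layer-$i$ paths with pendant paths of $T_i$, and showing that once the image path $P'$ meets a junction of $T_i$ every edge above it lies in a layer $>i$. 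So this is not a different route so much as a careful expansion of the step the paper takes for granted; your monotonicity lemma is correct (the clean invariant behind your induction is that an edge $(a,p(a))$ of $T_i$ is in layer $i$ iff the subtree of $T_i$ rooted at $a$ contains no junction, and once a junction sits below, it sits below every ancestor).

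One small flaw in your wrap-up: the dichotomy "either $v'$ is a junction of $T_i$, or $v'$ lies on a unique pendant path $Q$" is not exhaustive. A non-junction vertex $v'$ need not lie on any pendant path, namely when the subtree of $T_i$ rooted at $v'$ contains a junction strictly below $v'$ (e.g., $v'$ has a single child which is a junction); then the edge from $v'$ to its parent is not in layer $i$ even though $v'$ is not a junction. In that case the conclusion still holds trivially --- by your own lemma (applied to the junction below $v'$, or directly to the subtree-junction criterion) no edge of $P'$ is in layer $i$ at all --- so the patch is immediate: either add this third case, or anchor the argument at the lowest layer-$i$ edge of $P'$ (which by definition lies on some pendant path $Q$) rather than at $v'$, and then argue as you do that all layer-$i$ edges of $P'$ lie on $Q$. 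With that adjustment the proof is complete.
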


\begin{proof} 
Let $layer(t)$ be the layer number of the tree edge $t$.
If $t_1$ and $t_2$ are in the same leaf to root path in the tree where $t_2$ is closer to the root, it follows that $layer(t_2) \geq layer(t_1)$. Hence, each leaf to root path in the tree is grouped into consecutive tree edges in the same layer, all of them are part of the same path in this layer. Since a non-tree edge between an ancestor to its descendant covers a part of a leaf to root path in the tree, the claim follows. 
\end{proof}

Let $X$ be a subset of non-tree edges, we say that the two tree edges $t$ and $t'$ are \emph{neighbours} with respect to $X$ if there is a non-tree edge in $X$ that covers both $t$ and $t'$.
In our algorithm, all the non-tree edges are between ancestors and descendants, which is crucial for the following section.

\paragraph{The petals of a tree edge.}
The layering is useful in our algorithm for the following reason. For each tree edge $t$, if we look at a subset of non-tree edges that cover it, we can replace all of them by only two non-tree edges that cover all the neighbours $t'$ of $t$ with $layer(t') \geq layer(t)$. These non-tree edges are the \emph{petals} of $t$. We next formalize this notion and explain how the petals are defined. 

Let $X$ be a set of non-tree edges, and let $t$ be a tree edge in layer $i$ covered by $X$, where $p(t) \subseteq X$ is the set of non-tree edges that cover $t$. We next show that we can replace $p(t)$ by two non-tree edges in $p(t)$ that cover all the tree edges covered by $p(t)$ in layers $i,...,L$. We call these two non-tree edges the \emph{petals} of $t$ in $p(t)$.
The petals of $t$ are defined as follows. For simplicity, we start by describing the petals of tree edges $t$ in the first layer. Each layer is composed of disjoint paths. Let $t=\{v,p(v)\}$ be an edge in the first layer in the path $P$. Any non-tree edge that covers $t$ is of the form $\{anc,dec\}$ where $anc$ is an ancestor of $p(v)$, and $dec$ is a descendant of $v$. The edge that gets to the highest ancestor is the first petal of $t$, and the edge that gets to the lowest descendant is the second petal. Note that since $t$ is in the first layer, the subtree rooted at $v$ is just a path, and all the edges that cover $t$ have a vertex which is a descendant of $v$ in this path, hence we can compare them. See Figure \ref{layering} for an illustration. 

For a tree edge $t$ in layer $i$, we define the first petal accordingly, this is just the non-tree edge that gets to the highest ancestor. For the second petal, we cannot just define it as the non-tree edge that gets to the lowest descendant, since not all the non-tree edges that cover $t$ cover the same leaf to root path, as happens in the first layer. To overcome this, we define the second petal as follows. Let $P$ be the tree path in layer $i$ where $t \in P$. We call the lowest vertex in this path, the \emph{leaf} of the path, and denote it by $leaf(P)$. Also, each tree edge is part of a path $P$ is some layer $i$, we denote by $leaf(t)$ the leaf of $P$ where $t \in P$. Let $e=\{anc,dec\}$ be an edge between an ancestor $anc$ to its descendant $dec$ that covers $t=\{v,p(v)\}$. Then, $dec$ is a descendant of $v \in P$, let $u_e=LCA(leaf(t),dec)$. Note that since $dec$ has an ancestor in $P$, and $leaf(t)$ is the lowest vertex in $P$, the vertex $u_e$ is in $P$. For all the non-tree edges $e$ that cover $t$ we can compute the value $u_e$. The second petal is defined to be the non-tree edge $e$ where $u_e$ is the lowest vertex in $P$ among the vertices $u_e$ computed (since all the vertices $u_e$ are in $P$ we can compare them). We next prove that the petals defined really satisfy the required properties.   

\begin{claim}
Let $X$ be a set of non-tree edges, and let $t$ be a tree edge in layer $i$ covered by $X$, where $p(t) \subseteq X$ is the set of non-tree edges that cover $t$. Then, the petals of $t$ cover all the tree edges covered by $p(t)$ in layers $i,...,L$. 
\end{claim}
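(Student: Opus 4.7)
The plan is to fix an arbitrary $e = \{anc, dec\} \in p(t)$ and an arbitrary tree edge $t' = \{u, p(u)\}$ with $layer(t') \geq i$ that $e$ covers, then show that the first or second petal of $t$ also covers $t'$. Writing $t = \{v, p(v)\}$, both $t$ and $t'$ lie on the single ancestor-to-descendant path $P_{anc,dec}$, so assuming $t'\neq t$ (the case $t'=t$ being trivial since the petals are themselves in $p(t)$), either $t'$ sits strictly above $t$ on this path (hence $u$ is a strict ancestor of $v$) or strictly below (hence $u$ is a strict descendant of $v$). I will handle these two sub-cases separately, using $e_1$ for the above-case and $e_2$ for the below-case.

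For the above-case, the argument is direct. The first petal $e_1 = \{anc_1, dec_1\}$ has $anc_1$ chosen as the highest possible ancestor endpoint among edges in $p(t)$, so $anc_1$ is an ancestor of $anc$, which is an ancestor of $p(u)$ (because $e$ covers $t'$); hence $anc_1$ is an ancestor of $p(u)$. On the other side, $dec_1$ is a descendant of $v$ (because $e_1$ covers $t$), and $v$ is a descendant of $u$ (because $t'$ is above $t$), so $dec_1$ is a descendant of $u$. Both endpoint conditions for coverage hold, so $e_1$ covers $t'$.

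The below-case is the main obstacle, because a priori $t'$ could lie anywhere in the subtree of $v$. I will first show that $t'$ must actually lie on the layer-$i$ path $P$ containing $t$. By Claim \ref{claim_intersect}, layers are non-decreasing along any leaf-to-root path, and since $t$ and $t'$ share such a path, $layer(t') \leq layer(t) = i$; combined with the hypothesis $layer(t') \geq i$ this forces $layer(t') = i$ and $t' \in T_i$. The structural key is then that $P$ is a maximal layer-$i$ path in $T_i$, so the lower endpoint $v$ of $t$ has at most one child in $T_i$, namely the next vertex of $P$ below $v$; consequently the whole subtree of $v$ in $T_i$ is the tail of $P$ from $v$ down to $leaf(P)$, and $t'$ must be an edge of $P$ strictly between $v$ and $leaf(P)$.

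It then remains to verify that the second petal $e_2 = \{anc_{e_2}, dec_{e_2}\}$ covers such a $t'$, which I will do using the defining property that $u_{e_2} = LCA(leaf(t), dec_{e_2})$ is the lowest vertex on $P$ among $\{u_e\}_{e \in p(t)}$. For our fixed $e$, both $leaf(P)$ and $dec$ are descendants of $u$ (the former because $u$ lies above $leaf(P)$ on $P$, the latter because $e$ covers $t'$), so $u_e$ lies on $P$ weakly below $u$. By the minimizing property, $u_{e_2}$ lies weakly below $u_e$ on $P$, hence also below $u$, which forces $dec_{e_2}$ (a descendant of $u_{e_2}$ by the LCA definition) to be a descendant of $u$. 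Finally, $anc_{e_2}$ is an ancestor of $p(v)$ (since $e_2$ covers $t$), and $p(v)$ is an ancestor of $p(u)$ (since $u$ is a strict descendant of $v$, so $p(u)$ is weakly below $v$ while $p(v)$ is strictly above $v$). Both endpoint conditions hold, so $e_2$ covers $t'$, finishing the proof.
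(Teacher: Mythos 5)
Your proof is correct and follows essentially the same route as the paper: split on whether $t'$ lies above or below $t$ on the ancestor--descendant path of the covering edge, handle the above-case via the maximality of the higher petal's ancestor endpoint, and for the below-case show $t'$ lies on the same layer-$i$ path $P$ and invoke the minimality of $u_{e_2}$ defining the lower petal. You simply spell out in more detail (the $T_i$-structure argument for $t'\in P$ and the explicit LCA comparisons) what the paper's proof states more tersely by citing Claim~\ref{claim_intersect} and the lower-petal definition.
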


\begin{proof}
Let $t'$ be a neighbour of $t$ with respect to $X$ with $layer(t') \geq i$, and let $e=\{anc,dec\} \in X$ be a non-tree edge that covers $t$ and $t'$. Since $e$ is between an ancestor to its descendant, $t$ and $t'$ are in the same leaf to root path in the tree. If $t'$ is closer to the root in this path, the higher petal of $t$ covers $t'$, because $e$ is a tree edge that covers $t$ and $t'$, and the higher petal is the edge that covers $t$ and gets to the highest ancestor possible, which is at least $anc$. We are left with the case that $t$ is closer to the root. Hence, $layer(t') \leq layer(t)=i$. Since $layer(t') \geq i$, $t$ and $t'$ are in the same layer and are in the same root to leaf path, which means that $t' = \{v',p(v')\}\in P$, where $P$ is the path in layer $i$ where $t \in P$. Since $e$ covers $t'$ and $t$, there is an edge that covers the subpath between $v'$ to $t$. From the definition of the lower petal, it must cover this subpath, and maybe additional edges in $P$ below it, which completes the proof.
\end{proof}

\subsubsection*{Computing the layers}

We next describe how we compute the layers in $\widetilde{O}(D+\sqrt{n})$ time. All the tree edges $t$ would learn $layer(t)$ and $leaf(t)$, and all the vertices would learn if they are the highest or lowest vertices in a path $P$ in some layer, we call these vertices the root or leaf vertices of $P$. Additionally, we show how all the tree edges can compute their petals with respect to a subset of non-tree edges $X$.

\begin{claim} \label{claim_layer}
In $O((D+\sqrt{n})\log{n})$ rounds, all the tree edges $t$ know $layer(t)$ and $leaf(t)$, and for each $1 \leq i \leq L$, all the vertices know if they are the leaf or root vertices of a path $P$ in the layer $i$.
\end{claim}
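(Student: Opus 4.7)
The plan is to compute the layering iteratively, one layer at a time: since there are $L = O(\log n)$ layers, it suffices to show that each layer can be constructed in $O(D+\sqrt{n})$ rounds.

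The workhorse primitive I would use is \emph{subtree aggregation} on $T$: given a value at each vertex, compute at every vertex $u$ an associative and commutative aggregate (logical OR, minimum, sum, etc.) over the vertices in $u$'s $T$-subtree. Such aggregates can be performed in $O(D+\sqrt{n})$ rounds using the segment decomposition of Section~\ref{sec:dec}. Within each segment of diameter $O(\sqrt{n})$, a local bottom-up convergecast (pipelined along the highway and auxiliary subtrees) produces segment-level summaries; these are combined via the globally known skeleton tree (Claim~\ref{info1}) in $O(D+\sqrt{n})$ rounds; and the combined values are distributed back. This reuses the same pipelining ideas as in the proof of Claim~\ref{agg_tree}, but is in fact simpler because only the tree $T$ is involved (no virtual non-tree edges).

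For the $i$-th iteration, each vertex $u$ carries a flag $\sigma_i(u)$ indicating whether $u$ survives to the contracted tree $T_i$. Initially $\sigma_1 \equiv 1$, and after processing layer $i-1$ we set $\sigma_i(u)=1$ iff $u$ is the root of $T$ or $u$ was a junction in $T_{i-1}$ (both locally decidable from data already gathered). I then run a constant number of subtree aggregates: (a) one marking, for each surviving $u$, whether its $T_i$-subtree contains a junction of $T_i$, which tells every tree edge $\{u,p(u)\}$ whether it lies on a layer-$i$ path; (b) one equipping each layer-$i$ edge with $\mathrm{leaf}(t)$, the unique leaf of $T_i$ in $u$'s $T$-subtree (well defined since that subtree is a single path in $T_i$); and (c) bookkeeping so that every vertex learns whether it is the root or leaf endpoint of a layer-$i$ path ($u$ is a layer-$i$ leaf iff it is surviving and has no surviving strict descendants; $u$ is a layer-$i$ root iff it is the root of $T$ or a junction of $T_i$ and has at least one child in $T_i$ whose $T_i$-subtree is a single path). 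All of these derivations are local given the aggregate results.

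The main point to verify is that working over the contracted tree $T_i$ reduces cleanly to computations over $T$: this is handled by letting non-surviving vertices contribute the identity element in every aggregate, so the fixed segment decomposition of $T$ never needs to be rebuilt. Every relevant quantity---junction/leaf status of a surviving vertex in $T_i$, the leaf endpoint of a layer-$i$ path, whether an edge has been assigned a layer---can be phrased as a subtree aggregate on $T$ that ignores $\sigma_i = 0$ vertices. Each layer therefore costs $O(D+\sqrt{n})$ rounds, and iterating over $L = O(\log n)$ layers yields the claimed $O((D+\sqrt{n})\log n)$ round complexity.
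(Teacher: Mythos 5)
Your overall plan is the same as the paper's: process the $O(\log n)$ layers one at a time, and implement each iteration with $O(1)$ subtree aggregates (junction-in-subtree, minimum/unique leaf in subtree) computed in $O(D+\sqrt{n})$ rounds via the segment decomposition, with already-contracted vertices simply contributing identity values so the decomposition is never rebuilt. That part is fine and matches the paper, which likewise lets contracted vertices and fully contracted segments drop out of the computation rather than recomputing anything.

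However, your bookkeeping rule for the contracted tree is wrong as stated, and it is the step the whole iteration hangs on. You set $\sigma_i(u)=1$ iff $u$ is the root of $T$ or $u$ was a junction in $T_{i-1}$. But the vertices that survive contraction of layer $i-1$ are \emph{not} just the junctions: every vertex whose parent edge has not yet been assigned a layer survives (equivalently, every vertex whose subtree contains a junction of $T_{i-1}$). Concretely, take a ``broom'': a root $r$, a path $r,v_1,\dots,v_k$, then a junction $j$ with two leaf children. Layer $1$ consists only of the two edges at $j$; the vertices $v_1,\dots,v_k$ all survive to $T_2$ and their parent edges belong to layer $2$, yet none of them is a junction of $T_1$, so your rule sets $\sigma_2(v_\ell)=0$. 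Since your aggregates (a) and (b) are run ``for each surviving $u$'' and ignore $\sigma_i=0$ vertices, the edges $\{v_{\ell-1},v_\ell\}$ would never learn that they lie on a layer-$2$ path, nor their value $leaf(t)$, so the output is incorrect for every layer $\geq 2$. The fix is minor and brings you back to the paper's scheme: define $\sigma_i(u)=1$ iff $u$ is the root or the edge $\{u,p(u)\}$ has not been assigned to layers $1,\dots,i-1$ --- which is exactly the information your aggregate (a) of the previous iteration already provides (the parent edge is unlayered iff $u$'s subtree in the current contracted tree contains a junction). With that correction, junction status in $T_i$, leaf/root status, and $leaf(t)$ are locally decidable as you describe, and the $O((D+\sqrt{n})\log n)$ bound follows as claimed.
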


\begin{proof}
Our algorithm proceeds in $O(\log{n})$ iterations, according to the layers. We start by describing the first iteration. First, all the vertices learn if there is a junction in the subtree rooted at them. For this, we use the decomposition described in Section \ref{sec:dec}. First, in each segment locally we use a simple convergecast to solve the task: basically we scan the segment from its leaves to the root, where each vertex informs its parent if in the subtree rooted at it there is a junction. This takes $O(\sqrt{n})$ rounds since this is the diameter of segments. Now, all the vertices in the virtual tree, which are ancestors or descendants in a segment, let all the vertices in the graph know if there is a junction in the subtree rooted at them in the segment, or if they are junctions (this is relevant for the case that there is a vertex which is the root of several segments). This takes $O(D+\sqrt{n})$ rounds using Claim \ref{info2}. Also, all the vertices know the complete structure of the skeleton tree in $O(D+\sqrt{n})$ rounds using Claim \ref{info1}.
A vertex $v$ learns if in the subtree $T_v$ rooted at it there is a junction based on the local and global information. If in the local part it learns that there is a junction in the subtree rooted at $v$ in its segment, it already has an answer. Otherwise, from the structure of the skeleton tree $v$ knows which segments are descendants of it (it has such descendants only if it is a vertex on the highway of the segment), and if one of them has a junction in the subtree rooted at it in the segment or is a junction in the skeleton tree. This allows $v$ to learn if there is a junction in $T_v$.
In a similar way, all the vertices learn what is the label of the minimum leaf in the subtree rooted at them (recall that in our algorithm we use LCA labels of vertices instead of ids). This again requires computing an aggregate function from the leaves to the root, that can be computed similarly.

Given this information, we compute the first layer, as follows. A tree edge $t=\{v,p(v)\}$ is in the first layer iff in the subtree rooted at $v$ there is no junction. In such case there is only one leaf in the subtree rooted at $v$ which is the leaf $v$ learns about, and this is $leaf(t)$. The leaves of the first layer are exactly the leaves of the tree, and the roots of a path are all the junctions $p(v)$, where the tree edge $\{v,p(v)\}$ is in the first layer. The leaves of the second layer are all the junctions where all the tree edges from them to their children are in the first layer. Since all the edges know if they are in the first layer, we can contract all the edges of the first layer to obtain a new tree, and then run the same algorithm on this tree to compute the second layer. The new leaves in this tree are the leaves of the second layer and junctions are vertices that at least two of the edges from them to their children are not in the first layer. The vertices and edges that were contracted do not participate in the algorithm. As for the skeleton tree, it may be the case that complete segments are contracted, in this case the root of the segment knows about it and can pass the information to the rest of the graph. However, if a segment is not completely contracted, from the definition of the layers, all the segments above it are not fully contacted as well. Hence, the algorithm works with a contracted version of the original skeleton tree that is known to all the vertices. We run on the contracted graph exactly the same algorithm we run for computing the first layer, which allows us computing the second layer. We continue in the same manner to compute all the $O(\log{n})$ layers, since each iteration takes $O(D+\sqrt{n})$ rounds, the complexity is $O((D+\sqrt{n})\log{n})$ rounds.  
\end{proof}

\begin{claim} \label{claim_petals}
For each $1 \leq i \leq L$, in $O(D+\sqrt{n})$ rounds all the tree edges in layer $i$ learn their petals with respect to a subset of non-tree edges $X$.
\end{claim}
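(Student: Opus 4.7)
The plan is to reduce the computation of both petals to aggregate functions over tree edges or non-tree edges, each of which can be computed in $O(D+\sqrt{n})$ rounds by Claims \ref{agg_tree} and \ref{agg_non_tree}. The whole point of the definition of petals, together with Claim \ref{claim_intersect}, is that each non-tree edge interacts with layer $i$ through at most one path $P$, so the information each non-tree edge needs is a single $O(\log n)$-bit value per layer, which fits into the aggregate framework we have already built.

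For the \emph{higher petal}, let each non-tree edge $e=\{anc,dec\} \in X$ hold the LCA label of $anc$ as its value, and let other non-tree edges in $E \setminus T$ hold a null marker. Using Claim \ref{agg_non_tree} with $f$ defined to return the label of the highest ancestor (comparing two ancestor labels is immediate from the LCA labels, as noted in Section \ref{sec:virtual}), every tree edge $t$ learns the identity of the edge in $X$ covering it whose ancestor endpoint is closest to the root. Tree edges in layer $i$ record this as their higher petal; tree edges in other layers simply discard the answer.

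For the \emph{lower petal}, we first make each non-tree edge $e$ covering a layer-$i$ tree edge learn $leaf(P)$ for the unique path $P$ in layer $i$ that $e$ intersects. To do this, invoke Claim \ref{agg_tree} where each tree edge $t$ with $layer(t)=i$ contributes the LCA label of $leaf(t)$ (already known from Claim \ref{claim_layer}) and every other tree edge contributes a null marker; the aggregation returns any non-null value. By Claim \ref{claim_intersect}, all layer-$i$ tree edges that $e$ covers lie on the same path $P$ and hence share the same $leaf$ value, so the output is unambiguously $leaf(P)$. Each such $e=\{anc,dec\}$ now locally computes $u_e = LCA(leaf(P),dec)$ from the LCA labels it holds. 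Finally, we run Claim \ref{agg_non_tree} a second time with each $e$ contributing $(u_e,\text{id}(e))$, aggregated by ``return the pair whose $u_e$ is the descendant-most vertex''; since all candidate $u_e$'s lie on $P$, any two of them are comparable via their LCA labels ($u_e$ is a descendant of $u_{e'}$ iff $LCA(u_e,u_{e'})=u_{e'}$). Each tree edge $t$ in layer $i$ records the winner as its lower petal.

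The delicate point is ensuring that the single $O(\log n)$-bit value each non-tree edge $e$ broadcasts in the $leaf(P)$ computation is well-defined, because in principle $e$ could cover many layer-$i$ tree edges in different paths; Claim \ref{claim_intersect} rules this out, so the reduction to the aggregate primitives is sound. Each of the three aggregate computations costs $O(D+\sqrt{n})$ rounds, and the local LCA computations are free, yielding the claimed $O(D+\sqrt{n})$-round bound for each layer $i$.
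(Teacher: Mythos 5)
Your proposal is correct and follows essentially the same route as the paper's own proof: the higher petal via one aggregate over the covering non-tree edges, and the lower petal by first letting each non-tree edge learn the (unique, by Claim \ref{claim_intersect}) value $leaf(P)$ via an aggregate over covered tree edges, then locally computing $u_e$ from LCA labels, and finally aggregating over covering edges to pick the descendant-most $u_e$. The only cosmetic difference is that the paper aggregates the minimum $leaf(t)$ value whereas you return any non-null value, which are equivalent here since all contributed values coincide.
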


\begin{proof}
The first petal that covers a tree edge $t$ is the non-tree edge $e \in X$ that covers $t$ that gets to the highest ancestor. This is simply an aggregate function of the non-tree edges that cover $t$, which can be computed in $O(D+\sqrt{n})$ rounds using Claim \ref{agg_non_tree}. Note that since we work with the LCA labels of edges, it is easy to compare two non-tree edges that cover $t$, with ancestors $u_1,u_2$. To check which is the highest ancestor we just compute $LCA(u_1,u_2)$. 

To compute the second petal, we work as follows. Using Claim \ref{claim_layer}, all the tree edges $t$ know $layer(t)$ and $leaf(t)$. Let $t$ be a tree edge in layer $i$. For a non-tree edge $e$ that covers $t$, we would like $e$ to learn $leaf(t)$. Notice that by Claim \ref{claim_intersect}, $e$ covers a part of at most one path in layer $i$. Hence, it needs to learn only one value $leaf(t)$ for layer $i$. This is an aggregate function of the tree edges that $e$ covers: it needs to learn about the minimum value $leaf(t)$ for all the tree edges it covers in layer $i$. This takes $O(D+\sqrt{n})$ rounds using Claim \ref{agg_tree}. Now, if $e=\{anc,dec\}$, it computes $u_e=LCA(dec,leaf(t))$, which can be computed easily from the labels of $dec$ and $leaf(t)$. The lower petal is the edge $e$ with the lowest descendant $u_e$. This is an aggregate function of the values $(e,u_e)$ where $e$ are the non-tree edges in $X$ that cover $t$. Again, using the LCA labels we can compute which is the lower descendant between $u_{e_1},u_{e_2}$ where $e_1,e_2$ are two non-tree edges that cover $t$. All the values $u_e$ that are relevant for $t$ are comparable from the definition of the lower petal. Computing an aggregate function takes $O(D+\sqrt{n})$ rounds, which completes the proof.
\end{proof}

\subsection{The forward phase} \label{sec:forward-imp}

As explained in Section \ref{sec:agg}, in $O(D+\sqrt{n})$ rounds, all the non-tree edges simultaneously can learn an aggregate function of the tree edges they cover. In addition, in $O(D+\sqrt{n})$ rounds, all the tree edges simultaneously can learn an aggregate function of the non-tree edges that cover them.  
We now show that using these building blocks we can implement the forward phase in $O((D+\sqrt{n})\frac{\log^2{n}}{\epsilon})$ rounds.

\begin{lemma}
The forward phase takes $O((D+\sqrt{n})\frac{\log^2{n}}{\epsilon})$ rounds.
\end{lemma}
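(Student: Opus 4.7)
The plan is to multiply three quantities: the number of epochs (one per layer), the number of iterations inside each epoch, and the round cost of a single iteration. The number of layers is $L = O(\log n)$ by the construction in Section \ref{sec:layers}, so there are $O(\log n)$ epochs. I then need to argue each iteration costs $O(D+\sqrt{n})$ rounds and each epoch contains at most $O(\log n / \epsilon)$ iterations; the product gives exactly the claimed bound.

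For the per-iteration cost, I would go through the operations of one iteration and check that each quantity is an aggregate either over ``tree edges covered by a fixed non-tree edge'' (Claim \ref{agg_tree}) or over ``non-tree edges covering a fixed tree edge'' (Claim \ref{agg_non_tree}). In epoch $k$: every non-tree edge $e$ needs $s(e) = \sum_{t \in S_e} y(t)$ and the count $|S_e^k|$ of tree edges in $R_k \cap S_e$, both sums/counts over $S_e$ handled by Claim \ref{agg_tree}; every tree edge $t \in R_k$ then computes its new value $\min_{e : t \in S_e} (w(e) - s(e))/|S_e^k|$, which is a minimum over covering edges handled by Claim \ref{agg_non_tree}; each $e$ refreshes $s(e)$ to detect whether its dual constraint became tight (another Claim \ref{agg_tree} invocation); finally, each tree edge determines if it is now covered by $A$ via the OR of ``$e$ was just added'' over its covering edges (Claim \ref{agg_non_tree}). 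Since a constant number of aggregates is used per iteration, the round cost is $O(D+\sqrt{n})$.

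For the iteration count inside a single epoch, I would argue geometrically. Any $t \in R_k$ that is not covered by the initial dual update has $y(t)$ set to a positive value $y_0$. Because edge weights fit in $O(\log n)$ bits and $|S_e^k| \le n$, we have $y_0 \ge 1/\poly(n)$. Thereafter $y(t)$ is multiplied by $(1+\epsilon)$ each iteration, so after $\log_{1+\epsilon}(w_{\max}/y_0) = O(\log n / \epsilon)$ iterations the value exceeds $w_{\max} \le \poly(n)$. At that point, for every non-tree edge $e$ covering $t$, $\sum_{t' \in S_e} y(t') \ge y(t) > w(e)$, forcing $e$ to be tight and covering $t$. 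Hence an epoch terminates within $O(\log n / \epsilon)$ iterations.

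The main obstacle is the per-iteration distributed implementation: the underlying bipartite covering graph between tree edges and non-tree edges is never materialized in the network and, in our case, even involves virtual edges, so the entire argument hinges on recognizing that each primal--dual communication pattern is expressible as an aggregate over the covering relation, allowing us to invoke the building blocks of Section \ref{sec:agg}. Once that is in place, the iteration-count bound is a standard consequence of the multiplicative $(1+\epsilon)$ update together with the polynomial bound on edge weights, and the three factors combine to $O((D+\sqrt{n})\log^2{n}/\epsilon)$ rounds.
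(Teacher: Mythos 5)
Your decomposition of the bound (number of epochs $\times$ iterations per epoch $\times$ cost per iteration) and your per-iteration implementation argument match the paper: each quantity needed in an iteration ($s(e)$, $|S^k_e|$, the minimum $\min_{e: t\in S_e}\frac{w(e)-s(e)}{|S^k_e|}$, tightness, and coverage) is a constant number of aggregates over the covering relation, handled by Claims~\ref{agg_tree} and~\ref{agg_non_tree} in $O(D+\sqrt{n})$ rounds (the paper additionally spends $O(D)$ per iteration on a BFS-tree check of whether layer $k$ is fully covered, a minor point). The gap is in your bound of $O(\frac{\log n}{\epsilon})$ iterations per epoch. You assert that $y_0 \ge 1/\poly(n)$ ``because edge weights fit in $O(\log n)$ bits and $|S^k_e|\le n$,'' but $y_0=\min_{e: t\in S_e}\frac{w(e)-s(e)}{|S^k_e|}$ has numerator $w(e)-s(e)$, the residual slack of a dual constraint that was raised in earlier epochs. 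The values $s(e)$ are sums of dual variables of the form $(1+\epsilon)^j\cdot\frac{w(e')-s(e')}{|S^{k'}_{e'}|}$ accumulated over previous epochs; these are rationals whose denominators compound across epochs, and nothing in the weight bound prevents a non-tight constraint from having slack $w(e)-s(e)$ far smaller than $1/\poly(n)$. So the starting value $y_0$ has no polynomial lower bound, and your count $\log_{1+\epsilon}(w_{\max}/y_0)$ is not $O(\log n/\epsilon)$ as stated.

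The paper avoids needing any absolute lower bound on $y_0$ by arguing \emph{relative to the minimizing edge}: if $e^*$ attains the minimum, then $y_0\cdot|S^k_{e^*}|=w(e^*)-s(e^*)$, so as soon as $y(t)\ge y_0\cdot|S^k_{e^*}|$ the single variable $y(t)$ contributes the entire remaining slack and $e^*$'s constraint becomes tight (dual variables never decrease, so $s(e^*)$ has not dropped). Since $|S^k_{e^*}|\le n$, the required multiplicative growth factor is at most $n$, giving $O(\log_{1+\epsilon} n)=O(\frac{\log n}{\epsilon})$ iterations per epoch regardless of how small $y_0$ is. Replacing your ``grow past $w_{\max}$'' argument with this relative one closes the gap; the rest of your proof then combines with it exactly as in the paper to give $O((D+\sqrt{n})\frac{\log^2 n}{\epsilon})$ rounds.
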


\begin{proof}
In each iteration of the forward phase we work with tree edges from a certain layer $k$. All the vertices would know which layer we process in the current iteration, at the beginning $k=1$. In addition, all the tree edges know their layer and during the algorithm they would know if they are already covered or not. At the beginning, all the tree edges are not covered. We next explain how to implement one iteration, assuming that at the beginning all the tree edges know if they are covered or not (they would learn this information at the end of the previous iteration).

At the beginning of the first iteration in epoch $k$, all the non-tree edges $e$ compute the current value of the dual variable $s(e)=\sum_{t \in S_e} y(t)$. This is clearly an aggregate function of the tree edges they cover, and hence takes $O(D+\sqrt{n})$ rounds by Claim \ref{agg_tree}. Next, the non-tree edges learn the value $|S^k_e|=|R_k \cap S_e|$, which is the number of tree edges they cover that are in layer $k$ and are still not covered. Since all the tree edges know if they are covered and if they are in layer $k$, this is the sum of indicator variables $\sum_{t \in S_e} z_t$, where $z_t=1$ if $t \in R_k$ and $z_t=0$ otherwise, which is clearly an aggregate function of $S_e$. Next, each tree edge in $R_k$, sets $y(t) = min_{t \in S_e} \frac{w(e)-s(e)}{|S^k_e|}$. Note that all the non-tree edges $e$, know $w(e)$ and computed $s(e), |S^k_e|$. Hence, the tree edges need to learn an aggregate function of the non-tree edges that cover them, the minimum value $\min_{t \in S_e} \frac{w(e)-s(e)}{|S^k_e|}$. Then, we add to the augmentation $A$ edges $e$ if their dual constraint becomes tight, which is done by computing again $s(e)=\sum_{t \in S_e} y(t)$, and checking whether $s(e) \geq w(e).$ Now, all the tree edges should learn if they are covered or not. This is again an aggregate function of the non-tree edges that cover them: they need to learn if at least one of these edges is in the set $A$, which can be done for example by summing the number of edges in $A$ from the edges that cover each tree edge. After each edge knows if it is covered or not, all the vertices can learn in $O(D)$ rounds if there is at least one edge in layer $k$ that is still not covered by communicating over a BFS tree. If all the edges in layer $k$ are covered, we move to the next layer, and otherwise we continue epoch $k$.
In this case, in the next iteration, all the tree edges in layer $k$ that are still not covered increase their dual variable by a multiplicative factor of $(1+\epsilon)$, which is a completely local task. Again, we check which dual constraints become tight to know which non-tree edges are added to $A$, then each tree edge learns if it is covered, and all vertices learn if there is still an edge in layer $k$ that is not covered. We continue in the same manner until all the edges in layer $k$ are covered.

Since each iteration is based on computing a constant number of aggregate functions, each iteration takes $O(D+\sqrt{n})$ rounds using Claims \ref{agg_tree} and \ref{agg_non_tree}. The number of layers is $O(\log{n})$, and in each layer there are at most $O(\frac{\log{n}}{\epsilon})$ iterations, as follows. Let $t \in R_k$, we show that after at most $O(\frac{\log{n}}{\epsilon})$ iterations $t$ is covered. At the first iteration in epoch $k$, we set $y(t) = min_{t \in S_e} \frac{w(e)-s(e)}{|S^k_e|}$, and in each iteration we increase $y(t)$ by a multiplicative factor of $(1+\epsilon)$. Let $y_0$ be the initial value of $y(t)$. Note that if we reach a stage where $y(t) = y_0 \cdot |S^k_e|$, the dual constraint of the edge $e$ becomes tight, since $y(t)$ contributes $w(e)-s(e)$, where $s(e)$ was the value of the dual constraint before epoch $k$. Now, since $|S^k_e| \leq n$, and in the $i$'th iteration of epoch $k$, $y(t)=(1+\epsilon)^{i}y_0$ (unless $t$ is already covered), after $O(\frac{\log{n}}{\epsilon})$ iterations the dual constraint of $e$ becomes tight, and it is added to $A$, unless $t$ is already covered before. To conclude, the complexity of the forward phase is $O((D+\sqrt{n})\frac{\log^2{n}}{\epsilon})$ rounds.
\end{proof}

\subsection{The reverse-delete phase} 

\remove{
\subsubsection{Correctness proof} \label{sec:rev-del-cor}

\paragraph{Correctness proof.}
The correctness proof follows \cite{agarwal2018set}, we include it for completeness.

\begin{lemma} \label{correct_lemma}
At the end of epoch $k$:
\begin{enumerate}
\item All the tree edges in $F = \cup_{i=k}^{L} F_i$ are covered by $B$.\label{cover_F} 
\item All the edges in $R_i$ for $i \geq k$ are covered at most 4 times.\label{cover_R} 
\end{enumerate}
\end{lemma}

\begin{proof}
The proof is by induction on $k$. In the base case, $k=L+1$ and $B=\emptyset$, hence the claim clearly holds.
We start by proving \ref{cover_F}.
From the description of iteration $i$, at the end of iteration $i$, $Y$ covers $H_i$: All the edges in $H_i \setminus \widetilde{H}_i$ are already covered at the beginning of the iteration. All the edges in $\widetilde{H}_i$ are either anchors or have a neighbouring anchor $t \in M_i$, and then they are covered by the petals of $t$. Hence, after all the iterations, $B=Y$ covers $\cup_{i=k}^{L} H_i = F$.

We next prove \ref{cover_R}. We start with a simple observation: all the anchors in all the layers, $\cup_{i=k}^{L} M_i$, are independent in the following sense. If we take any two anchors $t_1,t_2$, there is no edge in $X$ that covers both of them. In the same layer, it follows from computing an MIS. In different layers $i<j$, it follows since the graph $G_j$ contains only tree edges that are not yet covered, and we add the petals of all the anchors in $M_i$ at the end of iteration $i$.

We next show that any tree edge $t \in R_i$ for $i \geq k$ is covered at most 4 times by $Y$. As explained earlier, for $i>k$ this already follows from previous epochs. This holds since $Y \subseteq X$, and $X=B \cup A_k$ (with the set $B$ at the end of the previous epoch). Now, at the end of epoch $k+1$, all the edges in $R_i$ for $i>k$ are covered at most 4 times by $B$ from the induction hypothesis. Also, from the definition of the sets, $A_k$ does not cover any edge in $R_i$ for $i>k$. Hence, any set $Y \subseteq X$ we choose covers all the edges in $R_i$ for $i>k$ at most 4 times. 

We now show that any tree edge $t \in R_k$ is covered at most 4 times by $Y$. If $t$ is an anchor, it is covered only by its petals, since the anchors are independent, and hence it is covered at most twice. Otherwise, we show that $t$ has at most two neighbouring anchors. Since $t$ is in layer $k$, and all the anchors are at layers at least $k$, the 2 petals of $t$ cover all its anchors. Hence, if there are more than two anchors, at least two of them are covered by the same petal, which contradicts the fact that the anchors are independent. Hence, $t$ is covered at most twice by the petals of each of its neighbouring anchors, and at most 4 times in total. This completes the proof.
\end{proof}

From Lemma \ref{correct_lemma} with respect to $k=1$, we get that at the end of the reverse-delete phase $B$ covers all tree edges, and for each $1 \leq i \leq L$, all the tree edges in $R_i$ are covered at most 4 times, as needed. The algorithm has $L$ epochs, each of them consists of at most $L$ iterations, which sums up to $O(L^2)=O(\log^2{n})$ iterations. In Section \ref{sec:rev-del-imp}, we show how to implement each iteration in $O(D+\sqrt{n})$ rounds, which results in a complexity of $O((D+\sqrt{n})\log^2{n})$ rounds for the whole phase.\\[-7pt] 

}
\subsubsection{Implementation details} \label{sec:rev-del-imp}

In Section \ref{sec:rev-del}, we showed that the number of iterations in the reverse-delete phase is $O(\log^2{n})$. We next show how to implement each iteration in $O(D+\sqrt{n})$ rounds. First, note that from the forward phase tree edges know in which epoch they are covered, and non-tree edges in $A$ know when they were added to $A$. Also, all the tree edges know their layer number. From the above all the edges can compute if they are in $R_k,A_k$, and $F_k$. $R_k$ are edges in layer $k$ that are first covered in epoch $k$, $A_k$ are non-tree edges that are added to $A$ in epoch $k$, and $F_k$ are the tree edges that are first covered in epoch $k$. Also, in each iteration edges know if they are in the sets $X,Y,B,H_i$ and $\widetilde{H}_i$. At the beginning $B=\emptyset$ and $X=B \cup A_L$. Later on, we add edges to $B$, and all the edges added to $B$ or $Y$ learn about it in the algorithm, which allows computing $X=B \cup A_k$ in the next iterations. In epoch $k$, the set $H_i$ are all the edges in layer $i$ in $F=\cup_{i=k}^{L} F_i$, which can be computed since edges know their layer number and whether they are in $F$. During the algorithm, tree edges learn if they are already covered by edges added to $Y$, which allows computing $\widetilde{H}_i$, the tree edges that are still not covered in $H_i$. Also, from Claim \ref{claim_petals}, in $O(D+\sqrt{n})$ rounds all the tree edges in layer $i$ can compute their petals with respect to a subset of non-tree edges $X$. 

Hence, to compute one iteration we need to explain how to build an MIS in the virtual graph $G_i$, how all the non-tree edges learn if they are added to $Y$, and how non-tree edges learn if they are already covered by $Y$. At the first iteration in epoch $k$, $Y=\emptyset$, hence all the tree edges are not covered by $Y$.

\subsubsection*{Computing an MIS in the graph $G_i$}
We next explain how to implement one iteration. At this point, all the tree edges know if they are in $\widetilde{H}_i$, and we would like to find an MIS in $G_i$. I.e., we would like to find a maximal set of tree edges $M_i$ in $\widetilde{H}_i$ such that for any two edges $t,t' \in M_i$, there is no edge in $X$ that covers both of them. We design a specific algorithm to solve this task. For this, we use the decomposition of the tree into segments presented in Section \ref{sec:dec}, and exploit the structure of the layers. As explained in Section \ref{sec:dec}, we can decompose the tree into $O(\sqrt{n})$ segments of diameter $O(\sqrt{n})$, that each of them has a main path called the \emph{highway} of the path, and additional subtrees attached to it that are not connected to the rest of the tree. Also, all the vertices know the structure of the decomposition, captured by the \emph{skeleton tree}.  

As explained in Section \ref{sec:layers}, each layer is composed of disjoint paths, that are not in the same root to leaf path in the tree. In particular, each highway intersects at most one path in each layer.
A path $P$ in layer $i$ is between a descendant to its ancestor in the tree, where the descendant is denoted by $leaf(P)$. The path $P$ starts with a subpath in the segment of $leaf(P)$, and possibly parts of highways above this segment, since the only ancestors of $leaf(P)$ in other segments are highway vertices.  
For each highway that participates in such path in the layer $i$, let $t_h,t_{\ell}$ be the highest and lowest edges of $P$ in the highway that are still not covered at the beginning of the iteration (if such exist). For each segment, we let all the vertices in the graph learn these edges and their petals with respect to $X$, as follows. First, by Claim \ref{claim_petals}, all the tree edges in layer $i$ can learn their petals in $O(D+\sqrt{n})$ rounds. Then, by scanning the highway, each segment can learn the edges $t_h,t_{\ell}$ and their petals in $O(\sqrt{n})$ rounds. Now all the vertices can learn all the information in $O(D+\sqrt{n})$ rounds using Claim \ref{info2}, since they need to learn $O(\log{n})$ information per segment.

Let $T'$ be all the edges $t_h,t_{\ell}$ of all the segments. Our MIS algorithm consists of a global part where we compute an MIS of the edges $T'$, and a local part where we scan all the local subpaths in layer $i$ in each segment, and compute an MIS for them. We next describe the algorithm in detail and prove its correctness.\\[-7pt] 

\textbf{Computing a global MIS.} Since all the vertices know the edges in $T'$ and their petals, each vertex can compute locally the virtual graph $G'$ that its vertices are $T'$ and two tree edge $t,t'$ are connected iff there is an edge in $X$ that covers both $t,t'$. Knowing the petals of each tree edge is enough for this task, and vertices do not need to know the whole set $X$. This holds since the petals cover all the neighbours of $t$ in the same layer or above, and in iteration $i$ we consider only the tree edges in layer $i$. Also, from the labels of the petals, vertices know which of the tree edges are covered by them, as explained in Observation \ref{obs_labels} (our whole algorithm works with the LCA labels, instead of ids, as mentioned earlier). Given the graph $G'$, all the vertices can compute an MIS, $M'$, in this graph by applying the same algorithm. For example, the sequential greedy algorithm for MIS. Then, all the petals of edges in $M'$ are added to the cover $Y$. All the vertices compute this information locally, without communication. At the end, all the non-tree edges know if they were added to $Y$, which happens only if they are among the petals added. Also, all the tree edges know if they are covered or not, since they know which edges were added to $Y$, and given the LCA labels of a non-tree edge, we can compute easily if it covers a tree edge by Observation \ref{obs_labels}.\\[-7pt]
 
\textbf{Computing local MISs.} Then, in each segment locally, we compute an MIS for all the tree edges in $\widetilde{H}_i$ that are still not covered, this is done as follows. Let $P$ be a path in layer $i$, in each segment we look only at the part of $P$ in the segment if exist (there may be several such paths in the segment). In each one of them, we simulate a greedy MIS algorithm, as follows. We start with the lower vertex in the path $v$, that may be either $leaf(P)$, or the unique descendant of a segment. Note that all the vertices know if they are the leaf or root of a path in layer $i$ by Claim \ref{claim_layer}. The vertex $v$ checks if the edge to its parent $e=\{v,p(v)\}$ is already covered by $Y$, if not it is added to the MIS. Also, the vertex $v$ knows the higher petal of $e$, denote it by $\{anc,dec\}$ where $anc$ is an ancestor of $dec$, and it sends $anc$ to its parent, for the following reason. Since $e$ is added to the MIS, its petals would be added later to $Y$. In particular, all the path between $v$ to $anc$ is going to be covered by the edge $\{anc,dec\}$, and hence we do not add other tree edges in this path to the MIS. The reason it is enough to focus on the higher petal is since we scan the paths from a descendant to its ancestor, and the lower petal only covers additional tree edges below $e$, which are not relevant. We continue in the same manner, now the vertex $v'=p(v)$ checks if the edge $t'=\{v',p(v')\}$ is already covered either by the original set $Y$ or by edges added to $Y$ by vertices below it in the segment. For the latter task, it just checks if the vertex $anc$ it receives from $v$ is an ancestor of $p(v')$ or not, which can be deduced from the LCA labels. If $t'$ is already covered, $v'$ sends to its parent the label of the highest ancestor in an edge that covers $t'$, from edges added to $Y$ by vertices below it. Otherwise, the edge $t'=\{v',p(v')\}$ is added to the MIS and $v'$ adds its higher petal to cover $t'$, and sends to its parent the ancestor of its higher petal (which is again the highest ancestor in an edge that covers $t'$ from the edges added to $Y$). We continue until we reach either the highest vertex in the path $P$ or the ancestor of the segment. At the end, we add all the petals of tree edges in all the MISs computed in all the segments to the cover $Y$.\\[-7pt]

At the end of the process, all the tree edges in $H_i$ are covered. If they were covered already by $Y$ from previous iterations or after computing the global MIS, it is clear. Otherwise, we reach each tree edge $t \in H_i$ during the local computations, and if it is not covered when we reach it, we add it to the MIS and add its petals to $Y$, which cover it.    

Let $M_i$ be all the edges added to the MIS, either when computing a global MIS, or in one of the local computations in the segments. We next show that although we work in different segments in parallel, the set $M_i$ really forms an MIS in the graph $G_i$.

\begin{claim} \label{claim_mis}
The set $M_i$ is an MIS in the graph $G_i$.
\end{claim}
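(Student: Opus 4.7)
The plan is to verify two properties of $M_i$: independence, meaning no edge of $X$ covers two members of $M_i$, and maximality, meaning every $t \in \widetilde{H}_i$ that is not in $M_i$ has a neighbor in $M_i$ in the graph $G_i$. Maximality is essentially built into the construction: every tree edge of $\widetilde{H}_i$ is encountered either in the global phase (as a member of $T'$) or in the local scan of its segment along a layer-$i$ subpath, and at each such encounter it is either already covered by a petal of some previously added element of $M_i$ (making it adjacent to that element in $G_i$, since the petals of a tree edge in layer $i$ cover all of its neighbors in layers $\geq i$) or it is added to $M_i$ itself. The nontrivial part is independence, which I will prove by contradiction.

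Suppose there exist distinct $t_1, t_2 \in M_i$ and an edge $e \in X$ that covers both. Since $e$ runs between an ancestor and a descendant and $t_1, t_2$ are both in layer $i$, Claim~\ref{claim_intersect} places them on a common layer-$i$ path $P$. I then split into three cases.

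Case (a): both $t_1, t_2 \in M'$. This directly contradicts the fact that $M'$ is an independent set in the graph induced on $T'$ by $G_i$, since the petals of $t_1$ and $t_2$ allow every vertex to correctly detect the edge between them.

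Case (b): exactly one of them, say $t_1$, lies in $M'$, while $t_2$ was added during a local scan. After the global phase we add all petals of $t_1$ to $Y$; because $t_2$ is a neighbor of $t_1$ in $G_i$ lying in layer $i$, the definition of petals guarantees that one of the two petals of $t_1$ covers $t_2$. Hence when the local scan reaches $t_2$ it sees $t_2$ already covered by $Y$ and skips it, contradicting $t_2 \in M_i$.

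Case (c): both were added in local scans. If they are added along the same layer-$i$ subpath within a single segment, the greedy bottom-up scan ensures independence: the first one added propagates the top endpoint of its higher petal up to its parent, and when processing $t_2$ the algorithm detects that $t_2$ is covered (via the LCA label comparison) and does not add it. Hence $t_1$ and $t_2$ lie in different segments. Since $P$ stretches between them through at least one intermediate segment, there is a highway edge $t \in T'$ on $P$ strictly between $t_1$ and $t_2$; because $e$ covers the whole subpath of $P$ between $t_1$ and $t_2$, the edge $e$ covers $t$ too, so $t$ is adjacent to both $t_1$ and $t_2$ in the induced graph on $T'$. If $t \in M'$, then the petals of $t$ (added to $Y$ globally) cover both $t_1$ and $t_2$, contradicting their addition during the local scan. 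Otherwise, by maximality of $M'$ there is some $t' \in M'$ adjacent to $t$ in the graph on $T'$, and a case analysis on whether $t'$ lies above or below $t$ on $P$ shows that the higher or lower petal of $t'$ covers $t$ together with at least one of $t_1, t_2$, again leading to a contradiction.

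The main obstacle is Case (c): showing that, even though different segments run their local scans independently and in parallel, the combination of the global MIS $M'$ on the boundary edges $T'$ and the structural property that petals cover all neighbors in layer $\geq i$ forces one of any two hypothetical violators to have been detected as covered before being added. Formalizing this requires carefully identifying the separating edge $t \in T'$ and invoking Observation~\ref{obs_labels} to justify that the local scan correctly recognizes coverage from petals added in the global phase.
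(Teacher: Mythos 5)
Your overall plan (independence by a four-way case split plus maximality from the scan) is the same as the paper's, and cases (a), (b), and the same-segment part of (c) match the paper's argument. The gap is in the different-segments part of case (c): you assert that ``$P$ stretches between them through at least one intermediate segment'' and hence that some $t \in T'$ lies \emph{strictly} between $t_1$ and $t_2$. Neither claim is justified. The segments of $t_1$ and $t_2$ can be adjacent, so there may be no intermediate segment at all; and even then, $T'$ contains only the highest and lowest \emph{uncovered} edges of $P$ on each highway, so it can happen that the lowest uncovered highway edge of $P$ in $t_2$'s segment is $t_2$ itself (all of $P$ below it on that highway was already covered at the start of iteration $i$), while $t_1$'s segment contributes no element of $T'$ between $t_1$ and $t_2$ (e.g., $t_1$ sits in a subtree hanging off the segment root, so $P$ meets no uncovered highway edge of that segment above $t_1$). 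In such a configuration there is no $T'$ edge strictly between $t_1$ and $t_2$, and your argument has nothing to anchor on.

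The paper avoids this by choosing $t$ to be the lowest edge of $P$ on the highway of $t_2$'s segment that is uncovered at the beginning of iteration $i$: this always exists because $t_2$ itself is such an edge ($t_2$ must be a highway edge, having descendants of $P$ in another segment, and it is uncovered since it was added locally), and $t$ lies between $t_1$ (exclusive) and $t_2$ (inclusive), so the edge $e$ still covers it and $t \in T'$. With this choice the rest of your outline goes through, including the degenerate possibility $t = t_2$. One further remark: your closing ``case analysis'' on the position of the $M'$-neighbor $t'$ of $t$ is asserted rather than carried out, and this is where the remaining work lies; the key observation needed is that whenever $t'$ lies strictly between $t_1$ and $t_2$, the edge $e$ itself covers $t'$, so the higher (resp.\ lower) petal of $t'$ reaches at least as far as $e$ does and therefore covers $t_2$ (resp.\ $t_1$), while if $t'$ lies outside the interval, the petal of $t'$ that covers $t$ spans the whole subpath between $t'$ and $t$ and hence covers $t_1$ or $t_2$. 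Spelling this out makes your case (c) complete once the choice of $t$ is corrected.
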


\begin{proof}
Let $t_1,t_2 \in M_i$ be two tree edges added to the MIS. Both of them must be added in the local parts, since the edges added in the global part form an MIS, $M'$, by their definition. Also, after the global part we add all the petals of edges in $M'$ to the set $Y$, which cover all the neighbours of edges in $M'$. Hence, all the tree edges in $H_i$ that are still not covered, do not have a neighbour in $M'$. In addition, $t_1$ and $t_2$ cannot be in the same segment, as follows. First, note that $t_1$ and $t_2$ can only be neighbours in $G_i$ if they are in the same leaf to root path in the tree, since all the non-tree edges are between ancestors to descendants. Assume that $t_1,t_2$  are neighbours in the same segment, and assume w.l.o.g that $t_1$ is below $t_2$ in the tree. Then, when we reach $t_1$, since it is added to the MIS, we add its higher petal to the MIS which must cover $t_2$ if they are neighbours where $t_2$ is above $t_1$. Then, when we reach $t_2$ it is already covered and is not added to the MIS. 

We are left with the case that $t_1,t_2$ are added in the local computations of different segments.  
Let $P'$ be the tree path between $t_1$ to $t_2$, assume that $t_2$ is closer to the root in this path, and let $t$ be the lowest edge in the highway of $t_2$ that is not covered at the beginning of iteration $i$. Note that from the structure of the segments, $t_2$ must be part of the highway of its segment. Hence, there must exist such an edge $t$, because $t_2$ is an highway edge that is not covered at the beginning of the iteration. Also, since $t_1$ and $t_2$ are neighbours there is a non-tree edge $e \in X$ that covers both of them. Since $t$ is in the path $P'$, is follows that $e$ covers $t$ as well, which shows that $t$ is a neighbour of $t_1$ and $t_2$. Hence, the petals of $t$ cover $t_1$ and $t_2$. If $t$ is added to the global MIS, $t_1,t_2$ are already covered by these petals, which gives a contradiction. Otherwise, $t$ has a neighbour added to the global MIS. If this neighbour is above $t$, its petals (that cover $t$) must cover also $t_2$, otherwise its petals must cover $t_1$, either way gives a contradiction.
\end{proof}

The computation of the global MIS required learning the petals of the tree edges in layer $i$ and learning $O(\log{n})$ information per segment, which takes $O(D+\sqrt{n})$ rounds, and the computation of the local MISs requires scanning each segment which takes $O(\sqrt{n})$ rounds.\\[-7pt] 

\textbf{Completing the iteration.}
At the end of each iteration, all the non-tree edges should learn if they were added to $Y$, and all the tree edges should learn if they are currently covered by the set $Y$. In the global MIS part this is easy since all the vertices simulate the algorithm locally and know which edges are added to $Y$, as explained above. We next explain how this is done for the local part. First, after each non-tree edge knows if it is in $Y$, each tree edge can learn if it is covered by $Y$, since this is an aggregate function of edges that cover it, this takes $O(D+\sqrt{n})$ rounds as explained also in the forward phase. 

In order that non-tree edges would learn if they were added to $Y$, we work as follows. The edges added to $Y$ are all the petals (with respect to $X$) of the edges in $M_i$, all these petals are in $X$. Hence, if a non-tree edge $e \in X$ is added to $Y$, it is one of the petals of a tree edge $t \in M_i$. All the tree edges know their petals and know if they are in $M_i$ from the algorithm. From the independence of $M_i$, there are no two tree edges $t_1,t_2 \in M_i$ that are covered by the same edge from $X$. Hence, for each non-tree edge $e \in X$, there is at most one edge $t \in M_i \cap S_e$. Therefore, it is enough that all the non-tree edges would learn which is the first edge $t \in M_i \cap S_e$ (according to some order), and what are its petals (information of $O(\log{n})$ bits), since this is an aggregate function, by Claim \ref{agg_tree}, all the non-tree edges can learn it in $O(D+\sqrt{n})$ rounds, and learn if they are in $Y$.

To conclude, we can implement an iteration in $O(D+\sqrt{n})$ rounds. Since the reverse-delete phase has $O(\log^2{n})$ iterations, the time complexity of the whole phase is $O((D+\sqrt{n})\log^2{n})$ rounds.

\begin{lemma}
The reverse-delete phase takes $O((D+\sqrt{n})\log^2{n})$ rounds.
\end{lemma}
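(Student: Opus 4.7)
The plan is to multiply the number of iterations by the per-iteration cost. From the analysis of the reverse-delete phase in Section \ref{sec:rev-del}, there are $L$ epochs and each epoch consists of at most $L$ iterations indexed by $i=k,\dots,L$, for a total of $O(L^2)=O(\log^2 n)$ iterations; so it suffices to show that a single iteration can be implemented in $O(D+\sqrt{n})$ rounds.

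First I would describe what each edge needs to know at the start of an iteration: every tree edge knows its layer and whether it has ever been covered by $Y$ so far, every non-tree edge in $A$ knows the epoch in which it was added, and hence all edges can locally decide membership in $R_k$, $A_k$, $F_k$, $X$, $Y$, $B$, $H_i$ and $\widetilde{H}_i$. Then I would invoke Claim \ref{claim_petals} to let every tree edge in layer $i$ learn its two petals with respect to $X$ in $O(D+\sqrt{n})$ rounds. These are the only ingredients needed to describe the MIS computation.

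Next I would argue each of the three subtasks inside an iteration costs $O(D+\sqrt{n})$ rounds. (i) \emph{Global MIS.} For every highway we extract the top and bottom uncovered layer-$i$ edges $t_h,t_\ell$ and their petals by a single $O(\sqrt{n})$-round scan within each segment; then Claim \ref{info2} lets every vertex learn these $O(\log n)$ bits per segment in $O(D+\sqrt{n})$ rounds, after which $M'$ is computed with no communication. (ii) \emph{Local MISs.} Inside each segment we scan at most one layer-$i$ subpath from bottom to top, passing a single LCA label upward; this is purely local and costs $O(\sqrt{n})$. (iii) \emph{Updating $Y$ and coverage.} By Claim \ref{claim_mis} the edges in $M_i$ are pairwise independent with respect to $X$, so each non-tree edge $e\in X$ has at most one $t\in M_i\cap S_e$; Claim \ref{agg_tree} then lets every non-tree edge learn this unique $t$ and its petals, and hence its own membership in $Y$, in $O(D+\sqrt{n})$ rounds. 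Finally, by Claim \ref{agg_non_tree}, every tree edge learns whether it is covered by the new $Y$ via an aggregate over the non-tree edges that cover it, again in $O(D+\sqrt{n})$ rounds.

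Combining, each of the $O(\log^2 n)$ iterations uses a constant number of aggregate computations and a segment scan, totalling $O(D+\sqrt{n})$ rounds, so the whole phase takes $O((D+\sqrt{n})\log^2 n)$ rounds as claimed. The main obstacle I would expect here is none of the individual bounds — all building blocks are in place — but rather the careful bookkeeping showing that the information computed locally (petals, the two highway endpoints $t_h,t_\ell$) really is sufficient for the global MIS simulation so that nothing of size more than $O(\log n)$ per segment needs to be broadcast; this is exactly what Claim \ref{claim_mis} takes care of.
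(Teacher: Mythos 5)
Your proof is correct and follows essentially the same route as the paper: $O(\log^2 n)$ iterations, each implemented in $O(D+\sqrt{n})$ rounds using Claim~\ref{claim_petals} for the petals, the segment-based global/local MIS construction relying on Claims~\ref{info2} and~\ref{claim_mis}, and the aggregate computations of Claims~\ref{agg_tree} and~\ref{agg_non_tree} to update membership in $Y$ and coverage. The only minor inaccuracy is that a segment may contain parts of several disjoint layer-$i$ paths rather than at most one, but since these are scanned in parallel within the segment this does not affect the $O(\sqrt{n})$ per-segment bound or the overall claim.
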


\subsection{An improved approximation} \label{sec:improved}

We next present a variant of our 2-ECSS algorithm that obtains an improved approximation of $(5+\epsilon)$ instead of $(9+\epsilon)$. The approximation for weighted TAP is $(4+\epsilon)$, which almost matches an $\widetilde{O}(D+\sqrt{n})$-round 4-approximation for the unweighed case \cite{DBLP:conf/opodis/Censor-HillelD17}. The difference is in the reverse-delete phase. In the algorithm presented in Section \ref{sec:rev-del}, we make sure that each tree edge $t$ with $y(t) > 0$ is covered at most 4 times by $B$. We now construct the set $B$ slightly differently in a way that guarantees that each tree edge $t$ with $y(t) > 0$ is covered at most 2 times by $B$, which results in an improved approximation for the whole algorithm.

All the definitions and the general structure of the algorithm follows the algorithm described in Section \ref{sec:rev-del}. The only difference is that in epoch $k$ we make sure that the edges in $R_k$ are covered at most 2 times instead of 4 times, which affects the computation in iteration $i$.
We next describe the difference in the reverse-delete phase. We start by describing a variant of the algorithm that guarantees that tree edges in $R_k$ are covered at most 3 times, and then show how to improve this to 2.

\subsubsection*{Covering $R_k$ at most 3 times}

We again build a set $B \subseteq A$, where initially $B=\emptyset$, and we proceed in epochs $k=L,...,1$, where in epoch $k$ we make sure that all the edges in $F_i$ for $i \geq k$ are covered by $B$, and now all the edges in $R_i$ for $i \geq k$ are covered at most 3 times. In epoch $k$, we again define $X = B \cup A_k$ (with $B$ at the end of epoch $k+1$) and $F = \cup_{i=k}^{L} F_i$, and build a set $Y \subseteq X$ that covers $F$, where at the end of epoch $k$ we set $B=Y$.
We go over the layers in iterations $i=k,...,L$, where in iteration $i$, we again want to cover all the edges in $H_i$, which are all the edges in layer $i$ in $F$. $\widetilde{H}_i$ are all the tree edges in $H_i$ that are not covered by $Y$ at the beginning of iteration $i$.

\paragraph{Iteration $i$.}
In Section \ref{sec:rev-del}, we build an MIS, $M_i$, of the edges $\widetilde{H}_i$, and then add their petals to $Y$. Here, we take a different approach, we choose a subset $M'_i$ of edges in $\widetilde{H}_i$ that are not necessarily independent, but have a certain structure that guarantees that there are only small number of dependencies between them. For each edge in $M'_i$ we add only its higher petal to $Y$, and the edges are chosen in such a way that all edges in $\widetilde{H}_i$ are covered by $Y$, and each tree edge in $R_k$ is covered at most 3 times (where $k$ is the epoch number). 

The algorithm starts with a \emph{global part}, in which we compute a global MIS, $M'$, exactly as described in Section \ref{sec:rev-del-imp}, but now for each one of the edges in $M'$ we just add its higher petal to $Y$. Then, we have a \emph{local part} where we compute local MISs in each segment separately, with the goal of covering all the tree edges of $H_i$ in the segment that are still not covered. The computation is identical to the computation of local MISs in Section \ref{sec:rev-del-imp}, with the difference that for each edge added to the local MIS we only add its higher petal to $Y$. We denote by $M'_i$ all the edges added either to the global MIS or to one of the local MISs, and call them \emph{anchors} as before. We say that an anchor is a \emph{global} anchor if it is added in the global part, and a \emph{local} anchor if it is added in the local part. The difference from before is that the whole set $M'_i$ is no longer independent, because in the global part we do not add the two petals of each anchor, but only the higher one. 
However, we can show that the dependencies have a certain structure, as follows. See Figure \ref{depend_pic} for an illustration.

\setlength{\intextsep}{0pt}
\begin{figure}[h]
\centering
\setlength{\abovecaptionskip}{-2pt}
\setlength{\belowcaptionskip}{6pt}
\includegraphics[scale=0.6]{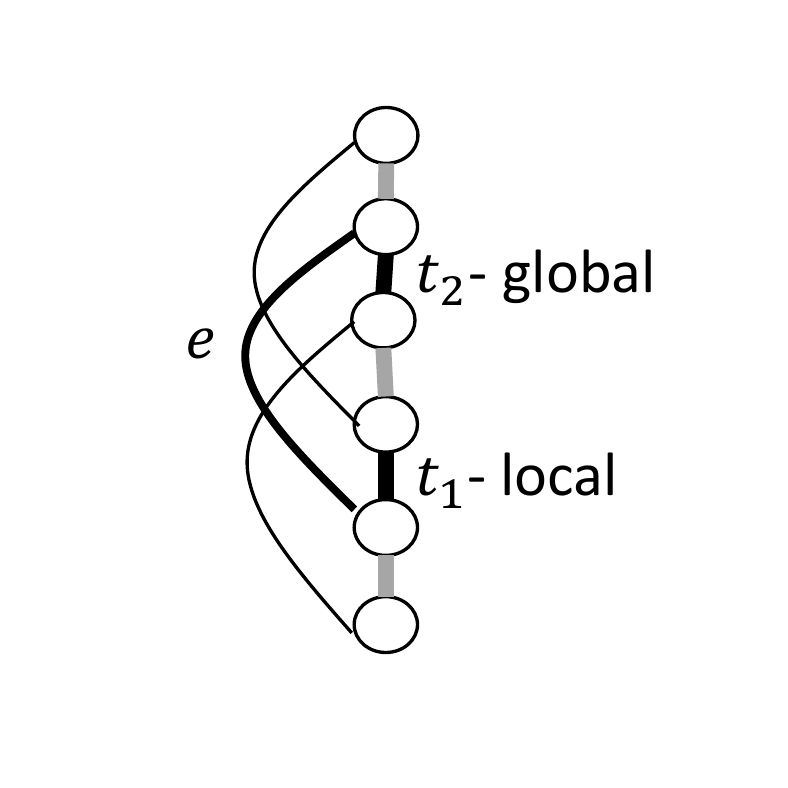}
 \caption{\small An illustration of the dependencies between anchors. Here, the two anchors $t_1$ and $t_2$ have a non-tree edge $e$ that covers them.
\vspace{-5pt}}
\label{depend_pic}
\end{figure}

\begin{claim} \label{anchors}
Assume that the two anchors $t_1,t_2$ have a non-tree edge from $X$ that covers them, and that $t_2$ is closer to the root, then $t_2$ is a global anchor and $t_1$ is a local anchor. In addition, $t_1$ and $t_2$ are added to $Y$ in the same iteration $i$. 
\end{claim}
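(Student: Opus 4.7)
Write $e=\{anc,dec\}\in X$ for the non-tree edge covering both $t_1$ and $t_2$, so that $anc$ is strictly above $t_2$ and $dec$ is at or below $t_1$. The proof proceeds in two stages: first showing the anchors come from the same iteration, and then identifying which one is global and which is local.

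Since layer numbers are non-decreasing along any root-to-leaf path and $t_2$ sits above $t_1$, $layer(t_2)\geq layer(t_1)$. If the inequality were strict, then $t_1$ would become an anchor in an iteration $i_1<i_2$. At that moment the higher petal of $t_1$ with respect to the current $X$ is inserted into $Y$; by definition its ancestor-endpoint is at least as high as $anc$, hence strictly above $t_2$, while its descendant-endpoint is at or below $t_1$. Thus this petal covers every tree edge on the path between its two endpoints, and in particular $t_2$. Consequently $t_2$ is already covered by $Y$ at the beginning of iteration $i_2$, so it can neither enter $T'$ nor be selected during the local scan, contradicting that $t_2$ is an anchor. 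Hence $t_1$ and $t_2$ share the same iteration $i$.

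Within iteration $i$, the identical higher-petal reach argument, now applied during the global part, rules out $t_1$ being a global anchor. The case that both are global is excluded immediately, since $e$ would witness that $t_1$ and $t_2$ are adjacent in $G'$ while $M'$ is an MIS. If both are local and lie in the same segment, then the leaf-to-root scan reaches $t_1$ first; being a local anchor, $t_1$ is still uncovered at that moment and so its higher petal is inserted into $Y$, which by the reach argument covers $t_2$, contradicting that $t_2$ is later chosen as a local anchor.

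The remaining case, that both are local in distinct segments $S_1\neq S_2$, is the main obstacle. Because the only vertices of a segment with descendants outside it lie on its highway, $t_2$ must belong to the highway of $S_2$. Let $t\in T'$ be the lowest uncovered layer-$i$ edge of $S_2$'s highway at the start of iteration $i$; it exists because $t_2$ itself is a candidate and lies at or below $t_2$ on the highway, hence on the tree path from $t_1$ to $t_2$. Therefore $e$ covers $t$ as well. If $t\in M'$, the higher petal of $t$ enters $Y$ in the global part and its ancestor-endpoint is at least as high as $anc$, so it covers $t_2$, a contradiction. Otherwise there is some $t'\in M'$ adjacent to $t$ via a witness $e''\in X$, and a case analysis on whether $t'$ lies above or below $t$ (and the resulting position of the ancestor-endpoint of $e''$ relative to $t_1,t_2$) shows that the higher petal of $t'$ must cover either $t_1$ or $t_2$, again yielding a contradiction. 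Carrying out this last case analysis carefully---relying only on the higher petals rather than on both petals as in the $4$-approximation version---is the delicate step, since one must account for $t'$ possibly lying strictly between $t$ and $t_2$ and show that the enlarged reach of $t'$'s higher petal still forces it to cover one of $t_1,t_2$.
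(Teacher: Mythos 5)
Your overall structure mirrors the paper's proof (same-layer/same-iteration argument via the higher petal of $t_1$, exclusion of the global--global and ``$t_1$ global, $t_2$ local'' cases, exclusion of the same-segment case as in Claim~\ref{claim_mis}, and reduction of the remaining case to the lowest uncovered highway edge $t_\ell\in T'$ of $t_2$'s segment). However, in the one case that actually carries the difficulty --- both anchors local, in different segments, with $t_\ell\notin M'$ --- you do not give a proof: you only assert that ``a case analysis \dots shows that the higher petal of $t'$ must cover either $t_1$ or $t_2$'' and then flag it as the delicate step. As stated, your setup cannot deliver that conclusion. You take $t'\in M'$ to be merely \emph{adjacent} to $t_\ell$ via some witness $e''\in X$ (maximality of the MIS on $T'$). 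But adjacency is too weak: if $t'$ lies above $t_2$, the witness $e''$ covers $t_2$, yet only the \emph{higher petal} of $t'$ is added to $Y$, and its descendant endpoint may lie strictly between $t_2$ and $t'$, so it need cover neither $t_\ell$, nor $t_2$, nor $t_1$. No contradiction follows from that configuration, so the argument has a genuine hole exactly where you stopped.

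The paper closes this hole with a stronger fact than adjacency. If $t_\ell$ is not an anchor, then when the bottom-up scan of $t_2$'s segment reached $t_\ell$ it was already covered; since $t_\ell\in\widetilde{H}_i$ (uncovered at the start of the iteration) and no local anchor of that segment lies below $t_\ell$ on the relevant path (all lower highway edges are outside $\widetilde{H}_i$), the covering edge must be the higher petal $e'$ of some \emph{global} anchor $t_{anc}\in M'$, and crucially $e'$ \emph{covers} $t_\ell$. With that containment the three-way case analysis on the position of $t_{anc}$ works: if $t_{anc}$ is below $t_1$, then $e'$ spans from below $t_1$ up past $t_\ell$ and covers $t_1$; if $t_{anc}$ is above $t_2$, then $e'$ spans from $t_\ell$ (below $t_2$) up to $t_{anc}$ and covers $t_2$; if $t_{anc}$ lies between $t_1$ and $t_2$, then $e$ covers $t_{anc}$ and reaches above $t_2$, so the higher petal of $t_{anc}$ reaches above $t_2$ and covers $t_2$. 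In every case a global petal covers $t_1$ or $t_2$ already in the global part, contradicting that they are local anchors. Note also that your problematic configuration ($t'$ above $t_2$ whose petal misses $t_\ell$) is not a real obstruction once you use coverage rather than adjacency: if nothing added in the global part covers $t_\ell$, the scan would have made $t_\ell$ itself an anchor, returning you to the case you already handled. You should replace the adjacency step by this coverage argument to complete the proof.
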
 

\begin{proof}
If $t_1,t_2$ are covered by an edge $e \in X$, they must be in the same root to leaf path. Since $t_2$ is closer to the root in this path, $layer(t_2) \geq layer(t_1)$. Note that since $e$ covers $t_1$ and $t_2$, it follows that the higher petal of $t_1$ covers $t_2$. If $t_1,t_2$ are not in the same layer, then $t_2$ is already covered at the end of iteration $i=layer(t_1)$, a contradiction to its definition. Also, all the anchors added to $M'$ in the global part of iteration $i$ are independent. Hence, at least one of $t_1,t_2$ is added in the local part. If $t_1$ is added in the global part, and $t_2$ is added in the local part, then since the higher petal of $t_1$ covers $t_2$, $t_2$ cannot be an anchor, a contradiction. Hence, $t_1$ is added in the local part. To complete the proof, we need to show that $t_2$ is not added in the local part. Assume to the contrary that it is added in the local part. First, $t_1$ and $t_2$ cannot be in the same segment, since all the tree edges added to $M'_i$ in the local part of the same segment are independent by the algorithm, as explained also in the proof of Claim \ref{claim_mis} (we follow exactly the same algorithm for computing a local MIS in each segment). 

We are left with the case that $t_1$ and $t_2$ are added in the local part in different segments. Let $t_{\ell}$ be the lowest highway edge in the segment of $t_2$ that is in $\widetilde{H}_i$ for $i=layer(t_1)=layer(t_2)$. Note that such an edge must exist and is in the path between $t_1$ and $t_2$, since $t_2 \in \widetilde{H}_i$ is an highway edge from the structure of the segments. 
If $t_{\ell}$ is an anchor, its higher petal covers $t_2$ (since the edge $e$ is an edge that covers $t_1$ and $t_2$, and $t_{\ell}$ is in the path between $t_1$ to $t_2$), a contradiction. Otherwise, there is a global anchor $t_{anc} \in \widetilde{H}_i$ that its higher petal $e'$ covers $t_{\ell}$. If $t_{anc}$ is below $t_1$, then $e'$ must cover $t_1$ as well. If $t_{anc}$ is above $t_2$, then $e'$ must cover $t_2$ as well. Otherwise, $t_{anc}$ is in the path between $t_1$ and $t_2$. Since the higher petal of $t_1$ covers $t_2$, and $t_{anc}$ is above $t_1$, it follows that $e'$ covers $t_2$. Since $e'$ is added in the global part, all the cases lead to a contradiction. 
\end{proof}

\begin{claim} \label{3times}
After the end of epoch $k$, each tree edge $t \in R_k$ is covered at most 3 times, and all the edges in $F = \cup_{i=k}^{L} H_i$ are covered by $Y$.
\end{claim}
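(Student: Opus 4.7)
My plan is to prove the claim by induction on $k$, in parallel with Lemma~\ref{correct_lemma}. The base case $k=L+1$ is trivial because $B=\emptyset$. For the inductive step, the coverage statement ``$Y$ covers $F=\bigcup_{i=k}^{L} H_i$'' follows from the same iteration-by-iteration argument as part~1 of Lemma~\ref{correct_lemma}, adapted to the fact that only higher petals are added: in iteration $i$ every uncovered edge of $H_i$ is scanned on some subpath of its segment and is either added as a local anchor itself or lies in the range of the higher petal of a local anchor added earlier on the same subpath.

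The main content is bounding the cover count of a fixed $t\in R_k$. Let
\[
E(t)=\bigl\{\,t'\in\textstyle\bigcup_i M'_i : \text{the higher petal of } t' \text{ lies in } Y \text{ and covers } t\,\bigr\};
\]
every edge of $Y$ covering $t$ is a higher petal of some $t'\in E(t)$, so it suffices to prove $|E(t)|\le 3$. Because higher petals are ancestor--descendant edges, every $t'\in E(t)$ lies on the same root-to-leaf path as $t$, and I partition $E(t)=E^{0}(t)\sqcup E^{+}(t)\sqcup E^{-}(t)$ according to whether $t'=t$, $t'$ is strictly above $t$, or strictly below $t$. The bound $|E^{0}(t)|\le 1$ is immediate; the work is to show $|E^{+}(t)|\le 1$ and symmetrically $|E^{-}(t)|\le 1$.

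To bound $|E^{+}(t)|$, suppose $t'_1, t'_2\in E^{+}(t)$ with $t'_1$ closer to $t$. The higher petal of $t'_2$ has descendant endpoint at or below $t$ and ancestor endpoint at or above $t'_2$, so it also covers $t'_1$, yielding a common covering edge in $X$. Claim~\ref{anchors} then requires $t'_1$ and $t'_2$ to sit in the same iteration with $t'_2$ global and $t'_1$ local. I would rule out each labeling: two globals contradict independence of the global MIS; two locals contradict the ``upper anchor is global'' conclusion of Claim~\ref{anchors}; ``$t'_2$ global and $t'_1$ local'' fails because (in the common iteration) the global petal of $t'_2$ enters $Y$ before the local scan reaches $t'_1$, so $t'_1$ would already be covered and could not be added as a local anchor; and ``$t'_2$ local and $t'_1$ global'' is symmetric.

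The subtle point I expect to work hardest on is the ``$t'_2$ global, $t'_1$ local'' labeling when the anchors formally inhabit different layers and hence different iterations of epoch $k$. Because iterations run in increasing layer order, $t'_1$'s iteration strictly precedes $t'_2$'s, so $t'_2$'s petal is not yet in $Y$ at the time $t'_1$ is scanned; the argument must instead combine the contrapositive of Claim~\ref{anchors} across iterations (a common covering edge in different iterations is forbidden) with the inescapable geometry (the descendant endpoint of $t'_2$'s petal lies at or below $t$, hence at or below $t'_1$, forcing the petal to cover $t'_1$) to obtain a contradiction, thereby showing $t'_2\notin E(t)$ after all.
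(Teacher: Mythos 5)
Your bound $|E^{+}(t)|\le 1$ is sound and is essentially the argument the paper itself uses for anchors above $t$, but the claimed ``symmetric'' bound $|E^{-}(t)|\le 1$ is false, and this is a genuine gap. The symmetry breaks exactly because only \emph{higher} petals are added to $Y$. Suppose $a_1,a_2$ are anchors strictly below $t$, with $a_1$ below $a_2$, whose higher petals both cover $t$. As in your $E^{+}$ argument, the higher petal of $a_1$ covers $a_2$, so Claim~\ref{anchors} forces $a_2$ to be a global anchor and $a_1$ a local anchor of the same iteration. But the edge already placed in $Y$ during the global part is the higher petal of $a_2$, and that petal points \emph{away} from $a_1$: its descendant endpoint need only lie below the lower endpoint of $a_2$ and may lie strictly above $a_1$. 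So nothing forces $a_1$ to be covered when the local scan reaches it, and it can legitimately be added as a local anchor. This configuration --- a local anchor and a global anchor below $t$, both covering $t$, plus one anchor above --- is precisely the three-times-covered structure depicted in Figure~\ref{3times_pic}, and it is the entire reason the paper later runs a cleaning phase (Claim~\ref{2times}) that deletes the global anchor's petal; if $|E^{-}(t)|\le 1$ held, that phase would be vacuous.

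With the correct bound $|E^{-}(t)|\le 2$, your additive decomposition only yields $4$, so you need the interaction between the parts, which is how the paper argues. Split on whether $t$ is itself an anchor. If $t$ is an anchor, then any other anchor whose higher petal covers $t$ is a neighbour of $t$, and each anchor has at most one anchor neighbour: by Claim~\ref{anchors} such a neighbour pair is (local below, global above) in the same layer, a local $t$ cannot have two global neighbours above since its single higher petal would cover both, contradicting independence of the global MIS, and a global $t$ cannot have two local neighbours below since the lower one's higher petal would cover the upper one, again contradicting Claim~\ref{anchors}; hence $t$ is covered at most twice. If $t$ is not an anchor, then $E^{0}(t)=\emptyset$; any anchor below $t$ covering $t$ lies in layer $k$ on the same root-to-leaf path, and for any two such anchors the lower one's higher petal covers the other, so Claim~\ref{anchors} caps them at two (one local, one global); together with your $E^{+}$ bound of one this gives three. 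The coverage statement for $F$ is fine as you sketch it, provided you also allow an edge of $\widetilde{H}_i$ to be covered by a global anchor's petal or by $Y$ from earlier iterations, not only by a local anchor on its own subpath.
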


\begin{proof}
We say that an anchor covers $t$ if its higher petal covers $t$. 
Let $t \in R_k$.
First note that from Claim \ref{anchors}, each anchor has at most one neighbouring anchor, as follows.
If $t$ is a local anchor in layer $i$, it can only have a neighbouring anchor which is a global anchor in layer $i$ and is above $t$. Since the higher petal of $t$ covers all its neighbours above it, and global anchors are independent, there can be at most one such anchor. A similar argument works if $t$ is a global anchor.  This in particular shows that each anchor $t$ is covered at most twice, by the higher petal of $t$ and of its neighbour if exists.

We now analyze the case that $t$ is not an anchor.
First, the lower petal of $t \in R_k$ covers all its neighbours in layer $k$. Note that all the anchors we add in epoch $k$ are in layer $i \geq k$.
Hence there are at most two anchors below $t$ that cover it, they must be in layer $k$ and by Claim \ref{anchors} they must be of the form $a_1,a_2$ where $a_1$ is a local anchor and $a_2$ is a global anchor, and $a_1$ is below $a_2$. Second, the higher petal of $t$ covers all the edges above it, again there are at most two anchors above $t$ that cover it $b_1,b_2$, where $b_1$ is below $b_2$, $b_1$ is a local anchor and $b_2$ is a global anchor, and $b_1,b_2$ are in the same layer. However, if there is a global anchor $b_2$ above $t$ that covers $t$, then it covers the whole path from $t$ to $b_2$. Hence, there are no local anchors in $layer(b_2)$ added in this path. This shows that there is at most one anchor above $t$ that covers $t$, either a local anchor or a global anchor. In all the cases, there are at most 3 anchors that cover $t$, each of them adds only its higher petal to $Y$, which shows that $t$ is covered at most 3 times.

We now show that all the edges in $F = \cup_{i=k}^{L} H_i$ are covered by $Y$ by the end of epoch $k$. This holds since after iteration $i$ all the edges in $H_i$ are covered by $Y$. If they are not in $\widetilde{H}_i$ they are covered before iteration $i$. Otherwise they either are covered by a global anchor, or by a local anchor from the description of the algorithm: in the local part we scan the paths of layer $i$ in each segment and make sure that all the tree edges in $\widetilde{H}_i$ that are still not covered by global anchors get covered.
\end{proof}

\subsubsection*{Covering $R_k$ at most 2 times}

We now would like to add a \emph{cleaning} phase after the end of epoch $k$, that guarantees that each $t \in R_k$ is covered only at most twice, and also all the edges in $F$ are still covered. For doing so, we show that in the cases a tree edge is covered 3 times, we can actually remove one of these edges without affecting the covering of the rest of tree edges.
From the proof of Claim \ref{3times}, it follows that the only cases that a tree edge $t \in R_k$ is covered 3 times have a certain structure, illustrated in Figure \ref{3times_pic}. First, there are exactly 3 anchors $t_1,t_2,t_3$ that cover $t$, all of them are in the same path between a descendant to an ancestor that starts with the path in layer $k$ that $t$ belongs to, where $t_1$ is the lowest edge and $t_3$ is the highest edge. Now $t_1,t_2$ are in layer $k$ below $t$, where $t_1$ is a local anchor and $t_2$ is a global anchor. $t_3$ is either a local or a global anchor above $t$.

\setlength{\intextsep}{0pt}
\begin{figure}[h]
\centering
\setlength{\abovecaptionskip}{-15pt}
\setlength{\belowcaptionskip}{8pt}
\includegraphics[scale=0.6]{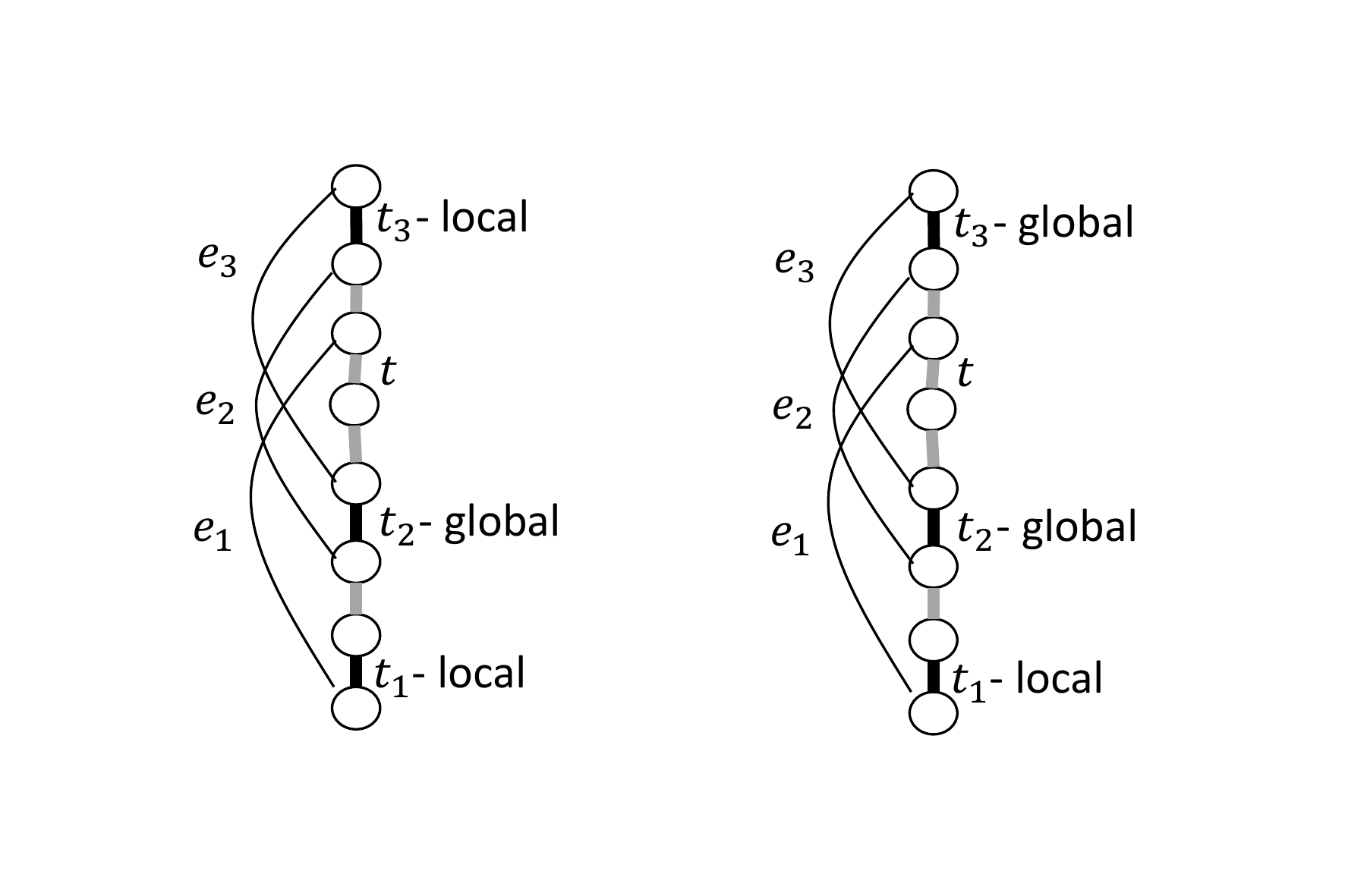}
 \caption{\small An illustration of the cases that a tree edge $t$ is covered 3 times.\vspace{-5pt}}
\label{3times_pic}
\end{figure}

In the cleaning phase, we remove the higher petal of the global anchor $t_2$ from $Y$. We do so for all the tree edges $t \in R_k$ that are covered 3 times. 
After this, all the tree edges $t \in R_k$ are clearly covered at most twice. We next show the following.

\begin{claim} \label{2times}
After the cleaning phase of epoch $k$, each tree edge $t \in R_k$ is covered at most twice, and all the edges in $F$ are covered by $Y$. 
\end{claim}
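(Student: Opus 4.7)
The plan is to prove the two assertions of the claim separately: every $t \in R_k$ is covered at most twice by $Y$ after cleaning, and every edge of $F$ remains covered. The first is a bookkeeping argument from Claim \ref{3times}: the pre-cleaning coverage of any $t \in R_k$ is at most $3$, and the cleaning (applied to a 3-covered $t$) removes one of its three covering petals, namely $t_2$'s higher petal. Edges previously covered at most twice cannot gain covers from cleaning, so the bound is preserved in both cases.

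For the second assertion, I would show that for each petal $e_2$ removed during the cleaning, every edge $t' \in F$ previously covered by $e_2$ has an alternative cover in $Y$ that is not removed. The key structural step is that $e_2$ cannot cover the upper anchor $t_3$: if it did, then $t_2$ and $t_3$ would both be anchors sharing the covering non-tree edge $e_2 \in X$, and Claim \ref{anchors} would force the upper anchor $t_3$ to be global and the lower $t_2$ to be local, contradicting that $t_2$ is global. Hence the top endpoint of $e_2$ lies strictly below the upper endpoint of $t_3$, so the edges of $F$ covered by $e_2$ are confined to the tree path from $dec_{t_2}$ up through $t$ and ending strictly below $t_3$.

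I would then classify $t'$ by position and exhibit an alternative cover. Edges in the range from $t_2$ up through $t$ are covered by $t_1$'s higher petal, which extends from $t_1$ past $t$; since $t_1$ is a local anchor, its higher petal is never removed by the cleaning (only global anchors' higher petals are candidates for removal). The edge $t$ is additionally covered by $t_3$'s higher petal, and edges strictly between $t$ and $t_3$ are covered by $t_3$'s higher petal. Edges strictly below $t_2$ must lie in layer $k$ (since lower layers contribute no edges to $F$), hence in $R_k$, and were originally covered during the local-part scan of iteration $k$ by a local anchor's higher petal from below; such petals are also never removed.

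The main obstacle is the potential for cascading removals. For instance, $t_3$'s higher petal might itself be removed if $t_3$ is a global anchor in $R_k$ that plays the $t_2$-role in the 3-coverage structure of some $t^* \in R_k$ above $t_3$. In that case, I would appeal to the corresponding local anchor $t_1^{\dagger}$ from $t^*$'s structure, whose higher petal covers $t^*$ and therefore extends through $t_3$, and use Claim \ref{anchors} to pin down $t_1^{\dagger}$'s position so that its petal is guaranteed to reach the affected edges strictly between $t$ and $t_3$. An analogous chained use of Claim \ref{anchors} handles cascades below $t_2$. Formalizing this recursive application of Claim \ref{anchors}, and verifying that at every level the surviving local-anchor petal covers the necessary range, is the crux of the argument.
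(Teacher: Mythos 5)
Your proposal handles the easy half correctly (the ``at most twice'' bound is indeed immediate from Claim \ref{3times} plus the description of the cleaning), and your structural observation that $e_2$ cannot cover $t_3$ (via Claim \ref{anchors} and the fact that $t_2$ is global) is valid. But the argument has a genuine gap at exactly the point you yourself flag as ``the crux'': you never actually prove that the alternative cover $e_3$ survives the cleaning phase, you only sketch a strategy of chasing cascading removals and finding replacement petals via a ``recursive application of Claim \ref{anchors}.'' This is the whole difficulty of the claim, and the sketch as stated is not guaranteed to close: if $e_3$ were removed because $t_3$ plays the $t_2$-role for some $t^*\in R_k$, the replacement petal you propose (the higher petal of the local anchor $t_1^{\dagger}$ in $t^*$'s structure) only helps if $t_1^{\dagger}$ sits below the affected edges, which you do not establish. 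The paper instead shows that this cascade simply cannot occur: since $e_2$ covers the path from $t_2$ to $t$ after the global part of iteration $k$, and $e_3$ covers the path from $t$ to $t_3$, there are no local anchors anywhere on the path between $t_2$ and $t_3$; hence the local anchor of any such $t^*$ would have to lie below $t_2$, and then the lower petal of $t^*$ would cover both global anchors $t_2$ and $t_3$, contradicting the independence of global anchors. Without this (or an equivalent completed argument), the second assertion of the claim is unproven.

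A second, smaller flaw: for edges of $F$ covered by $e_2$ that lie below $t_2$, you assert they ``were originally covered during the local-part scan of iteration $k$ by a local anchor's higher petal from below.'' That is unjustified -- the local scan adds petals only for edges that are still uncovered, and such an edge may have been covered solely by the global petal $e_2$ itself, so nothing in your reasoning supplies a surviving cover once $e_2$ is deleted. The correct argument (used in the paper) is that $e_2$ cannot cover $t_1$, since otherwise $t_1$ would already be covered after the global part of iteration $k$ and could not be a local anchor; hence every edge of $F$ covered by $e_2$ below $t$ lies strictly above $t_1$ and is therefore covered by $e_1$, the higher petal of the local anchor $t_1$, which the cleaning never removes. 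You would need to add this observation (and complete the $e_3$-survival argument above) for the proof to go through.
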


\begin{proof}
The fact each tree edge $t \in R_k$ is covered at most twice is immediate from the description of the cleaning phase, we now show that all the tree edges in $F$ are still covered. Assume that $t \in R_k$ is a tree edge covered by the 3 anchors $t_1,t_2,t_3$, sorted from the lowest to the highest in the tree. Let $e_1, e_2,e_3$ be the higher petals of these anchors. In the cleaning phase, we removed $e_2$ from $Y$. First, we show that all the tree edges from $F$ covered by $e_2$ are covered either by $e_1$ or $e_3$, this clearly holds for $t$. First, note that all the tree edges from $F$ covered by $e_2$ are in the tree path $P$ that starts in the path of layer $k$ that $t$ belongs to and goes to the root. Let $t' \in F$ be a tree edge covered by $e_2$. If $t'$ is above $t$ in $P$, then $e_3$ covers $t'$. This follows since $e_3$ covers $t$ and the higher petal of $t$ covers all the edges above $t$ covered by $e_2$. Since $t_3$ is above $t$, its higher petal $e_3$, that covers $t$, also covers all the tree edges above $t$ that are covered by $e_2$. If $t'$ is below $t$ in $P$, then it is covered by $e_1$ as follows. Note that the tree edge $t_1$ is in $P$ and it is not covered by $e_2$, since otherwise it would be covered after the global part of iteration $k$, and hence cannot be a local anchor. Since $e_1$ covers $t$, it covers the whole subpath in $P$ between $t_1$ and $t$, which in particular contains all the tree edges in $P$ below $t$ covered by $e_2$ (which can only be above $t_1$). To complete the proof, we show that $e_1$ and $e_3$ remain in $Y$ after the cleaning phase. The edge $e_1$ clearly stays in $Y$ since $t_1$ is a local anchor, and we only remove petals of global anchors in the cleaning phase. However, $t_3$ can be either a local or a global anchor. In the first case, $e_3$ clearly stays in $Y$. We next analyze the case that $t_3$ is a global anchor. First, since $e_3$ covers $t$, in the whole path between $t$ to $t_3$ there are no local anchors: there are no local anchors of layer $i=layer(t_3)$ since they are already covered by $e_3$ after the global part of iteration $i$, and there are no local anchors from previous iterations since otherwise $t_3$ is already covered and is not part of $\widetilde{H}_i$. Also, all the tree path between $t_2$ to $t$ is covered by $e_2$ after the global part of iteration $k$ (which is the first iteration of epoch $k$), so there are no local anchors in the whole path between $t_2$ and $t_3$. Now, if $t_3$ is removed from $Y$, it means that there is a tree edge $\tilde{t} \in R_k$ with 3 anchors that cover it, and in particular two anchors below it $\tilde{t_1},\tilde{t_2}$ where $\tilde{t_2}=t_3$ is a global anchor and $\tilde{t_1}$ is a local anchor below $t_3$. However, as we explained, there are no local anchors between $t_2$ and $t_3$. Also, since all the anchors in epoch $k$ are in layer at least $k$, it means that the only local anchor below $\tilde{t}$ that covers it must be below $t_2$. However, in this case, the lower petal of $\tilde{t}$ covers the global anchors $t_2$ and $t_3$, in contradiction to the fact that global anchors are independent. Hence, the edges $e_1$ and $e_3$ stay in $Y$, which shows that all tree edges in $F$ are covered by the end of the cleaning phase of epoch $k$.  
\end{proof} 

Using Claim \ref{2times}, we get the following.

\begin{lemma} \label{improved_correct_lemma}
At the end of the cleaning phase of epoch $k$:
\begin{enumerate}
\item All the tree edges in $F = \cup_{i=k}^{L} F_i$ are covered by $B$.\label{improved_cover_F} 
\item All the edges in $R_i$ for $i \geq k$ are covered at most 2 times.\label{improved_cover_R} 
\end{enumerate}
\end{lemma}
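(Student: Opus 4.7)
The plan is to prove Lemma \ref{improved_correct_lemma} by reverse induction on $k$, in exactly the same mold as the earlier proof of Lemma \ref{correct_lemma}, but invoking Claim \ref{2times} in place of the ad hoc 4-times analysis. The base case is $k=L+1$: here $B=\emptyset$, so both items hold vacuously.

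For the inductive step, assume the two statements of the lemma at the end of epoch $k+1$, and consider what happens during epoch $k$ (including its cleaning phase). Let $B_{\mathrm{old}}$ denote the value of $B$ set at the end of epoch $k+1$, so that in epoch $k$ we work with $X = B_{\mathrm{old}} \cup A_k$ and ultimately replace $B$ by the set $Y$ produced at the end of the cleaning phase. Claim \ref{2times} directly gives us item \ref{improved_cover_F} and the part of item \ref{improved_cover_R} concerning $R_k$: namely, after the cleaning phase $Y$ covers $F=\cup_{i=k}^{L}F_i$, and every $t\in R_k$ is covered at most twice by $Y$.

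The only remaining piece is item \ref{improved_cover_R} for indices $i>k$. For this I would invoke two observations, exactly as in the proof of Lemma \ref{correct_lemma}. First, by the induction hypothesis, every $t\in R_i$ with $i>k$ is covered at most twice by $B_{\mathrm{old}}$. Second, by the definition of the sets $R_i$ and the fact that the forward phase processes layers in increasing order, $A_k$ cannot cover any tree edge in $R_i$ for $i>k$ (such an edge would otherwise have been covered strictly before forward-phase epoch $i$, contradicting its membership in $R_i$). Combining these, $X = B_{\mathrm{old}} \cup A_k$ covers every such $t$ at most twice, and since $Y\subseteq X$ and the cleaning phase only \emph{removes} edges from $Y$, the final $B=Y$ does so as well.

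Applying the lemma with $k=1$ then yields exactly what the improved algorithm needs: $B$ covers every tree edge, and every $t\in T$ with $y(t)>0$ (which by construction lies in some $R_i$) is covered at most twice, so by Lemma \ref{approx_lemma} with $c=2$ we obtain the claimed $(2+\epsilon)$-approximation on $G'$. I do not expect any real obstacle here: the heavy lifting is done by Claim \ref{2times} (which requires the delicate case analysis about global/local anchors and the $e_2$-removal in the cleaning phase), and once that claim is in hand the present lemma is a short induction that simply bookkeeps how $X$, $Y$, and $B$ evolve across epochs.
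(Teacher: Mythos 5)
Your proposal is correct and matches the paper's own argument essentially step for step: reverse induction on $k$ with base case $k=L+1$, Claim \ref{2times} handling both the coverage of $F$ and the bound for $R_k$, and the induction hypothesis combined with the facts that $A_k$ covers no edge of $R_i$ for $i>k$ and that $Y\subseteq X=B\cup A_k$ handling the remaining indices. No gaps.
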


\begin{proof}
First, \ref{improved_cover_F} follows directly from Claim \ref{2times} ($B = Y$ at the end of the epoch). 
We prove \ref{improved_cover_R} by induction on $k$, the proof follows ideas from Lemma \ref{correct_lemma}. In the base case, $k=L+1$ and $B=\emptyset$, hence the claim clearly holds. We assume that the claim holds for $k+1$, and show that it works for $k$.
We next show that any tree edge $t \in R_i$ for $i \geq k$ is covered at most 2 times by $Y$. For $i>k$ this already follows from previous epochs. This holds since $Y \subseteq X$, and $X=B \cup A_k$ (with the set $B$ at the end of the previous epoch). Now, at the end of epoch $k+1$, all the edges in $R_i$ for $i>k$ are covered at most 2 times by $B$ from the induction hypothesis. Also, from the definition of the sets, $A_k$ does not cover any edge in $R_i$ for $i>k$. Hence, any set $Y \subseteq X$ we choose covers all the edges in $R_i$ for $i>k$ at most 2 times. Now all the edges $t \in R_k$ are covered at most twice by the end of the cleaning phase of the epoch by Claim \ref{2times}. This completes the proof.
\end{proof}

From Lemma \ref{improved_correct_lemma} with respect to $k=1$, we get that at the end of the reverse-delete phase $B$ covers all tree edges, and for each $1 \leq i \leq L$, all the tree edges in $R_i$ are covered at most 2 times, as needed. The algorithm has $L$ epochs, each of them consists of at most $L$ iterations, which sums up to $O(L^2)=O(\log^2{n})$ iterations. We next show how to implement each iteration in $O(D+\sqrt{n})$ rounds, which results in a complexity of $O((D+\sqrt{n})\log^2{n})$ rounds for the whole phase. 

\subsubsection{Implementation details}

The implementation of the algorithm follows the implementation in Section \ref{sec:rev-del-imp} with slight changes, we just highlight the differences. The computation of the global MIS and local MISs is identical to the computation in Section \ref{sec:rev-del-imp}. The only difference is that for each anchor we just add its higher petal and not both its petals (each tree edge knows which is its higher petal). At the end of each iteration all the non-tree edges should know if they are added to $Y$. In Section \ref{sec:rev-del-imp}, we explained that all the edges know which are the global anchors added in the global part and their petals. For the local part we used the independence of anchors. Now although not all the anchors are necessarily independent, it follows from Claim \ref{anchors} that all the local anchors are independent (two anchors can only be neighbours if one is a global anchor and the second is a local anchor). This shows that for each non-tree edge $e \in X$ there is at most one local anchor $t \in M'_i \cap S_e$, hence we can follow the implementation in Section \ref{sec:rev-del-imp}.

In the new algorithm, we also have a cleaning phase after each epoch. In the cleaning phase, all the edges $t \in R_k$ that are covered 3 times remove one of these edges from $Y$. The edge removed is the higher petal of a global anchor $t_2$ which is below $t$. Also, from the analysis there is exactly one global anchor below $t$ in this case. This part can also be implemented in $O(D+\sqrt{n})$ rounds, as follows. First, each tree edge learns how many times it is covered, which is an aggregate function of the non-tree edges that cover it and hence can be implemented in $O(D+\sqrt{n})$ rounds. If an edge $t \in R_k$ learns that it is covered 3 times, it would like to remove the higher petal of the global anchor $t_2$ below it. Note that all the edges learn all the global anchors and their petals in the algorithm, and using the LCA labels each edge can learn which of the higher petals cover it, and which of the anchors are below it which allows identifying the anchor $t_2$. Since there are $O(\sqrt{n})$ global anchors, all the vertices in the graph can now learn in $O(D+\sqrt{n})$ rounds the identity of all the global anchors that should be removed from $Y$, which completes the cleaning phase. 

The above gives a $(4+\epsilon)$-approximation for weighted TAP and a $(5+\epsilon)$-approximation for weighted 2-ECSS for any constant $\epsilon>0$ in $O((D+\sqrt{n})\log^2{n})$ rounds, as follows.

\begin{theorem}
There is a deterministic $(4+\epsilon)$-approximation algorithm for weighted TAP in the $\mathsf{CONGEST}$ model that takes $O((D+\sqrt{n})\frac{\log^2{n}}{\epsilon})$ rounds.
\end{theorem}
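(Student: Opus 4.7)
The plan is to assemble the components already developed in Sections \ref{sec:virtual}--\ref{sec:improved} into a single end-to-end algorithm and verify both the approximation ratio and the round complexity. First I would run the virtual-graph construction of Section \ref{sec:virtual} to obtain $G'$ in $O(D+\sqrt{n}\log^*n)$ rounds (using the LCA-labelling scheme), so that henceforth every non-tree edge is between an ancestor and a descendant and can be simulated by its descendant endpoint. Next I would invoke Claim \ref{claim_layer} to build the layering of $T$ in $O((D+\sqrt{n})\log n)$ rounds; together with Claim \ref{claim_petals} this gives, on demand, the petals of any tree edge with respect to any subset $X$ of non-tree edges in $O(D+\sqrt{n})$ rounds.

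Second, I would run the forward phase of Section \ref{sec:forward-imp} on $G'$ with parameter $\epsilon' := \epsilon/4$. This produces a set $A\subseteq E'\setminus T$ and a feasible-up-to-$(1+\epsilon')$ dual $y$ such that every tree edge is covered by $A$ and every $e\in A$ satisfies $w(e)\leq \sum_{t\in S_e}y(t)$. By the analysis already given, this phase runs in $O((D+\sqrt{n})\log^2 n/\epsilon)$ rounds.

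Third, I would run the improved reverse-delete phase of Section \ref{sec:improved}, consisting of the modified iterations that only add the higher petal of each anchor, followed by the cleaning step that removes the offending global-anchor petal at every triply-covered edge of $R_k$. By Lemma \ref{improved_correct_lemma} (applied at $k=1$) the resulting set $B\subseteq A$ covers all tree edges, and every $t$ with $y(t)>0$ (which by construction lies in some $R_k$) is covered by $B$ at most twice. The implementation analysis at the end of Section \ref{sec:improved} shows that each iteration costs $O(D+\sqrt{n})$ rounds and that there are $O(\log^2 n)$ iterations plus $O(\log n)$ cleaning steps, hence this phase takes $O((D+\sqrt{n})\log^2 n)$ rounds.

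Finally I would plug $c=2$ and $\epsilon'=\epsilon/4$ into Lemma \ref{approx_lemma}: since every $t$ with $y(t)>0$ is covered at most twice and every $e\in B$ satisfies $w(e)\leq \sum_{t\in S_e}y(t)$, one gets
\[
w(B)\;\leq\;2\sum_{t\in T}y(t)\;=\;2(1+\epsilon')\sum_{t\in T}\frac{y(t)}{1+\epsilon'}\;\leq\;2(1+\epsilon')\,\mathrm{OPT}(G')\;\leq\;(2+\epsilon/2)\,\mathrm{OPT}(G'),
\]
where the middle inequality is weak duality applied to the feasible dual $y/(1+\epsilon')$. Hence the augmentation $B$ is a $(2+\epsilon/2)$-approximation for weighted TAP on $G'$. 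Translating back to $G$ via Lemma \ref{virtual_lemma} multiplies the ratio by $2$, giving a $(4+\epsilon)$-approximation in $G$. Summing the round costs of the four stages gives $O((D+\sqrt{n})\log^2 n/\epsilon)$, which dominates all lower-order terms. The only part that required genuine work is the reverse-delete/cleaning analysis, and that has already been carried out in Section \ref{sec:improved}; the present theorem is just the clean statement that the combined algorithm delivers both the claimed ratio and the claimed running time deterministically.
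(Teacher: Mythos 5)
Your proposal is correct and follows essentially the same route as the paper: assemble the virtual-graph construction, layering, forward phase, and the improved reverse-delete phase with the cleaning step, then apply Lemma \ref{approx_lemma} with $c=2$ and Lemma \ref{virtual_lemma} to get $(2+O(\epsilon))$ on $G'$ and hence $(4+\epsilon)$ on $G$, with the same round-complexity accounting. Your explicit choice $\epsilon'=\epsilon/4$ is a slightly more careful tracking of constants than the paper's statement (which implicitly rescales $\epsilon$), but it is not a different argument.
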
  

\begin{proof}
The proof of Lemma \ref{approx_lemma} shows that if in the reverse-delete phase we cover all the edges with $y(t)>0$ at most 2 times, then we get a $(2+\epsilon)$-approximation for weighted TAP in the virtual graph $G'$. This gives a $(4+\epsilon)$-approximation for weighted TAP in the original graph by Lemma \ref{virtual_lemma}. The time complexity of the forward phase is $O((D+\sqrt{n})\frac{\log^2{n}}{\epsilon})$ rounds, and the reverse-delete phase has $O(\log^2{n})$ iterations that take $O(D+\sqrt{n})$ rounds. Other computations in the algorithm such as building a virtual graph, computing the layers and the decomposition also take $O((D+\sqrt{n})\log{n})$ rounds as discussed in the relevant sections. 
\end{proof}

As explained in Claim \ref{claim_TAP_2ECSS}, a $(4+\epsilon)$-approximation for weighted TAP gives a $(5+\epsilon)$-approximation for weighted 2-ECSS, by building an MST and then augmenting it to be 2-edge-connected, proving the following.

\twoECSS*

\remove{
\begin{theorem}
There is a deterministic $(5+\epsilon)$-approximation algorithm for weighted 2-ECSS in the CONGEST model that takes $O((D+\sqrt{n})\frac{\log^2{n}}{\epsilon})$ rounds.
\end{theorem} 
}

\newpage
\section{A Faster $O(\log n)$ Approximation via Low-Congestion Shortcuts}
\label{app:secondAlgo}
In this section, we explain how to obtain an $O(\log n)$-approximation algorithm for 2-ECSS that runs in $\mathsf{SC}(G)$ rounds in any network $G$. Here, $\mathsf{SC}(G)$
denotes the shortcut complexity of $G$~\cite{ghaffari2016shortcuts}, which is set equal to the minimum value $\alpha+\beta+\gamma$ such that for any partition of $G$ into vertex-disjoint connected parts, we have an algorithm that builds an $\alpha$-congestion $\beta$-dilation shortcut, in $\gamma$ rounds. We start with the high-level outline of the approach, presented in Section~\ref{subsec:log-outline}. Then, we fill out the details on how to perform each of the steps, by first providing some basic algorithmic tools in Section~\ref{subsec:log-tools}, and then we use these tools to build the desired subroutines for the outline, in Section~\ref{subsec:log-subroutines}.


\subsection{High-Level Outline}
\label{subsec:log-outline}

\paragraph{2-ECSS Approximation Boils Down to Tree Augmentation:} Let $T=(V, E_{T})$ be the minimum-weight spanning tree of $G=(V, E)$. As discussed in Claim \ref{claim_TAP_2ECSS}, to obtain an $O(\log n)$-approximate  minimum weight $2$-edge-connected spanning subgraph (2-ECSS), we would like to compute an $O(\log n)$-approximate minimum-weight set $S\subseteq E\setminus E_{T}$ that covers all edges of $E_T$. We refer to this set $S$ as the augmentation. As mentioned in Section \ref{sec:FirstAlgorithm-Overview}, this tree augmentation problem is a special case of the well-studied \emph{minimum-cost set cover} problem: in our case, we are choosing a minimum-weight collection of non-tree edges $S\subseteq E\setminus E_{T}$ which cover all the tree edges $E_T$.

Of course, in the distributed setting, this special case comes with its own difficulties, because the sets (in our case, non-tree edges) are not directly connected to their elements (in our case, tree edges). This means we cannot resort to standard distributed $O(\log n)$-approximations of the set cover problem, in a naive way. However, the high-level outline of our algorithm will still make use of a simple and well-known technique for parallelizing the greedy algorithm for minimum-cost set cover\cite{berger1989efficient}, similiar ideas were also used in \cite{DBLP:conf/podc/Dory18}. We explain an outline of the algorithm next. We refer the reader to \cite{berger1989efficient} for the details. Our focus will be on how to perform this outline for the tree augmentation problem, in $\mathsf{SC}(G)$ rounds.

\paragraph{Outline of Our Algorithm for Tree Augmentation:} We build the $O(\log n)$-approximate augmentation/cover set $S$ incrementally, in $O(\log^3 n)$ iterations. We next sketch how these iterations work. Our focus will be on how to perform each iteration in $\mathsf{SC}(G)$ rounds, using subroutines that we build on top of the known tools developed in the low-congestion shortcuts framework\cite{ghaffari2016shortcuts}.

The general idea is to mimic the standard sequential greedy approach, but in only $\poly(\log n)$ independent iterations. Consider an intermediate step of the algorithm. Suppose that some non-tree edges are already chosen to be in the output augmentation/cover. First, we mark each edge $e\in E_T$ if it is still not covered. We will explain how to perform this in $\mathsf{SC}(G)$ rounds, in Lemma~\ref{lem:covered-or-not}. For each non-tree edge $e'\in E\setminus E_{T}$, we define its \emph{cost-effectiveness} to be $\rho(e')=\frac{cover(e')}{weight(e')}$ where $cover(e')$ indicated the number of marked tree edges that $e'$ covers. We will explain how to compute this simultaneously for all non-tree edges, in $\mathsf{SC}(G)$ rounds, in Lemma~\ref{lem:how-many-covered}. Let $\Delta$ denote the maximum cost-effectiveness $\Delta:=\max_{e'\in E\setminus E_{T}}{\rho(e')}$. The sequential greedy algorithm for min-cost set cover would add one non-tree edge with cost-effectiveness $\Delta$, and repeat doing this, each time according to the new maximum cost-effectiveness. Of course, this sequential process is too slow for us.

To make the above more parallel, we would like to add many non-tree edges simultaneously, each with a cost-effectiveness at least $(1-\eps)\Delta$, for a small constant $\eps>0$. But there might be double counting in our cost-effectiveness calculations. That is, that many non-tree edges might be covering the same few tree-edges; in that case their union is not as cost-effective as their individual cost-effectiveness would suggest. To counter this effect, we use a the following definition. We call a subset $S' \subseteq (E\setminus E_{T})$ \emph{good} if the number of the unmarked edges that they cover is at least a $\Delta/100$ factor of their total cost $\sum_{e'\in S'} weight(e')$. Our algorithm will iteratively find good subsets of non-tree edges, with respect to the remaining uncovered tree edges, and add them to the cover $S$ that we are building. This will be continued until there is no tree edge that remains uncovered. By mimicking the analysis of the sequential greedy algorithm, as done in \cite{berger1989efficient}, it can be shown that any such algorithm which only adds good subset achieves an $O(\log n)$-approximation. Next, we discuss how we do this process in only $O(\log^3 n)$ iterations of finding good subsets.

\paragraph{Iterations of Tree Augmentation:} Our $O(\log^3 n)$ iterations are structured as $O(\log n)$ \emph{phases}, each made of $O(\log n)$ \emph{sub-phases}, where each sub-phase is simply $O(\log n)$ repetitions of the same procedure. Our $O(\log n)$ phases target values of cost-effectiveness that are decreasing powers $(1+\eps)^{U}$, $(1+\eps)^{U-1}$, \dots, $(1+\eps)^{L}$ of $1+\eps$. Here, $M=\log_{1+\eps} n$ and $L=\log_{1+\eps} 1/W_{max}$, and $W_{max}$ denotes the maximum weight of any edge. Since we assume $W_{max}\leq \poly(n)$, we have $M-L = O(\log n)$. 

Let us zoom into one phase. Let $\Delta=(1+\eps)^i$ be the current maximum cost-effectiveness, and let $A$ be the set of non-tree edges whose cost-effectiveness is at least a $\Delta(1-\eps)$. Notice that during this phase, as we add more and more non-tree edges to our cover, some edges of $A$ might have their cost-effectiveness drop below $\Delta(1-\eps)$, in which case they get removed from $A$. Now, we describe the sub-phases of this phase. These sub-phases are parametrized by the maximum number of non-tree edges in $A$ that cover a given tree edge $e$, and we will have $O(\log_{1+\eps} n)$ sub-phases, corresponding to decreasing powers $(1+\eps)^j$. 
Let us focus on one subphase. Let $d=(1+\eps)^j$ be the maximum number of non-tree edges in $A$ that cover a given uncovered tree edge $e$. Let $B$ be the set of all uncovered tree edges which are covered by at least $d(1-\eps)$ edges of $A$.  The sub-phase is made of $O(\log n)$ repetitions of the same random procedure, which attempts to find a good subset, to add to the tree augmenting set, as we describe next.

In each repetition, we sample each nontree edge of $A$ with probability $p=1/(2d)$. Then, we add the sampled set to our augmentation if it is a good set, i.e., it its cost-effectiveness is at least $\Delta/100$. Notice that we can easily check whether the set is good or not, in $O(D) + \mathsf{SC}(G)$ rounds, by gathering how many new tree edges are covered by the sampled non-tree edges, as well as the total weight of the sampled non-tree edges.
It can be seen that the probability that each remaining tree edge $e\in B$ is covered by our sampled set is at least $0.01$. Hence, after $O(\log n)$ repetitions, each tree edge in $B$ that has at least $d(1-\eps)$ neighbors in $A$ gets covered with high probability. Then, we proceed to the next subphase, where $d$ is lowered to $(1+\eps)^{j-1}$.

\paragraph{Remaining Components in the Above Outline} What remains from the above outline is the subroutines needed for performing each iteration. This is actually the only contribution of this paper, in this section. We will explain how we determine the covered tree edges and how we compute the number of marked tree edges that are covered by each non-tree edge (so that it can knows its cost-effectiveness). We can perform each of these two operations in $\widetilde{O}(\mathsf{SC}(G))$ rounds. Next, we first review some basic tools from the low-congestion shortcuts framework of \cite{ghaffari2016shortcuts} and then explain how to use these tools to perform our desired computations, in $\widetilde{O}(\mathsf{SC}(G))$ rounds.

\subsection{Basic Tools}
\label{subsec:log-tools}
In this section, we explain three basic tools that will be used in our algorithm. These tools allow us to compute descendants' sum, ancestors' sum, and heavy-light decompositions, for any tree $T$, in $\widetilde{O}(\mathsf{SC}(G))$ rounds. In terms of contributions, we borrow the Descendants' Sum tool --- stated in Theorem~\ref{thm:descendants}---from the work of Ghaffari and Haeupler~\cite{ghaffari2016shortcuts}, and we also provide new algorithms for the Ancestor's Sum problem (which needs to be solved somewhat differently), and also for heavy-light decompositions, respectively in Theorems~\ref{thm:ancestors} and \ref{thm:Heavy-Light}. Given the versatility of these basic primitives and their wide use when doing computations related to a tree structure, we are hopeful that they will find applications beyond our usage of them for the 2-ECSS problem.

\begin{definition}[\textbf{Descendants' Sum Problem}] 
We are given an arbitrary spanning tree $T=(V, E_T)$ and one input value $x(v)$ for each vertex $v\in V$. The objective is that each vertex $u\in V$ learns the summations $\sum_{v\in T_{u}} x(v)$ where $T_{u}$ denotes the set consisting of all the descendants of $u$, including $u$ itself. The summation operand can be replaced with any other aggregate function, e.g., maximum, minimum, etc.
\end{definition}
\begin{theorem}[Ghaffari and Haeupler \cite{ghaffari2016shortcuts}]
\label{thm:descendants} There is a randomized distributed algorithm that solves the Descendants' Sum Problem in $\mathsf{SC}(G)$ rounds.
\end{theorem}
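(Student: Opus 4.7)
The plan is to reduce the Descendants' Sum problem to $O(\log n)$ phases of shortcut-based aggregation via a standard rake-and-compress decomposition of the spanning tree $T$. I would maintain at each vertex $v$ an accumulator $\mathrm{acc}(v)$, initialized to $x(v)$, and execute $O(\log n)$ phases, each consisting of a \emph{rake} followed by a \emph{compress} on the currently active portion of $T$. In the rake step, every leaf $v$ of the current tree forwards $\mathrm{acc}(v)$ to its parent $p(v)$, which sets $\mathrm{acc}(p(v)) \gets \mathrm{acc}(p(v)) + \mathrm{acc}(v)$, and $v$ is then marked removed. In the compress step, every maximal chain of degree-$2$ active vertices $v_1, v_2, \ldots, v_k$ (ordered from deepest to shallowest in $T$) is processed: each $v_i$ must learn the prefix sum $\sum_{j \le i} \mathrm{acc}(v_j)$, after which $v_1, \ldots, v_{k-1}$ are removed and $v_k$ takes over as their representative with accumulator equal to the total sum along the chain.

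The key invariant driving correctness is that \emph{at the exact moment a vertex $u$ is removed, $\mathrm{acc}(u)$ equals $\sum_{v \in T_u} x(v)$}: by the ordering of rake-and-compress, every descendant of $u$ has already been removed and has therefore contributed its $x$-value to $u$'s accumulator. A classical Miller--Reif style analysis shows that $O(\log n)$ rake-and-compress phases suffice to exhaust $T$, so every vertex learns the required sum during its own removal step.

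The distributed implementation boils down to running rake and compress efficiently on the physical network $G$. The rake step is trivial: it requires a single round of communication over tree edges. The compress step is the interesting part. Observe that the collection of maximal degree-$2$ chains in any given phase is vertex-disjoint, and each chain induces a connected subgraph of $G$ (since each chain is a connected subtree of $T$); padding the remaining vertices of $G$ as singleton parts yields a legal partition of $V$ for the shortcut framework. I therefore invoke the shortcut primitive: in $\gamma$ rounds we build $\alpha$-congestion, $\beta$-dilation shortcuts over this partition, and then execute a standard $O(\log n)$-depth parallel prefix scan along each chain, where each level of the scan pipes messages through the shortcut in $\widetilde{O}(\alpha + \beta)$ rounds via random-delay scheduling. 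This realizes the prefix-sum computation along every chain in $\widetilde{O}(\alpha + \beta + \gamma) = \widetilde{O}(\mathsf{SC}(G))$ rounds per phase.

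The main obstacle is that the compress step is a prefix scan rather than a single associative aggregate, so the plain ``aggregate-over-a-part'' shortcut routine does not apply verbatim. I plan to handle this by recasting each prefix scan as $O(\log n)$ rounds of ordinary aggregations over progressively coarser sub-partitions of each chain (the classical upsweep/downsweep), each of which is just a shortcut-aggregation call and therefore preserves the $\widetilde{O}(\mathsf{SC}(G))$ per-phase cost. Multiplying by the $O(\log n)$ rake-and-compress phases gives the claimed $\widetilde{O}(\mathsf{SC}(G))$ total round complexity, and the output at each vertex $u$ is simply $\mathrm{acc}(u)$ recorded at the instant $u$ is removed, which by the invariant equals $\sum_{v \in T_u} x(v)$.
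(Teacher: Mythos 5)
This theorem is not proved in the paper at all: it is imported verbatim from Ghaffari and Haeupler \cite{ghaffari2016shortcuts} (the text explicitly says the Descendants' Sum tool is ``borrowed''), and the machinery actually used there, and echoed in this paper's proof of the Ancestors' Sum theorem, is the $O(\log n)$-level hierarchical fragment decomposition of the tree, not rake-and-compress. So you are attempting a genuinely different route --- which would be fine, except that as written it is not correct.

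The central problem is your invariant that ``at the exact moment a vertex $u$ is removed, $\mathrm{acc}(u)$ equals $\sum_{v\in T_u} x(v)$, because every descendant of $u$ has already been removed.'' This is exactly what the compress step violates: compress removes interior vertices of degree-$2$ chains while the subtree hanging \emph{below} the chain is still active, so those vertices are removed before their deeper descendants have contributed anything to them. Concretely, take a root path $r\!-\!a\!-\!b\!-\!c\!-\!d$ where $d$ has two leaf children $e,f$ and all $x$-values are $1$. In the first phase, $e,f$ are raked into $d$, and the chain $c,b,a$ is compressed; your prefix sums give $c$ the value $1$ and $b$ the value $2$ at removal, whereas the true subtree sums are $4$ and $5$ --- the mass sitting at the still-active vertex $d$ never reaches them. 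This is not a detail one can wave away: if removal order were truly bottom-up the contraction would need $\Theta(\mathrm{depth})$ phases on a path, so any $O(\log n)$-phase rake-and-compress scheme necessarily removes vertices whose subtree sums are incomplete. The standard repair in Miller--Reif tree contraction is to keep deferred (lazy) contributions along compressed chains and run a top-down \emph{expansion} pass after the contraction bottoms out, finalizing each vertex's value only when it is reinserted; your write-up has no such phase, so the outputs at compressed vertices are simply wrong. A secondary, fixable gap: from the second phase on, a degree-$2$ chain of the \emph{contracted} tree is not ``a connected subtree of $T$'' --- consecutive active chain vertices are joined only through already-removed vertices --- so to obtain a legal partition for the shortcut framework you must augment each part with those removed vertices and argue the parts stay vertex-disjoint and connected in $G$; the justification you give holds only in the first phase. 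If you want to salvage the approach, add the expansion phase (or switch to the Euler-tour / hierarchical-fragment formulation, where each vertex's descendant sum is assembled from whole-fragment aggregates, as in the paper's proof of the Ancestors' Sum theorem), and redo the partition-legality argument for contracted chains.
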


\begin{definition}[\textbf{Ancestors' Sum Problem}] 
We are given an arbitrary spanning tree $T=(V, E_T)$ and one input value $x(v)$ for each vertex $v\in V$. The objective is that each vertex $u\in V$ learns the summations $\sum_{v\in P_{u}} x(v)$ where $P_{u}$ denotes the set consisting of all ancestors of $u$, including $u$ itself. The summation operand can be replaced with any other aggregate function, e.g., maximum, minimum, etc.
\end{definition}
\begin{theorem}\label{thm:ancestors} There is a randomized distributed algorithm that solves the Ancestors' Sum Problem in $\mathsf{SC}(G)$ rounds.
\end{theorem}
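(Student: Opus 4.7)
My plan is to solve the Ancestors' Sum problem by a recursive halving scheme, where each level performs $O(1)$ invocations of the Descendants' Sum primitive (Theorem \ref{thm:descendants}) carried out in parallel across the current partition of $V$ into connected subtrees of $T$. First, I would root $T$ at an arbitrary vertex $r$ and invoke Descendants' Sum with input $x'(v)=1$ to compute the subtree size $n_v$ for every $v$; together with the total $n$ (which is easy to aggregate and broadcast), these sizes let the network identify a centroid edge $(u,p(u))$ with $|T_u| \in [n/3, 2n/3]$, and the identity of $u$ can be broadcast to all vertices in one further shortcut-assisted round.

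Next, I would compute the offset $\sigma := \sum_{w \text{ proper ancestor of } u} x(w)$ using two more Descendants' Sum calls. One call uses the indicator input $[v=u]$, so that each vertex $w$ learns whether $u \in T_w$, i.e., whether $w$ is an ancestor of $u$. A second call uses the input $x(w)\cdot [w \text{ is a proper ancestor of } u]$, whose aggregate reaches $r$ and equals $\sigma$; the value $\sigma$ is then broadcast to every vertex of $T_u$ by reversing the shortcut communication pattern.

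I would then recurse in parallel on the two parts $T_u$ and $V\setminus T_u$. Both parts induce connected subgraphs of $G$ because they are subtrees of the spanning tree $T$: $T_u$ is the subtree rooted at $u$, and $V\setminus T_u$ is obtained by removing that subtree, leaving another subtree of $T$. Within each part the problem is again Ancestors' Sum, now relative to that part's own root ($u$ for $T_u$, and $r$ for $V\setminus T_u$). Once the recursion returns a within-part answer $S'(v)$ for each vertex, each $v\in T_u$ sets $S(v)=S'(v)+\sigma$, while each $v\in V\setminus T_u$ keeps $S(v)=S'(v)$; correctness for $v=u$ follows from $S'(u)=x(u)$ and the definition of $\sigma$.

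Since each split shrinks the largest remaining part by a constant factor, the recursion has depth $O(\log n)$. The main technical obstacle will be verifying that each recursive level --- on a partition of $V$ into many connected subtrees simultaneously --- runs in $\mathsf{SC}(G)$ rounds. This should follow because $\mathsf{SC}(G)$ is defined as a worst-case over \emph{arbitrary} partitions of $V$ into vertex-disjoint connected parts, and the Ghaffari--Haeupler Descendants' Sum algorithm proceeds by aggregating within each part using the shortcut of that partition; the same reversed shortcut handles the broadcasts from part-roots. Summing over the $O(\log n)$ recursive levels yields the claimed $\mathsf{SC}(G)$-round bound (with the $\log n$ factor absorbed in the $\widetilde{O}$ convention used throughout the paper).
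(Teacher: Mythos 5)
Your overall scheme (parallel divide-and-conquer on connected subtrees, computing per-part split points and offsets with Descendants' Sum, and adding the offset $\sigma$ of the cut vertex's proper ancestors on the way back up) is sound in its accounting of ancestors' sums, and your concern about running the primitive simultaneously on all parts of a partition is indeed handled by the shortcut framework, exactly as the paper itself assumes for its per-fragment broadcasts. However, there is a genuine gap in the step that drives the $O(\log n)$ recursion depth: a \emph{centroid edge} $(u,p(u))$ with $|T_u|\in[n/3,2n/3]$ need not exist. In a star (one center with $n-1$ leaves), every edge has $|T_u|=1$ or $|T_u|=n-1$ regardless of where you root, so every single-edge split is maximally unbalanced and your recursion can have depth $\Theta(n)$ on such trees (and more generally on trees with high-degree vertices). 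The balance claim is the crux of your round bound, so as written the argument does not go through. The fix is standard: use a centroid \emph{vertex} $c$ (which always exists, and which your subtree-size computation already lets you find), and split off all child subtrees of $c$ simultaneously; every vertex in every child subtree has the same external offset $\sigma=\sum_{w \text{ ancestor of } c,\ \text{including } c} x(w)$, which you can compute and broadcast exactly as you propose for the edge case, and every resulting part has size at most $n/2$, restoring the $O(\log n)$ depth.

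With that repair, your argument becomes a correct alternative to the paper's proof, and the two routes differ in an instructive way. The paper does not build its own balanced decomposition: it reuses the $O(\log n)$-level hierarchical fragment decomposition already provided by the Ghaffari--Haeupler shortcut machinery, and its recursion $T(L)=T(L-1)+U(L-1)$ only needs, per level, a broadcast of a single value from each fragment root into its fragment. Your approach instead constructs a balanced (centroid-style) decomposition on the fly using $O(1)$ Descendants' Sum calls per level, which is more self-contained (it needs only Theorem~\ref{thm:descendants} plus part-wise broadcast, not the hierarchical-partitioning subroutine) at the cost of extra aggregate computations per level; both yield $\widetilde{O}(\mathsf{SC}(G))$ rounds.
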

\begin{proof} We first invoke an algorithm of Ghaffari and Haeupler \cite{ghaffari2016shortcuts} which achieves two properties, in $\widetilde{O}(\mathsf{SC}(G))$ rounds: (1) it orients the tree such that each vertex knows its parent. (2) it decomposes the tree into an $L=O(\log n)$-level \emph{hierarchical partitioning}, with the following structure: (A) for each $i$, fragments of level-$i$ form a partitioning of the vertices of the tree and each level-$i$ fragment is a connected subgraph of the tree, (B) each level-$i$ fragment is made of merging one level-$(i-1)$ fragment with its level-$(i-1)$ children fragments, (C) at the very top, the level-$L$ fragment is simply the whole tree, and at the very bottom, each level-$0$ fragment is simply one vertex.

Given the above structure, we solve the ancestor' sum problem in a recursive manner. Let us consider the top of the hierarchy. The level-$L$ fragment is obtained by merging a level $L-1$ fragment $F_0$ with its children level $(L-1)$-fragments $F_1$, \dots, $F_d$. The ancestors' sum problem can be solved in $F_0$ with no dependency on $F_1$ to $F_d$, as the latter are only descendants of vertices of $F_0$. However, for a vertex $v\in F_i$ for $i\in [1, d]$ to have its ancestor's sum, the solution depends on $F_0$. In particular, let $w$ be the lowest ancestor of $v$ in fragment $F_0$, and let $w'$ be the highest ancestor of $v$ in fragment $F_i$. Then, the ancestor's sum of $v$ in the whole tree can be obtained by recursively solving the ancestor's sum of $v$ in fragment $F_i$ --- as it we remove the edge $\{w, w'\}$ --- and then adding to it the value $z$ which is equal to the ancestor's sum of $w$ in $F_0$. This gives us the recursive structure. Let $T(i)$ be the time to solve the ancestor's sum problem in a level-$i$ fragment. Also, let $U(i)$ denote the time to add a given fixed value $z$ to all the ancestor's sums in a level-$i$ fragment. Then, we can write $T(L)=T(L-1)+U(L-1)$. This is because to solve ancestor's sum on a level-$L$ fragment, it suffices to solve the problem on each of its level-$(L-1)$ fragments independently --- which takes $T(L-1)$ time --- and then increase each of the values in each of the children level-$(L-1)$ fragments by the value of the ancestor's sum of their lowest descendant in $F_0$.

The only thing that remains to solve this recursion is what is the complexity $U(i)$. This is simply the problem of delivering one fixed value the root of each level-$(i)$ fragment to all the vertices of this fragment. This problem can be solved directly using low-congestion shortcuts in $\widetilde{O}(\mathsf{SC}(G))$ rounds\cite{ghaffari2016shortcuts}. Recall that low-congestion shortcuts are exactly designed so that one can deliver one message (or an aggregate function such as minimum of many messages) to all vertices of the part/fragment. Since  $U(i)=\widetilde{O}(\mathsf{SC}(G))$ and $T(L)=T(L-1)+U(L-1)$, we get that $T(L) = \widetilde{O}(\mathsf{SC}(G)) \cdot L = \widetilde{O}(\mathsf{SC}(G)).$
\end{proof}
\begin{definition}[\textbf{Heavy-Light Decomposition for Trees}]
Given a tree $T=(V, E_T)$ and a root vertex $r\in V$, a heavy-light decomposition of $T$ is defined as follows. Let $T_v$ denote the subtree rooted in vertex $v\in V$ and let $|T_v|$ be the number of vertices in $T_v$, including $v$ itself. For each edge $e=\{v, u\}$ which connects a vertex $u$ to its parent $v$, we call $e$ a heavy edge if $|T_u|> T_{v}/2$ and light otherwise. Notice that any leaf-to-root path has at most $\log n$ light edges. Moreover, each vertex has at most one heavy edge to its children. The subgraph formed by heavy edges is a collection of disjoint paths, called heavy paths, each being from one vertex to one of its descendants. 

On the distributed side, we would like that each vertex $v$ learns the following information: (A) the number of vertices on the path $|P_v|$ connecting $v$ to $r$, (B) the identifier of each of the light edges $L_v$ in its path to the root $r$, where the identifier of a light edge $\{v, u\}$ is the identifiers of $v$ and $u$ as well as $|P_v|$ and $|P_{u}|$. 
\end{definition}

\begin{theorem}\label{thm:Heavy-Light}There is a randomized distributed algorithm that computes a heavy-light decomposition in $\widetilde{O}(\mathsf{SC}(G))$ rounds. Moreover, each two vertices $v, u$  which are adjacent in graph $G$ can know their Lowest Common Ancestor (LCA). 
\end{theorem}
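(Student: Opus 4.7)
The plan is to build everything on top of the two ancestor/descendant primitives (Theorems~\ref{thm:descendants} and~\ref{thm:ancestors}) plus one extra recursive computation that mirrors the hierarchical-decomposition argument already used in the proof of Theorem~\ref{thm:ancestors}. First, I would invoke the orientation and $L=O(\log n)$-level hierarchical-partitioning routine of Ghaffari--Haeupler, in $\widetilde{O}(\mathsf{SC}(G))$ rounds, so that each vertex knows its parent in $T$ and its level-$i$ fragment for every $i$. One call to Theorem~\ref{thm:descendants} with $x(v)=1$ gives every $v$ the value $|T_v|$; a single edge-round then lets each parent learn $|T_u|$ for every child $u$, so that both endpoints of every tree edge immediately know whether it is heavy or light. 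A call to Theorem~\ref{thm:ancestors} with $x(v)=1$ gives $|P_v|$ to every $v$, which will furnish the last field of every light-edge identifier. Everything so far costs $\widetilde{O}(\mathsf{SC}(G))$ rounds.

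The core step is to deliver to each vertex $v$ the ordered list $L_v$ of light-edge identifiers on its root-path. Since any root-path contains at most $\log n$ light edges and each identifier fits in $O(\log n)$ bits, $L_v$ has only $O(\log^2 n)$ bits. I would compute the $L_v$'s recursively along the hierarchical partition, in direct analogy to the proof of Theorem~\ref{thm:ancestors}. At level $i$, a fragment is obtained by merging a level-$(i-1)$ fragment $F_0$ with its level-$(i-1)$ children $F_1,\ldots,F_d$; recursively, every vertex of every level-$(i-1)$ fragment already knows the list of light edges on its path to the root of its own level-$(i-1)$ fragment. To lift this to level $i$, I first handle $F_0$ (whose root equals the level-$i$ root, so its recursive lists are already correct at level $i$), and then for each child fragment $F_j$ with root $r_j$ whose parent is $w\in F_0$ I broadcast the pair consisting of $L_w$ (now correct at level $i$) and the classification of the edge $\{w,r_j\}$ down to all of $F_j$; every $v\in F_j$ prepends this broadcast data to its already-computed in-$F_j$ list to obtain its level-$i$ $L_v$. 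Each broadcast is $O(\log^2 n)$ bits per level-$(i-1)$ fragment, implemented by pipelining $O(\log n)$ shortcut calls for the level-$(i-1)$ partition in $\widetilde{O}(\mathsf{SC}(G))$ rounds per level. Summing the recursion $T(i)=T(i-1)+\widetilde{O}(\mathsf{SC}(G))$ over $L=O(\log n)$ levels still yields $\widetilde{O}(\mathsf{SC}(G))$ total.

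For the LCA claim, after the above phase every vertex $u$ sends its pair $(L_u,|P_u|)$ across each of its $G$-incident edges; since this is $O(\log^2 n)$ bits over each edge, direct communication finishes in $O(\log n)$ rounds. Given $(L_u,|P_u|)$ and $(L_v,|P_v|)$, each endpoint computes $\mathrm{LCA}(u,v)$ locally: find the longest common prefix of $L_u$ and $L_v$; the first position where they diverge (or the end of the shorter list) isolates the unique heavy path on which the LCA lies, and the $|P|$ fields stored in the two neighboring light-edge identifiers, together with $|P_u|$ and $|P_v|$, pin down its exact position along that heavy path.

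The main obstacle is the recursive step that produces the lists $L_v$. A single ancestor-aggregate call does not apply, since the desired output per vertex is a list rather than a scalar; the saving grace is that the list has only $O(\log^2 n)$ bits, so a $\log n$-fold pipelining of shortcut broadcasts per level, combined with the $O(\log n)$-depth hierarchical partition, is precisely what keeps the total cost at $\widetilde{O}(\mathsf{SC}(G))$.
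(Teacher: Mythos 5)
Your proposal is correct and follows essentially the same route as the paper: descendants' sum for subtree sizes and the heavy/light classification, ancestors' sum for the depths $|P_v|$, accumulating the at most $\log n$ light-edge identifiers along each root path, and then determining the LCA of adjacent vertices from the lowest common light edge together with depth comparisons. The only cosmetic difference is that the paper obtains the lists $L_v$ by invoking Theorem~\ref{thm:ancestors} directly with set-union as the aggregation operand (observing that the union never exceeds $\log n$ tuples), whereas you re-run the hierarchical-partition recursion explicitly to handle the $O(\log^2 n)$-bit values --- a slightly more careful but equivalent implementation of the same underlying mechanism.
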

\begin{proof} We first perform the descendants' sum operation starting with $x(v)=1$ for all vertices $v\in V$, so that each vertex $u$ knows $|T_u|$. This can be done in  $\widetilde{O}(\alpha+\beta+\gamma)$ rounds by Theorem~\ref{thm:descendants}. Then, heavy and light edges can be determined easily in one round, by each vertex $u$ sending its $T_u$ value to its parent $v$ and then the parent determining whether  $|T_u|> T_{v}/2$ or not.

Then, we can make each vertex learn the identifier of each of the light edges in its path to the root. For that, first, we perform an ancestors’ sum operation starting with initial values $x(v)=1 \forall v\in V$ so that each vertex $u$ knows $|P_u|$. Then, we using another ancestor's sum operation as follows: Define initial values $x(v)=\{(v, u, |P_{v}|, |P_{u}|)\}$ if the edge connecting $v$ to its parent $u$ is a light edge and $x(v)=\{\}$ otherwise. Define the summation operand on $x(v)$ to be the union operation. Notice that since each vertex-to-root path has at most $\log n$ light edges, the size of this union will never exceed $\log n$ tuples, when we perform union operations as we go down the tree. Then, by applying Theorem~\ref{thm:ancestors}, we can make each vertex $w$ know all of the light edges $L_w$ in its path to the root. 

Now, consider two vertices $v, u$ who are adjacent in $G$ and suppose that they exchange sets $L_v$ and $L_u$. Let $w$ be the lowest common ancestor of $v$ and $u$. Notice that $L_v\cap L_u = L_w$. Let $e=(y, y', |P_y|, |P_y'|)$ be the lowest light edge in $L_v \cap L_u$, i.e., the edge for which maximized $|P_{y}|$. Let $e_v=(y_v, y'_v)$ and $e_u=(z_v, z'_v))$ be the topmost light edges in $L_v$ and $L_u$ that are below $e$. Then, LCA $w$ is the vertex in $x\in \{y'_v, z'_v\}$ who has a smaller path length $|P_x|$ to the root $r$. 
\end{proof}
\subsection{Key Subroutines}
\label{subsec:log-subroutines}
Having discussed the above tools, we are now ready to go back to the outline of the tree augmentation, and discuss the subroutines that we would like to build.
As mentioned before, these two subroutines allow us to determine the covered tree edges and to compute the number of marked tree edges that are covered by each non-tree edge. We build these subroutines via applications of the tools discussed in the previous subsection. We provide these subroutines, in the following two lemmas.

\begin{lemma}\label{lem:covered-or-not} Suppose we are given a spanning tree $T=(V, E_T)$ of our graph $G=(V, E)$ and a set $S\subseteq E\setminus E_{T}$. There is a randomized distributed algorithm that for each edge $e\in E_T$, determine whether $e$ is covered by $S$ or not, simultaneously for all edges of $E_T$, in $\widetilde{O}(\mathsf{SC}(G))$ rounds.
\end{lemma}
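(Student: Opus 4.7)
The plan is to reduce the covering test to a single XOR-aggregation on $T$, so that Theorem~\ref{thm:descendants} returns the answer in $\widetilde{O}(\mathsf{SC}(G))$ rounds. The underlying observation is standard: a non-tree edge $e=\{u,v\}\in S$ covers a tree edge $t=\{x,p(x)\}$ if and only if exactly one of $u,v$ lies in the subtree $T_x$ rooted at $x$. Rather than implement this parity check via the integer $+1,+1,-2$ LCA trick (which would demand delivering a $-2$ charge to the LCA of every non-tree edge, an arbitrary point-to-point routing that does not obviously fit into the shortcut budget), I will attach a random label to each non-tree edge and XOR it at both endpoints, so that edges with both endpoints in $T_x$ cancel automatically.

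Concretely, the algorithm has four steps. (1) Root $T$ and orient it so every vertex knows its parent; this preprocessing costs $\widetilde{O}(\mathsf{SC}(G))$ rounds by the tools cited at the start of the proof of Theorem~\ref{thm:ancestors}. (2) For each non-tree edge $e=\{u,v\}\in S$, the endpoint of smaller identifier draws a uniformly random label $\ell(e)\in\{0,1\}^L$ with $L=c\log n$ for a sufficiently large constant $c$, and transmits $\ell(e)$ across $e$, so both endpoints store $\ell(e)$; this is one round over $G$. (3) Each vertex $u$ locally sets $y(u)=\bigoplus_{e\in S,\; u\in e}\ell(e)$. (4) Invoke Theorem~\ref{thm:descendants} with $\oplus$ as the aggregate operator, so that every vertex $x$ learns $b(x)=\bigoplus_{u\in T_x}y(u)$; this costs $\widetilde{O}(\mathsf{SC}(G))$ rounds. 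Vertex $x$, which hosts the tree edge $t_x=\{x,p(x)\}$, declares $t_x$ covered by $S$ iff $b(x)\ne 0$ (and can then notify $p(x)$ in one additional round if both endpoints are required to know).

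For correctness, note that each non-tree edge $e$ contributes $\ell(e)$ to $b(x)$ once per endpoint in $T_x$, so XOR-cancellation gives
$$b(x)=\bigoplus_{e\in S\,:\,e\text{ covers }t_x}\ell(e).$$
If no $e\in S$ covers $t_x$, then $b(x)=0$ deterministically. If at least one $e$ covers $t_x$, then $b(x)$ is the XOR of at least one independent uniform $L$-bit string and hence itself uniform in $\{0,1\}^L$, so $\Pr[b(x)=0]=2^{-L}=n^{-\Omega(1)}$. A union bound over the $n-1$ tree edges yields the desired answer on every tree edge with high probability, for a suitable choice of the constant $c$.

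The step I had expected to be the main obstacle is the LCA-delivery issue: the naive $+1,+1,-2$ accounting requires shipping a cancelling $-2$ from every non-tree edge to its tree-LCA, and I do not see how to implement this many-to-many routing within a shortcut budget. The random-label/XOR reformulation sidesteps this entirely, trading the need for an LCA delivery for the $1/\poly(n)$ failure probability the randomized statement of the lemma already allows. Every remaining step is either local across a non-tree edge, purely local at a vertex, or a single descendants' aggregation, so the overall complexity is $\widetilde{O}(\mathsf{SC}(G))$ rounds, as required.
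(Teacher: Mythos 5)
Your proposal is correct and matches the paper's own proof essentially verbatim: the paper also assigns each edge of $S$ a random $O(\log n)$-bit identifier, XORs these at the endpoints, runs a descendants' aggregation with bit-wise XOR via Theorem~\ref{thm:descendants}, and declares a tree edge covered iff the resulting XOR below it is nonzero, with the same zero-iff-uncovered and high-probability nonzero-iff-covered analysis. No differences worth noting.
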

\begin{proof}
For each edge $e\in E$, define its random identifier $RID(e)$ to be a random $10\log n$-bit string. This can be chosen by the higher identifier vertex of the edge and communicated to the lower identifier endpoint, in a single round. Notice that, with high probability, any two different edges have different random identifiers. 

For each vertex $v$, define $x(v)$ to be the XOR of the random identifiers of all of the edges in $S$ which are incident on $v$. If there is no such edge incident on $S$, define $x(v)$ to be a binary vector of length $10\log n$, where all the bits are zero. Then, perform a descendant's sum using Theorem \ref{thm:ancestors} so that each vertex $u\in V$ knows the summation $\oplus_{v\in T_u} x(v)$, where $\oplus$ means the bit-wise XOR. Notice that edges of $S$ whose both endpoint is in $T_u$ get canceled out in this XOR, because they are added to the XOR sum $\oplus_{v\in T_u} x(v)$ twice, once from each endpoint. Now, for each edge $e\in E_T$ which connects a vertex $u$ to its parent $w$, it suffices to check $\oplus_{v\in T_u} x(v)$. We say $e$ is covered iff this XOR is not a vector of all zeros. Notice that if $e$ is not covered, then deterministically the XOR is a vector of zeros. However, if $e$ is covered by some set of edges, then the XOR of those edges is with high probability not a string of all zeros, simply because each bit of that XOR is nonzero with probability $1/2$ and different bits are independent.
\end{proof}

\begin{lemma}\label{lem:how-many-covered} Suppose we are given a spanning tree $T=(V, E_T)$ of our graph $G=(V, E)$ and some of edges in $E_T$ are marked. There is a randomized distributed algorithm that for edge $e'\in E\setminus E_{T}$ determines how many unmarked edges $e\in E_T$ it covers, in $\widetilde{O}(\mathsf{SC}(G))$ rounds.
\end{lemma}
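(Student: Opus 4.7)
The plan is to reduce the lemma to computing, for each non-tree edge $e'=\{u,v\}$, the quantity $f(u)+f(v)-2f(w)$, where $w=\mathrm{LCA}(u,v)$ and $f(x)$ counts the number of unmarked tree edges on the root-to-$x$ path. Since $e'$ covers exactly the tree edges on $P_{u,v}$, and that path decomposes into $P_{u,w}\cup P_{w,v}$, the unmarked part has size $(f(u)-f(w))+(f(v)-f(w))$, which is what we want. So it suffices to (a) compute $f(v)$ at every vertex and (b) let each non-tree edge recover $f(\mathrm{LCA}(u,v))$ locally.

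For (a), I would assign $x(a)=1$ if the tree edge from $a$ to its parent is unmarked and $x(a)=0$ otherwise (and $0$ at the root) and then invoke Theorem~\ref{thm:ancestors} to make every vertex learn $f(v)=\sum_{u\in P_v}x(u)$ in $\widetilde{O}(\mathsf{SC}(G))$ rounds. Both endpoints of every non-tree edge then already know $f(u)$ and $f(v)$. The hard part will be (b): recovering $f(w)$ without arbitrary routing, since $w$ can lie arbitrarily deep in the tree and is not an endpoint of $e'$.

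For that I would augment the heavy-light decomposition of Theorem~\ref{thm:Heavy-Light}: rerun the same union-based ancestors' sum from that proof, but enrich each light-edge tuple with the $f$-values of its two endpoints, i.e.\ use initial value $x(v)=\{(v,p(v),|P_v|,|P_{p(v)}|,f(v),f(p(v)))\}$ when $\{v,p(v)\}$ is light and $x(v)=\emptyset$ otherwise. Since any root-to-leaf path contains only $O(\log n)$ light edges, each aggregated set stays of polylogarithmic size, so pipelining it through the Ancestors' Sum algorithm of Theorem~\ref{thm:ancestors} still costs only $\widetilde{O}(\mathsf{SC}(G))$ rounds. At the end, every vertex $v$ knows its list $L_v$ of light ancestors' edges, each annotated with the $f$-value of its top endpoint.

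Finally, for each non-tree edge $e'=\{u,v\}$ I let the endpoints exchange $L_u,L_v,f(u),f(v),|P_u|,|P_v|$ over $e'$ in a single macro-round of $\widetilde{O}(1)$ $\mathsf{CONGEST}$-sized messages, and identify $w$ via the procedure of Theorem~\ref{thm:Heavy-Light}. The key observation that makes $f(w)$ retrievable is this: either $w\in\{u,v\}$ (detected from the depths and the $L_\cdot$'s), in which case $f(w)\in\{f(u),f(v)\}$ is already known; or $u,v$ are strict descendants of $w$ coming from two distinct children of $w$, and since $w$ has at most one heavy child, the edge from $w$ down toward at least one of those two children must be light, so it appears in $L_u$ or $L_v$ with $w$ as its upper endpoint and thus carries $f(w)$ in its stored tuple. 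Each non-tree edge then outputs $f(u)+f(v)-2f(w)$. The total cost is the $O(1)$ applications of Ancestors' Sum plus one exchange per non-tree edge, yielding $\widetilde{O}(\mathsf{SC}(G))$ rounds as claimed.
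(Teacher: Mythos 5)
Your proposal is correct and follows essentially the same route as the paper: an ancestors'-sum to give every vertex its root-path count, light-edge identifiers of the heavy-light decomposition enriched with these counts so that the LCA's value can be read off locally, and the final answer $f(u)+f(v)-2f(w)$ computed by the two endpoints of each non-tree edge. The only cosmetic difference is that you count unmarked edges directly while the paper's proof tracks marked edges $M_v$ and uses $M_v+M_u-2M_w$; the argument is identical.
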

\begin{proof}
First, we make each vertex $v$ know the number $M_v$ of marked edges on its path $P_v$ to the root $v$.  This can be done using Theorem \ref{thm:ancestors} in $\widetilde{O}(\mathsf{SC }(G))$ time. Then, we make each vertex $v$ learn the \emph{modified identifier} of each of the light edges $L_v$ in its path to the root $r$, where we define the modified identifier of a light edge $\{v, u\}$ to be the following six entries: the identifiers of $v$ and $u$, numbers $|P_v|$ and $|P_{u}|$, and numbers $M_v$ and $M_u$. Notice that we can each each vertex learn the modified identifiers of all the light edges on its path to the root, using Theorem~\ref{thm:descendants} $\widetilde{O}(\mathsf{SC }(G))$, similar to what we did in Theorem~\ref{thm:Heavy-Light} for learning light edge identifiers on this path.

Then, for each non-tree edge $e'=(v, u)\in E\setminus E_{T}$, we first identify the LCA $w$ of $v$ and $u$, as explained in Theorem~\ref{thm:Heavy-Light}. Then, the number of marked edges covered by $e'$ is simply $M_v+M_u-2M_w$. Notice that vertices $v$ and $u$ know $w$ and $M_w$, as $w$ is one of the endpoints of one of the light edges in $L_v\cup L_u$, which means the information $M_w$ is tagged on the modifier identifier of that edge and is thus available. Once we have identified the LCA vertex $w$, as explained in Theorem~\ref{thm:Heavy-Light}, we can also determine $M_w$ by checking the information on one of the light edges connecting $w$ to one of its children.
\end{proof}

\section{Discussion}

In this paper, we show an algorithm for weighted 2-ECSS which is near-optimal both in terms of time complexity and in the approximation ratio obtained. The techniques we use here do not seem to allow improving the approximation below $(5+\epsilon)$. However, it would be interesting to study whether different approaches can get an approximation closer to $2$ in near-optimal time.

A key ingredient that allowed us obtaining a constant approximation in near-optimal time is a parallel algorithm for set cover problems with small neighborhood covers \cite{agarwal2018set}. In \cite{agarwal2018set}, the authors show many examples for additional problems having this property such as interval cover and bag cover. Therefore, we hope that the ideas we use here may be useful also for obtaining additional efficient distributed algorithms that give constant approximations for covering problems that have this structure.

This paper focuses on the weighted 2-ECSS problem. For $k>2$ the weighted $k$-ECSS problem seems to be much more involved, and currently the best complexity for approximating it is at least linear \cite{DBLP:conf/podc/Dory18}.
An interesting question is whether \emph{sublinear} algorithms exist for weighted $k$-ECSS also for larger values of $k$.\\

\textbf{Acknowledgements:} We thank Merav Parter for bringing \cite{kECSS} to our attention. This project has received funding from the European Union's Horizon 2020 Research And Innovation Program under grant agreement no.\ 755839. Supported in part by the Israel Science Foundation (grant 1696/14) and by the Swiss National Foundation (SNF) under project number $200021\_184735$. 

\bibliographystyle{plain}
\bibliography{2-ECSS}
 

\end{document}